\documentclass[11pt,a4paper,reqno]{amsart}
\pdfoutput=1

\usepackage[T1]{fontenc}
\usepackage[utf8]{inputenc}
\usepackage[a4paper,margin=1in]{geometry}
\usepackage{lmodern}
\usepackage{amssymb}
\usepackage{microtype}
\usepackage{tikz}
\usetikzlibrary{decorations.pathmorphing}
\usepackage[hidelinks]{hyperref}

\usepackage{orcidlink}
\usepackage[nameinlink,noabbrev]{cleveref}
\usepackage{aliascnt}

\newcommand{\R}{\mathbb R}
\newcommand{\Tr}{\operatorname{Tr}}
\newcommand{\supp}{\operatorname{supp}}
\newcommand{\rank}{\operatorname{rank}}
\newcommand{\id}{\operatorname{id}}

\newcommand{\HS}{\mathsf{HS}}
\newcommand{\diam}{\diamond}
\newcommand{\ketbra}[2]{\lvert #1\rangle\langle #2\rvert}
\newcommand{\abs}[1]{\lvert #1\rvert}
\newcommand{\norm}[1]{\lVert #1\rVert}
\newcommand{\inner}[2]{\langle #1,#2\rangle}

\theoremstyle{plain}
\newtheorem{theorem}{Theorem}
\newaliascnt{proposition}{theorem}
\newtheorem{proposition}[proposition]{Proposition}
\aliascntresetthe{proposition}
\crefname{proposition}{proposition}{propositions}
\newaliascnt{lemma}{theorem}
\newtheorem{lemma}[lemma]{Lemma}
\aliascntresetthe{lemma}
\crefname{lemma}{lemma}{lemmas}
\newaliascnt{corollary}{theorem}
\newtheorem{corollary}[corollary]{Corollary}
\aliascntresetthe{corollary}
\crefname{corollary}{corollary}{corollaries}

\theoremstyle{definition}
\newtheorem{definition}{Definition}

\title{Robust logarithmic entanglement lower bound for $f$-routing}
\author[Kevin Bogner]{Kevin Bogner\,\orcidlink{0009-0003-5167-7177}\\
 \NoCaseChange{{\normalfont\small COSIC, KU Leuven, Belgium}}\\
 \NoCaseChange{{\normalfont\small\texttt{kevin.bogner@kuleuven.be}}}}

\begin{document}

\begin{abstract}
In one-round $f$-routing, Alice receives an $n$-bit string $x$ and an unknown qubit, and Bob receives an $n$-bit string $y$. They exchange one simultaneous message each and cannot communicate afterwards; the party selected by a Boolean function $f(x,y)$ must then recover the qubit. The parties may share unlimited entanglement in advance. When $f$ is the inner product modulo $2$, we prove that every protocol with worst-case error at most $0.09$ must use a shared state whose entanglement of formation grows at least logarithmically in $n$.
The lower bound is robust: it tolerates constant error on both routing cases and covers arbitrary mixed states shared between Alice and Bob. Earlier growing lower bounds in this model, in contrast, require perfect recovery on at least one routing case. The bound is only logarithmic: a polynomial lower bound remains open.
\end{abstract}

\maketitle

\section{Introduction}

One-round $f$-routing is a non-local quantum computation (NLQC) task in which distributed classical inputs determine where an unknown qubit must end up. It isolates a basic resource question: how much shared entanglement is needed when the inputs determine the recipient of the unknown qubit? The same task arises in quantum position verification (QPV)~\cite{relating}. A famous no-go theorem shows that attackers who share enough entanglement can break every QPV scheme~\cite{nogo}, so security is only possible against attackers with limited entanglement. The security question is then how much entanglement the attackers must share to break a given scheme. For one of the simplest families of QPV protocols, an attack is precisely a successful execution of $f$-routing with a pre-shared resource. A lower bound for $f$-routing is therefore a lower bound on the entanglement needed to break this family of QPV protocols~\cite{complexity}.

In one-round $f$-routing, Alice receives an $n$-bit string $x$ and the qubit $Q$, while Bob receives an $n$-bit string $y$. They may use a state shared before the inputs arrive and exchange one simultaneous message each, but they cannot communicate afterwards. Alice must recover $Q$ when $f(x,y)=0$, and Bob must recover it when $f(x,y)=1$. Recovery is required on every input pair, and the recovered qubit must keep its entanglement with any system outside the protocol. In particular, strategies that measure the qubit and reconstruct it from the classical outcome fail. This is the standard model~\cite{maynotes,complexity}.

An open problem asks for an entanglement lower bound for $f$-routing that grows polynomially in $n$ and is robust, meaning that errors are allowed on both routing cases instead of only one~\cite{maynotes}. This paper proves a robust lower bound with logarithmic growth: the entanglement of formation of the shared state must grow at least logarithmically in $n$. A companion lower bound, robust in the same sense, shows that the support dimension of each party's share must grow nearly linearly in $n$.

Let $\rho_{LR}$ denote the shared state, and let $\rho_L=\Tr_R\rho_{LR}$ and $\rho_R=\Tr_L\rho_{LR}$ be Alice's and Bob's local states. The shared state $\rho_{LR}$ is in general mixed: it can be produced by preparing one of several pure states at random, and each way of doing so uses some average amount of entanglement. The entanglement of formation $E_F(\rho_{LR})$~\cite{bdsw} is the smallest average entanglement over all ways of producing $\rho_{LR}$; correlations that shared randomness alone can produce are therefore not counted as entanglement. Formally, with base-two entropy $S(\sigma)=-\Tr(\sigma\log_2\sigma)$,
\[
 E_F(\rho_{LR})
 =\inf_{\rho_{LR}=\sum_i p_i\ketbra{\psi_i}{\psi_i}}
 \sum_i p_i S((\psi_i)_L),
\]
where the infimum runs over all finite mixtures of pure states that produce $\rho_{LR}$. For a pure state, the only mixture producing it is the state itself, so $E_F$ is the entropy of its local state. In finite dimensions, $E_F$ is zero exactly for the states that Alice and Bob can prepare with local operations and shared randomness alone. In the literature, there is no canonical way of defining the resource cost. The main text measures the cost of a protocol by the entanglement of formation of its shared state; Appendix~\ref{app:support-rank} uses a dimension-based cost instead.

To our knowledge, all earlier growing lower bounds on the shared resource of this standard model for explicit routing functions assume perfect recovery on at least one routing case~\cite{ranklower}. Growing bounds that tolerate errors are known when a different quantity is charged: the number of local gates and measurements~\cite{gatebounds}, and the number of qubits available to the attackers in single-qubit position verification~\cite{bcs,parallelrep}. A growing lower bound on the shared resource that tolerates errors on both routing cases was an open problem, and the present bound achieves this robustness.

Earlier entanglement-of-formation bounds in non-local computation apply to fixed unitaries and are constant~\cite{controllable}. \Cref{thm:formation} gives the first such bound for $f$-routing, and the first that grows with the classical input length.

\begin{theorem}[Robust entanglement lower bound]\label{thm:formation}
There is an absolute constant $C<\infty$ with the following property. Let $f_n(x,y)=\bigoplus_{i=1}^n x_iy_i$ be the inner product modulo $2$ of $x,y\in\{0,1\}^n$. For all sufficiently large $n$ and every protocol for $f_n$ in the standard $f$-routing model of \Cref{sec:model} with worst-case error at most $0.09$, the original mixed resource state satisfies
\[
 E_F(\rho_{LR})
 \ge
 \frac1{13750}
 \left(\log_2n-\log_2\log_2n-C\right).
\]
In particular, $E_F(\rho_{LR})=\Omega(\log n)$.
\end{theorem}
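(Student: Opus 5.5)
The plan is to reduce the mixed-state bound to a bound for low-entanglement pure states plus a dimension/rank argument, and to pay the price of the reduction in the constant $1/13750$. Fix an optimal or near-optimal decomposition $\rho_{LR}=\sum_i p_i\ketbra{\psi_i}{\psi_i}$ with $\sum_i p_i S((\psi_i)_L)\le E_F(\rho_{LR})+\delta$. Since the protocol's worst-case error is a linear functional of the shared state for each input pair, running it on each $\psi_i$ gives a family of protocols whose errors $\epsilon_i(x,y)$ average, with weights $p_i$, to at most $0.09$ for every $(x,y)$. By Markov's inequality applied to the average error $\bar\epsilon_i=2^{-2n}\sum_{x,y}\epsilon_i(x,y)$, a set of components of weight at least a constant has average error at most, say, $0.1$. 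By Markov's inequality again, the subset where in addition $S((\psi_i)_L)\le c\,E_F$ has weight bounded away from zero. So there is a single pure state $\psi$ with entanglement entropy $O(E_F)$ on which the protocol has \emph{average} error at most a constant $\epsilon_0$ slightly above $0.09$.

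Next I will convert small entropy into small effective dimension. A pure state whose Schmidt coefficients have entropy $S$ is within trace distance $\eta$ of a state of Schmidt rank $2^{O(S/\eta)}$, obtained by truncating to the largest Schmidt coefficients and using Markov's inequality on $-\log\lambda$. Replacing $\psi$ by the truncated state changes all errors by at most $\eta$. This yields a protocol with Schmidt rank $d=2^{O(E_F)}$ and average error at most $\epsilon_0+\eta<1/2$ minus a margin. The factor $13750$ arises from the product of these Markov and truncation constants.

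The core step, which I will take from the companion support-rank bound in Appendix~\ref{app:support-rank}, is that any inner-product routing protocol with local dimension $d$ and constant average error (bounded below the trivial threshold) requires $d\ge n^{\Omega(1)}/\mathrm{polylog}$, or more precisely $\log d\ge \log n-\log\log n-O(1)$. The intended route is the standard one. Alice's operation, conditioned on $x$, produces a message and a residual system. The no-cloning and monogamy constraints force Alice's residual system to hold $Q$ when $f=0$ and Bob's when $f=1$. Using a $\gamma$-net over the finitely many Alice channels of dimension $d$, the protocol induces a classical one-way description of which party recovers: a partition of $\{0,1\}^n$ into roughly $2^{O(d^{c}\log(1/\gamma))}$ classes, each computing $\langle x,y\rangle$ with advantage over random guessing for most $y$. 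Inner product has discrepancy $2^{-n/2}$, so an average-error bound against an $m$-bit message forces $m=\Omega(n)$, giving $d^{O(1)}\log d\gtrsim n$. Taking logarithms yields $\log d\ge c'(\log n-\log\log n-C)$. Combined with $\log d=O(E_F)$, this gives the theorem.

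The main obstacle is the robustness of that core step. One must show that error on both routing cases still lets us extract a classical predictor of $f$ from a bounded-dimension description of Alice's strategy. I would first try the net argument: round Alice's isometry to a net element, and let Bob's side, given the discretized index, simulate both recovery attempts. By the no-cloning bound, the fidelities of the two recoveries cannot both exceed about $0.85$, which is why an error of $0.09$ is small enough. The side whose fidelity is higher then reveals $f(x,y)$ with constant advantage.
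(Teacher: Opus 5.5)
Your outer reduction matches the paper's. Picking a component with small average error and $S((\psi_i)_L)\le c\,E_F$ by two Markov steps is equivalent to weighting \Cref{lem:average-entropy} by the good weight $2/11$ of \Cref{lem:good-components}; letting $c\downarrow 11/2$ recovers the same constant. Your truncation is the paper's Markov bound on $-\log_2\lambda_r$. The first step, however, needs repair. The worst-case diamond error is only convex in the resource, not linear, so the component errors need not average to $0.09$. For example, an equal mixture of the unitary channels of $e^{\pm i\theta\sigma_z}$ is a dephasing channel at diamond distance $2\sin^2\theta$ from $\id$, while each component is at distance $2\abs{\sin\theta}$. You need a figure of merit that is linear in the recovered channel, such as the Bell infidelity $\delta_z=1-\Tr[\Phi(\id_{\bar Q}\otimes\mathcal T_z)(\Phi)]$, which is at most $0.045$ for the mixture. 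Everything downstream must then run from \emph{average} Bell infidelity. \Cref{prop:trace-gap} is stated for worst-case error, so the gap must be re-derived pointwise: $t_z\le4\sqrt{\delta_z}$ when $f_n(z)=0$ and $t_z\ge2-4\delta_z$ when $f_n(z)=1$, where $t_z=\norm{\rho^0_z-\rho^1_z}_1$. The tolerable average error is then set by this averaged gap; the paper needs average at most $0.075$ after truncation. It is not ``anything below $1/2$'': simply keeping the qubit with Alice already gives average Bell infidelity $3/8$ with no entanglement.

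The genuine gap is the core step. You cannot invoke \Cref{thm:main} for your component. It assumes worst-case error $0.09$ and concludes through sign rank (\Cref{thm:protocol-signrank}), which needs \emph{every} entry of the low-rank approximant on the correct side of $\theta$. A component controlled only on average may have many entries on the wrong side, and then sign rank says nothing. The missing ingredient is that \Cref{prop:chi-rank} is purely structural: it holds for every rank-$d$ protocol, correct or not. You can therefore average the pointwise gap into the correlation $\mathbb E_\mu[S_{xy}(\chi_s(z)-1)]\ge37/400$. Then $SS^{\mathsf T}=NI$ and operator--nuclear norm duality force any bounded matrix with constant correlation with $S$ to have rank $\Omega(N)$, which is incompatible with $\exp(C_1d\log(2d))\log(2N)+1$ unless $d\log(2d)=\Omega(n)$. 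Your net fallback does not replace this. Alice's channel $\mathcal A_x:QL\to A_0C_{A\to B}$ has a $2d$-dimensional input but an output of unbounded dimension, since message lengths are unrestricted. So no net of her channels has size depending on $d$ alone, and no $d$-dependent classical index lets Bob evaluate both recovery fidelities. Removing this dependence on message size is exactly what the Fourier and sampling moment bound of \Cref{sec:rank} does, and your proposal has no substitute for it. Finally, even a working bound $d^{O(1)}\log d\gtrsim n$ would give only $\log_2d\ge c'(\log_2n-\log_2\log_2n-C)$ with some $c'<1$. That is still $\Omega(\log n)$, but not the stated constant $1/13750$, which relies on the leading coefficient one from $d\log(2d)=\Omega(n)$.
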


\subsection{Scope and proof overview}

In \Cref{thm:formation}, only the pre-shared state is charged as a resource. The one simultaneous message each party sends may be arbitrarily large, and local computation is unrestricted; neither carries a cost.

The main limitation is the rate of growth: the bound is only logarithmic, far below the polynomial target of the open problem. In the other direction, every Boolean function admits a perfectly correct routing protocol that uses $2^{O(\sqrt{n\log n})}$ qubits of shared resource, a quantity that grows faster than any polynomial in $n$~\cite{relating}, so the known upper and lower bounds are very far apart. For explicit total Boolean functions, even a lower bound growing faster than logarithmically remains open when errors are allowed on both routing cases.

The proof proceeds in four steps, one per section.
\begin{enumerate}
\item \emph{Routing model} (\Cref{sec:model}).
The one-round $f$-routing model is defined, and the analysis is set up for protocols with a pure shared state; mixed shared states are handled in the last step. After the message exchange, Bob holds his own share together with Alice's message. In this form, the state Bob holds is a sum of product terms, each combining a factor that depends only on $x$ with a factor that depends only on $y$; the number of terms is controlled by the Schmidt rank $d$ of the shared state.

\item \emph{Robust routing gap} (\Cref{sec:gap}).
When the input qubit $Q$ is set to $\lvert0\rangle$ or to $\lvert1\rangle$, Bob's state is one of two possibilities. Correctness makes these easy to distinguish when he must recover the qubit, and hard when Alice must: any information Bob gains would disturb the qubit she has to recover. A statistic that measures this distinguishability separates the two cases by a constant gap on every input pair, even at the allowed error of $0.09$. The robustness of the bound rests on this step: the constant gap replaces the exact conditions that perfect recovery would provide, and the next step uses correctness only through this gap.

\item \emph{Low-rank approximation} (\Cref{sec:rank}).
Form a matrix with a row for each input $x$ and a column for each input $y$, and fill each entry with the value of the statistic on that pair. By the previous step, every entry is low when $f(x,y)=0$ and high when $f(x,y)=1$, with the constant gap between the two ranges. The point of this matrix is its rank: the rank measures how simple a matrix is. This step shows that a protocol with a small Schmidt rank $d$ can only produce a simple matrix, one close to a matrix of low rank that still separates the two ranges, no matter how large the messages and local systems are. The last step shows that the pattern of low and high entries of inner product never comes from a simple matrix. This incompatibility makes inner product a suitable choice of the routing function $f$ for a growing lower bound.

\item \emph{Entanglement lower bound} (\Cref{sec:formation}).
This step converts the incompatibility of the previous step into a statement about the entanglement of formation $E_F(\rho_{LR})$ of the original mixed state. Since $E_F$ is a minimum over all ways of producing the shared state by mixing pure states, the task is to show that every way of producing it uses large average entanglement. Consider any mixture of pure states that produces $\rho_{LR}$. Since the protocol works with the mixture, it must still succeed for a fixed fraction of the pure states. We show that such a state cannot be weakly entangled: it could otherwise be replaced by a nearby state of small Schmidt rank, and the preceding steps, applied to that state, would produce a simple matrix, which inner product forbids. Every pure state for which the protocol succeeds therefore carries entanglement of order $\log n$, so the average entanglement of the mixture is of that order as well. Since this holds for every mixture, \Cref{thm:formation} follows.
\end{enumerate}

This paper additionally proves a companion lower bound for the dimension-based cost: it is also robust and also grows logarithmically; Appendix~\ref{app:support-rank} establishes it by reusing the first three steps.

\section{Routing model}\label{sec:model}

This section defines the one-round $f$-routing model. It then fixes the normal form used by the later sections, in which the pre-shared state $\rho_{LR}$ is pure.

\subsection{Protocol and correctness}

Throughout, all Hilbert spaces are finite-dimensional, and $\log$ denotes the natural logarithm. Fix a Boolean function $f:\{0,1\}^n\times\{0,1\}^n\to\{0,1\}$. Alice receives $x\in\{0,1\}^n$ and a qubit $Q$, while Bob receives $y\in\{0,1\}^n$. Before these inputs arrive, they share a state $\rho_{LR}$, with Alice holding $L$ and Bob holding $R$. The shared state does not depend on $x$, $y$, or the state of $Q$, and it is uncorrelated with any reference system entangled with $Q$. \Cref{fig:protocol} illustrates the protocol.

\begin{figure}[t]
\centering
\begin{tikzpicture}[thick,
 gate/.style={draw,fill=gray!12,rounded corners=2pt,minimum height=0.75cm,minimum width=2.2cm},
 lab/.style={font=\small}]
\draw[->] (-0.9,5.5) -- (-0.9,-0.2);
\node[lab,rotate=90] at (-1.2,2.7) {time};
\draw[decorate,decoration={snake,amplitude=0.5mm,segment length=2.5mm}]
 (2.0,5.1) -- (7.0,5.1);
\node[lab,above] at (4.5,5.25) {$\rho_{LR}$};
\draw (0.6,5.5) -- (0.6,4.075);
\node[lab,left] at (0.6,5.1) {$Q$};
\draw[double,double distance=1pt] (1.3,5.5) -- (1.3,4.075);
\node[lab,right] at (1.3,5.1) {$x$};
\draw (2.0,5.1) -- (2.0,4.075);
\node[lab,left] at (2.0,4.5) {$L$};
\draw[double,double distance=1pt] (7.7,5.5) -- (7.7,4.075);
\node[lab,right] at (7.7,5.1) {$y$};
\draw (7.0,5.1) -- (7.0,4.075);
\node[lab,right] at (7.0,4.5) {$R$};
\node[gate] at (1.3,3.7) {$\mathcal A_x$};
\node[gate] at (7.7,3.7) {$\mathcal B_y$};
\draw (0.6,3.325) -- (0.6,1.175);
\node[lab,left] at (0.6,2.4) {$A_0$};
\draw (8.4,3.325) -- (8.4,1.175);
\node[lab,right] at (8.4,2.4) {$B_0$};
\draw (2.0,3.325) -- node[lab,above,sloped,pos=0.22] {$C_{A\to B}$} (7.0,1.8) -- (7.0,1.175);
\draw (7.0,3.325) -- node[lab,above,sloped,pos=0.22] {$C_{B\to A}$} (2.0,1.8) -- (2.0,1.175);
\node[lab] at (1.3,1.45) {$M_A$};
\node[lab] at (7.7,1.45) {$M_B$};
\node[gate] at (1.3,0.8) {$\mathcal D_A^{x,y}$};
\node[gate] at (7.7,0.8) {$\mathcal D_B^{x,y}$};
\draw (1.3,0.425) -- (1.3,-0.15) node[lab,below] {$Q$};
\draw (7.7,0.425) -- (7.7,-0.15) node[lab,below] {$Q$};
\node[lab] at (1.3,5.95) {Alice};
\node[lab] at (7.7,5.95) {Bob};
\end{tikzpicture}
\caption{One-round $f$-routing, with time running downward. Double lines denote classical inputs. Alice and Bob apply the local channels $\mathcal A_x$ and $\mathcal B_y$, keep $A_0$ and $B_0$, and exchange $C_{A\to B}$ and $C_{B\to A}$ simultaneously. Alice recovers the qubit from $M_A=A_0C_{B\to A}$ when $f(x,y)=0$; Bob recovers it from $M_B=B_0C_{A\to B}$ when $f(x,y)=1$. The crossing messages are the only communication; recovery is purely local. The central question is how much entanglement $\rho_{LR}$ must contain for Alice and Bob to succeed.}\label{fig:protocol}
\end{figure}

Alice and Bob first apply channels $\mathcal A_x:QL\to A_0C_{A\to B}$ and $\mathcal B_y:R\to B_0C_{B\to A}$. Alice's channel depends on her classical input $x$ and acts on the qubit $Q$ and her half $L$ of the shared state; Bob's depends on $y$ and acts on his half $R$. We use channels because they describe the most general local quantum operations. Alice keeps $A_0$ and sends $C_{A\to B}$ to Bob, while Bob keeps $B_0$ and sends $C_{B\to A}$ to Alice. Because the messages are simultaneous, neither channel can depend on the incoming message. After the exchange, Alice holds $M_A=A_0C_{B\to A}$ and Bob holds $M_B=B_0C_{A\to B}$. The channel from $Q$ to these systems before recovery is
\[
 \mathcal N^{x,y}_{Q\to M_AM_B}(\,\cdot\,)
 =(\mathcal A_x\otimes\mathcal B_y)(\,\cdot\,\otimes\rho_{LR}).
\]
In words, $\mathcal N^{x,y}$ appends the shared state to the input qubit and applies both first channels; its output is the joint state of $M_A$ and $M_B$. This single channel summarizes everything that happens before recovery, and correctness is stated in terms of it.

For recovery, Alice applies a channel $\mathcal D_A^{x,y}:M_A\to Q$ and Bob applies a channel $\mathcal D_B^{x,y}:M_B\to Q$. These recovery channels may depend on both classical inputs; this is without loss of generality, because each message may contain a classical copy of the sender's input, and message lengths are unrestricted. For a linear map $\Phi$, $\norm{\Phi}_{\diam}$ denotes the diamond norm, and our error convention carries no factor of $1/2$. The diamond norm measures worst-case error: the requirements below then hold uniformly over all states of $Q$, including states entangled with a reference system. The protocol is $\epsilon$-correct when, for every input pair,
\[
\begin{aligned}
 f(x,y)=0
 &\Longrightarrow
 \norm{\mathcal D_A^{x,y}\circ\Tr_{M_B}\circ\mathcal N^{x,y}-\id_Q}_{\diam}
 \le\epsilon,\\
 f(x,y)=1
 &\Longrightarrow
 \norm{\mathcal D_B^{x,y}\circ\Tr_{M_A}\circ\mathcal N^{x,y}-\id_Q}_{\diam}
 \le\epsilon.
\end{aligned}
\]
The first condition tests Alice's recovered channel when $f(x,y)=0$, and the second tests Bob's recovered channel when $f(x,y)=1$. This matches the standard model~\cite{maynotes,complexity}.

The difficulty of the task comes from the timing. The exchange is the last communication, and at message time Alice does not know the value of $f(x,y)$, so she cannot decide whether the qubit should stay or go; no-cloning prevents her from doing both. Pre-shared entanglement offers a way out that is compatible with correctness: Alice can encode the qubit into the shared state by a teleportation-style operation that learns nothing about the qubit, and the selected party can complete the recovery locally after the exchange. The central question of $f$-routing is how much entanglement the task requires and how this requirement grows with $n$.

\subsection{Pure-state normal form}

We now fix the normal form of the protocol. The shared state $\rho_{LR}$ is taken to be pure, with Schmidt rank $d$, and each local register is the $d$-dimensional support of the corresponding local state; a mixed shared state is decomposed into pure states in \Cref{sec:formation}. In addition, we write the initial channels as isometries: by Stinespring's theorem, a channel becomes an isometry once its extra output is kept, and Alice and Bob keep these outputs on their own sides; no induced channel changes. Write the shared state $\rho_{LR}$ in Schmidt form as
\[
 \lvert\psi\rangle_{LR}=\sum_{r=1}^d\sqrt{\lambda_r}\,
 \lvert r\rangle_L\lvert r\rangle_R,
 \qquad \lambda_r>0,
 \qquad \sum_r\lambda_r=1.
\]
The weights $\lambda_r$ form a probability distribution. We now analyze the states Bob holds before recovery when the input qubit is $\lvert0\rangle$ or $\lvert1\rangle$. Throughout, we analyze Bob's pre-recovery states: both routing cases constrain them, and the choice of Bob rather than Alice is a convention; the roles could be exchanged. In a correct protocol, Bob's two states are easy to distinguish when $f(x,y)=1$ and hard to distinguish when $f(x,y)=0$.

We now take Bob's perspective and describe exactly what he holds. His pre-recovery system is $M_B=X\otimes Y$, where $X=C_{A\to B}$ is Alice's message and $Y$ is everything Bob produced himself, namely $B_0$ together with the extra output kept from his isometry. Fix the input qubit $Q$ to $\lvert q\rangle$ with $q\in\{0,1\}$, and trace out every output that does not lie in $M_B$. What remains of Alice's operation in \Cref{fig:protocol} is a channel $\mathcal E_x^q:L\to X$, and what remains of Bob's is a channel $\mathcal F_y:R\to Y$. Bob's pre-recovery state is then
\begin{equation}\label{eq:compatible-output}
 \rho^q_{x,y}
 =(\mathcal E_x^q\otimes\mathcal F_y)(\ketbra\psi\psi)
 =\sum_{r,t=1}^d\sqrt{\lambda_r\lambda_t}\,
 \mathcal E_x^q(\ketbra r t)\otimes
 \mathcal F_y(\ketbra r t).
\end{equation}
The second equality inserts the Schmidt form of $\lvert\psi\rangle$. This is the product structure promised in the first step of the proof overview in the introduction: each term combines a factor that depends only on $x$ with a factor that depends only on $y$. A related formula appears in~\cite{ranklower}; in \Cref{eq:compatible-output}, the channel factors and the products of Schmidt coefficients are explicit.

For each input pair $(x,y)$, we record Bob's two states $\rho^0_{x,y}$ and $\rho^1_{x,y}$ by their average and their difference:
\[
 \tau_{x,y}=\tfrac12\bigl(\rho^0_{x,y}+\rho^1_{x,y}\bigr),
 \qquad
 \Delta_{x,y}=\tfrac12\bigl(\rho^0_{x,y}-\rho^1_{x,y}\bigr),
\]
so that $\rho^0_{x,y}=\tau_{x,y}+\Delta_{x,y}$ and $\rho^1_{x,y}=\tau_{x,y}-\Delta_{x,y}$; the factor $\tfrac12$ makes this reconstruction symmetric. The pair $(\tau_{x,y},\Delta_{x,y})$ is central to the whole argument. Whether Bob can tell the input $\lvert0\rangle$ from the input $\lvert1\rangle$ is governed entirely by $\Delta_{x,y}$: any measurement sees the two states differ only through their difference $\rho^0_{x,y}-\rho^1_{x,y}=2\Delta_{x,y}$. These coordinates are convenient because both routing cases become statements about a single object: correctness makes $\Delta_{x,y}$ small when $f(x,y)=0$ and large when $f(x,y)=1$. The statistic of the next section captures this contrast in one number per input pair, measuring the difference relative to the average $\tau_{x,y}$ rather than in absolute terms.

\section{Robust routing gap}\label{sec:gap}

This section makes the contrast at the end of \Cref{sec:model} quantitative. We first prove it in trace norm: correctness forces Bob's two states close together when Alice must recover the qubit and far apart when Bob must. We then package this into a statistic that respects the product structure of \Cref{eq:compatible-output} and separates the two routing cases by a constant gap at the allowed error of $0.09$.

If one party can accurately recover the qubit, then the other party's pre-recovery state must be nearly independent of the qubit's input state; this is the information-disturbance tradeoff. Formally, the reduced channel from $Q$ to the other party's system is close to a constant-output channel. We apply this when $f(x,y)=0$: Alice recovers, so Bob's states for the inputs $\lvert0\rangle$ and $\lvert1\rangle$ must be nearly equal. The lemma below gives the standard quantitative form~\cite{ksw}; its application to $f$-routing also appears in~\cite{relating}.

\begin{lemma}[Information-disturbance tradeoff]\label{lem:complement}
Let $V:Q\to A\otimes B$ be an isometry, and let $\mathcal N_A(\cdot)=\Tr_B V(\cdot)V^*$ and $\mathcal N_B(\cdot)=\Tr_A V(\cdot)V^*$. If a channel $\mathcal D:A\to Q$ satisfies $\norm{\mathcal D\circ\mathcal N_A-\id_Q}_{\diam}\le\epsilon$, then there is a state $\zeta_B$ such that $\norm{\mathcal N_B-\mathcal R_{\zeta_B}}_{\diam}\le 2\sqrt\epsilon$, where $\mathcal R_{\zeta_B}(X)=\Tr(X)\zeta_B$ is the replacer channel that discards its input and outputs $\zeta_B$.
\end{lemma}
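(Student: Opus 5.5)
The plan is to derive the lemma from the continuity of the Stinespring dilation (Kretschmann–Schlingemann–Werner), applied to the two complementary channels $\mathcal N_A$ and $\mathcal N_B$. We compare $\mathcal N_A$ with the identity channel, whose complement is a replacer channel. The comparison must happen at the level of the recovered channel $\mathcal D\circ\mathcal N_A$. The reason is that $\mathcal N_A$ itself need not be close to $\id_Q$; only the composition $\mathcal D\circ\mathcal N_A$ is.

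First, dilate $\mathcal D$ as an isometry $W:A\to Q\otimes E$. Then $U=(W\otimes 1_B)V:Q\to Q\otimes E\otimes B$ is a Stinespring isometry of $\mathcal T=\mathcal D\circ\mathcal N_A$, with environment $EB$. Its complementary channel $\mathcal T^c:Q\to EB$ satisfies $\Tr_E\circ\mathcal T^c=\mathcal N_B$, because tracing out $Q$ and $E$ after applying $W$ just traces out $A$, and $W$ is an isometry. The identity channel $\id_Q$ has Stinespring isometry $\xi\mapsto\xi\otimes\lvert0\rangle_{EB}$, after enlarging environments as needed. Its complement is the replacer channel onto a fixed pure state.

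Next, invoke the KSW continuity theorem. For channels $\mathcal T_1,\mathcal T_2$ with common output and complements $\mathcal T_1^c,\mathcal T_2^c$, their diamond-norm bound gives
\[
 \inf_{U}\norm{\mathcal T_1^c-\mathcal U\circ\mathcal T_2^c}_{\diam}^2\le 4\,\norm{\mathcal T_1-\mathcal T_2}_{\diam},
\]
where $U$ ranges over isometries on the environment. This bound is stated in the paper's convention, which has no factor $1/2$. We need to check the constant: KSW state $\inf\norm{V_1-UV_2}\le\sqrt{\norm{T_1-T_2}_{\diam}}$, and the complementary channels then differ by at most $2\norm{V_1-UV_2}$ in diamond norm. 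With $\mathcal T_1=\mathcal T$ and $\mathcal T_2=\id_Q$, this yields a state $\omega_{EB}$ with $\norm{\mathcal T^c-\mathcal R_{\omega_{EB}}}_{\diam}\le 2\sqrt\epsilon$. Here we use that an isometry applied after a replacer channel is again a replacer channel. If the infimum is not attained, compactness of the isometries on finite-dimensional spaces gives attainment.

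Finally, trace out $E$. Partial trace is a channel, so it does not increase the diamond norm. Hence $\norm{\mathcal N_B-\mathcal R_{\zeta_B}}_{\diam}\le 2\sqrt\epsilon$ with $\zeta_B=\Tr_E\omega_{EB}$.

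The main obstacle is the exact constant. It depends on the precise form of the KSW bound and on the norm convention. Specifically, one must confirm that $\norm{V_1-UV_2}_\infty\le\sqrt{\epsilon}$ follows from a diamond distance of $\epsilon$ without the factor $1/2$, and that the map from isometries to channels is $2$-Lipschitz from operator norm to diamond norm. If the bound comes out as $\sqrt{2\epsilon}$ or similar, the stated $2\sqrt\epsilon$ still holds, since it is weaker. As a fallback, one can argue directly via fidelity: Uhlmann's theorem on purifications of the Choi states gives the same conclusion, through the Fuchs–van de Graaf inequalities.
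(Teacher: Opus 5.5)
Your proposal is correct and matches the paper's proof. You dilate $\mathcal D\circ\mathcal N_A$ with an isometry that retains $B$, apply the Kretschmann--Schlingemann--Werner continuity theorem against the identity channel (whose dilation leaves the environment in a fixed state) to get isometries within $\sqrt\epsilon$, convert this to a bound of $2\sqrt\epsilon$ on the complementary channels, and trace out $E$ to land on $\mathcal N_B$; your constant check (the cb-norm in that theorem equals the paper's diamond norm without a factor $1/2$, and the operator-norm-to-diamond-norm conversion costs a factor $2$) is exactly what the paper's ``factor $2$'' remark relies on, so the Uhlmann fallback is not needed.
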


\begin{proof}
Apply the continuity theorem for Stinespring representations~\cite{ksw} to $\mathcal D\circ\mathcal N_A$ and the identity channel, as in~\cite{relating}. A Stinespring isometry for $\mathcal D\circ\mathcal N_A$ may retain $B$. The theorem makes this isometry $\sqrt\epsilon$-close to an implementation of the identity in which $B$ has a fixed state. Reducing to $B$ gives the replacer channel $\mathcal R_{\zeta_B}$, and converting the isometry bound to diamond norm gives the factor $2$.
\end{proof}

When $f(x,y)=1$, Bob's recovery gives a lower bound on the distance between his two states. When $f(x,y)=0$, Alice's recovery and \Cref{lem:complement} give an upper bound on the same distance. At $\epsilon=0.09$, the two bounds differ by a constant. After the next proposition, we rely mainly on this constant separation, the routing gap. A related estimate for $f(x,y)=0$ appears in~\cite{relating}. The bound for $f(x,y)=1$ follows in one step from correctness: a party who can recover the qubit can distinguish the two inputs. New here is the combination of the two bounds into a gap that holds with constant error on both routing cases. This is also why earlier arguments assume perfect recovery on one case: the distance then vanishes exactly there, and exact zeros are constraints that matrix rank can use directly. A merely small entry constrains the rank not at all, so with errors on both cases only the constant separation between small and large remains to work with.

\begin{proposition}[Routing gap]\label{prop:trace-gap}
For an $\epsilon$-correct protocol, the states in \Cref{eq:compatible-output} satisfy
\[
\begin{aligned}
 f(x,y)=0
 &\Longrightarrow
 \norm{\rho^0_{x,y}-\rho^1_{x,y}}_1\le 4\sqrt\epsilon,\\
 f(x,y)=1
 &\Longrightarrow
 \norm{\rho^0_{x,y}-\rho^1_{x,y}}_1\ge 2(1-\epsilon).
\end{aligned}
\]
For $\epsilon=0.09$, the upper bound for $f(x,y)=0$ is $1.2$, while the lower bound for $f(x,y)=1$ is $1.82$.
\end{proposition}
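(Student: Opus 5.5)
The proof treats the two routing cases separately, each by a direct reduction to a statement about channels on $Q$.

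\emph{Case $f(x,y)=0$.} We use \Cref{lem:complement}. In the pure-state normal form of \Cref{sec:model}, the shared state is pure and both first channels are isometries whose extra outputs stay with their owners. So for fixed $(x,y)$ the map
\[
 \lvert\phi\rangle_Q\mapsto(\mathcal A_x\otimes\mathcal B_y)(\lvert\phi\rangle\otimes\lvert\psi\rangle)
\]
is an isometry $V:Q\to M_A\otimes M_B'$. Here $M_B'$ is Bob's full system $X\otimes Y$, including everything he kept, and $M_A$ includes Alice's kept Stinespring outputs. Alice's recovery can be taken to act on her enlarged system by first discarding the extra outputs, so her error bound is unchanged. \Cref{lem:complement} then gives a state $\zeta$ with $\norm{\mathcal N_B-\mathcal R_\zeta}_\diam\le 2\sqrt\epsilon$, where $\mathcal N_B$ is the reduced channel $Q\to X\otimes Y$.

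Evaluate both channels at $\ketbra00$ and $\ketbra11$. Since the diamond norm dominates the trace-norm distance on single inputs, $\norm{\rho^q_{x,y}-\zeta}_1\le 2\sqrt\epsilon$ for $q=0,1$. The triangle inequality then gives $\norm{\rho^0_{x,y}-\rho^1_{x,y}}_1\le4\sqrt\epsilon$.

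\emph{Case $f(x,y)=1$.} The recovery channel $\mathcal D_B^{x,y}$ acts on $M_B$. Bob's $\rho^q_{x,y}$ is the state of $M_B$ when $Q=\lvert q\rangle$, with any extra kept outputs traced out, which is harmless because recovery may discard them. Write $\mathcal M=\mathcal D_B^{x,y}\circ\Tr_{M_A}\circ\mathcal N^{x,y}$, so that $\mathcal M(\ketbra qq)=\mathcal D_B^{x,y}(\rho^q_{x,y})$. Correctness gives $\norm{\mathcal M(\ketbra qq)-\ketbra qq}_1\le\epsilon$ for each $q$. By the triangle inequality,
\[
 \norm{\mathcal M(\ketbra00)-\mathcal M(\ketbra11)}_1\ge\norm{\ketbra00-\ketbra11}_1-2\epsilon=2-2\epsilon.
\]
The trace distance is contractive under the channel $\mathcal D_B^{x,y}$, so $\norm{\rho^0_{x,y}-\rho^1_{x,y}}_1\ge 2(1-\epsilon)$.

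The numerics at $\epsilon=0.09$ follow directly: $\sqrt{0.09}=0.3$ gives the upper bound $4\sqrt\epsilon=1.2$, and the lower bound is $2(1-0.09)=1.82$.

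The main point to get right is the bookkeeping in the first case. \Cref{lem:complement} needs a genuine isometry from $Q$ whose complementary output contains exactly Bob's system $X\otimes Y$. Purity of $\lvert\psi\rangle$ and the Stinespring dilations in the normal form supply this. Tracing out parts of Bob's side afterwards can only shrink the trace distance, so the bound survives even if $\rho^q_{x,y}$ omits some of Bob's registers.
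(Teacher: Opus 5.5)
Your proposal is correct and follows essentially the same route as the paper. For $f(x,y)=0$ you apply \Cref{lem:complement} to the global isometry $Q\to M_A\otimes M_B$ of the normal form, evaluate at $\ketbra00$ and $\ketbra11$, and use the triangle inequality; for $f(x,y)=1$ you combine correctness on the two basis inputs with the triangle inequality and contractivity of the trace norm under $\mathcal D_B^{x,y}$, and your bookkeeping of the kept Stinespring outputs is slightly more explicit than the paper's.
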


\begin{proof}
Suppose first that $f(x,y)=0$. In the normal form, the global channel $\mathcal N^{x,y}:Q\to M_A\otimes M_B$ is an isometry when all retained systems are included. Applying \Cref{lem:complement} with this isometry as $V$, so that $A=M_A$ and $B=M_B$, Alice's recovery produces a state $\zeta_{M_B}$ such that Bob's channel is within $2\sqrt\epsilon$ in diamond norm of the corresponding replacer channel. Hence each $\rho^q_{x,y}$ is within trace norm $2\sqrt\epsilon$ of $\zeta_{M_B}$, and the triangle inequality gives $\norm{\rho^0_{x,y}-\rho^1_{x,y}}_1\le4\sqrt\epsilon$.

Now suppose that $f(x,y)=1$, and let $\mathcal D_B$ be Bob's recovery channel. For $q=0,1$, correctness gives $\norm{\mathcal D_B(\rho^q_{x,y})-\ketbra q q}_1\le\epsilon$. The orthogonal basis states $\ketbra00$ and $\ketbra11$ have trace-norm difference $2$. The triangle inequality and contractivity of the trace norm under channels therefore give $2\le 2\epsilon+\norm{\rho^0_{x,y}-\rho^1_{x,y}}_1$, which is the claimed lower bound.
\end{proof}

We now want to carry this gap into the matrix picture of the introduction, with one row for each $x$, one column for each $y$, and one number per input pair. Any quantity yields such a matrix. What matters is the rank: the final argument shows that a small resource forces the rank to be low, while the pattern of inner product demands that it be high. Forcing low rank requires entries built by an algebraic formula from the product terms of \Cref{eq:compatible-output}, and the trace norm is no such formula. We therefore replace it by a statistic that is quadratic in the difference $\Delta_{x,y}$; its one remaining non-algebraic ingredient, an operator inverse, is dealt with in \Cref{sec:rank}. For a Hermitian operator $T$, define the superoperator $\mathcal L_T(X)=(TX+XT)/2$ on the Hilbert--Schmidt space of operators; it averages left multiplication and right multiplication by $T$, and $\mathcal L_T^+$ denotes its Moore--Penrose inverse. The next lemma compares the statistic directly with the trace norm, so the constant routing gap survives the replacement.

\begin{lemma}[{Comparison of the statistic with the trace norm~\cite{liu}}]\label{lem:sld-comparison}
For density operators $\rho$ and $\sigma$, set $\Sigma=\rho+\sigma$ and $Z=\rho-\sigma$, and define
\[
 \chi_{\mathrm{SLD}}(\rho,\sigma)
 =\Tr\bigl[Z\mathcal L_\Sigma^+(Z)\bigr].
\]
Then
\[
 \frac{\norm{\rho-\sigma}_1^2}{2}
 \le
 \chi_{\mathrm{SLD}}(\rho,\sigma)
 \le
 \norm{\rho-\sigma}_1.
\]
\end{lemma}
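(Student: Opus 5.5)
The plan is to work in an eigenbasis of $\Sigma=\rho+\sigma$, with eigenvalues $s_i\ge0$, and to reduce both inequalities to simple facts about $X:=\mathcal L_\Sigma^+(Z)$.

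\emph{Setup.} Since $0\le\rho\le\Sigma$ and $0\le\sigma\le\Sigma$, both states are supported in $\supp\Sigma$. Hence $Z=\rho-\sigma$ is supported there too, i.e.\ $PZP=Z$ for the projector $P$ onto $\supp\Sigma$. In the eigenbasis, $\mathcal L_\Sigma$ acts on matrix units by $\ketbra ij\mapsto\tfrac{s_i+s_j}{2}\ketbra ij$. It is therefore positive semidefinite on the Hilbert--Schmidt space and invertible on operators supported in $\supp\Sigma$. Consequently $X=\mathcal L_\Sigma^+(Z)$ is Hermitian, is supported in $\supp\Sigma$, satisfies $\mathcal L_\Sigma(X)=Z$ (that is, $\Sigma X+X\Sigma=2Z$), and gives
\[
\chi_{\mathrm{SLD}}(\rho,\sigma)=\Tr(ZX)=\inner{X}{\mathcal L_\Sigma(X)}_{\HS}\ge0 .
\]

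\emph{Upper bound.} Hölder gives $\Tr(ZX)\le\norm Z_1\norm X_\infty$, so it suffices to show $\norm X_\infty\le1$. Take a unit eigenvector $v$ of $X$ with eigenvalue $\mu\neq0$. Because $X$ is supported in $\supp\Sigma$, $v$ lies in $\supp\Sigma$, so $\langle v,\Sigma v\rangle>0$. Sandwiching $\Sigma X+X\Sigma=2Z$ with $v$ gives
\[
2\mu\langle v,\Sigma v\rangle=2\langle v,Zv\rangle .
\]
The operator inequalities $-\Sigma\le Z\le\Sigma$, which follow from $\rho,\sigma\ge0$, give $\abs{\langle v,Zv\rangle}\le\langle v,\Sigma v\rangle$. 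Hence $\abs\mu\le1$, and so $\chi_{\mathrm{SLD}}\le\norm{\rho-\sigma}_1$.

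\emph{Lower bound.} Let $U=\operatorname{sign}(Z)$. This is a Hermitian operator supported in $\supp Z\subseteq\supp\Sigma$ with $U^2\le P$ and $\Tr(ZU)=\norm Z_1$. Then
\[
\norm Z_1=\Tr(UZ)=\inner{U}{\mathcal L_\Sigma(X)}_{\HS}.
\]
Because $\mathcal L_\Sigma$ is a positive semidefinite form, Cauchy--Schwarz applies and gives
\[
\norm Z_1^2\le\inner{X}{\mathcal L_\Sigma X}\,\inner{U}{\mathcal L_\Sigma U}=\chi_{\mathrm{SLD}}\cdot\Tr(\Sigma U^2).
\]
Finally $\Tr(\Sigma U^2)\le\Tr\Sigma=2$, which yields $\norm{\rho-\sigma}_1^2/2\le\chi_{\mathrm{SLD}}$.

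\emph{Main obstacle.} The only delicate point is the support bookkeeping for the Moore--Penrose inverse. One needs $\mathcal L_\Sigma\mathcal L_\Sigma^+Z=Z$, and one needs that the eigenvectors of $X$ with nonzero eigenvalue lie where $\Sigma$ is strictly positive. Both follow from $Z$ being supported in $\supp\Sigma$. Once these are settled, the two bounds are a Hölder estimate and a Cauchy--Schwarz estimate, respectively.
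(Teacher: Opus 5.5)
Your proof is correct, and it takes a different route from the paper's. The paper proves nothing directly here. It rewrites $\chi_{\mathrm{SLD}}/2$ as the measured quantum triangular discrimination $\Tr[\rho_-\mathcal L_{\rho_+}^+(\rho_-)]$, with $\rho_\pm=(\rho\pm\sigma)/2$, and imports the bounds $T^2\le\chi_{\mathrm{SLD}}/2\le T$, where $T=\norm{\rho-\sigma}_1/2$, from~\cite{liu}.

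Your argument is self-contained and elementary:
\begin{itemize}
\item For the upper bound, you prove the operator-norm estimate $\norm{\mathcal L_\Sigma^+(Z)}_\infty\le1$ by sandwiching $\Sigma X+X\Sigma=2Z$ with an eigenvector and using $-\Sigma\preceq Z\preceq\Sigma$. H\"older then finishes.
\item For the lower bound, you apply Cauchy--Schwarz for the positive form $\inner{\cdot}{\mathcal L_\Sigma(\cdot)}_{\HS}$ against $\operatorname{sign}(Z)$.
\end{itemize}

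The citation buys brevity. Your version makes the support bookkeeping for singular $\Sigma$ explicit and shows where each inequality comes from.

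It also helps elsewhere in the paper. With $\Sigma=2\tau$ and $Z=2\Delta$, your $X$ is exactly the operator $H_*=\mathcal L_\tau^+(\Delta)$ from the proof of \Cref{lem:smoothing}. Your eigenvector argument therefore gives the bound $-I\preceq H_*\preceq I$ needed there directly, without the integral representation of $\mathcal L_\tau^{-1}$ that the paper cites for it.
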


\begin{proof}
The quantity $\chi_{\mathrm{SLD}}/2$ is the measured quantum triangular discrimination between $\rho$ and $\sigma$~\cite{liu}. To match the notation in~\cite{liu}, let $T=\norm{\rho-\sigma}_1/2$ and $\rho_\pm=(\rho\pm\sigma)/2$. The measured quantum triangular discrimination is $\Tr[\rho_-\mathcal L_{\rho_+}^+(\rho_-)]$. Since $\rho_+=\Sigma/2$ and $\rho_-=Z/2$, this quantity equals $\chi_{\mathrm{SLD}}/2$. The bounds $T^2\le\chi_{\mathrm{SLD}}/2\le T$ are exactly the displayed inequalities.
\end{proof}

The subscript records that $\mathcal L_\Sigma^+(Z)$ is a symmetric logarithmic derivative: it solves $\mathcal L_\Sigma(H)=Z$ on $\supp\Sigma$. Applying the statistic to Bob's two routing states gives the entries of the matrix: for each input pair, define $\chi(x,y)=\chi_{\mathrm{SLD}}(\rho^0_{x,y},\rho^1_{x,y})$. The identities $\rho^0_{x,y}+\rho^1_{x,y}=2\tau_{x,y}$ and $\rho^0_{x,y}-\rho^1_{x,y}=2\Delta_{x,y}$ give $\chi(x,y)/2=\Tr[\Delta_{x,y}\mathcal L_{\tau_{x,y}}^+(\Delta_{x,y})]$. At $\epsilon=0.09$, the routing gap of \Cref{prop:trace-gap} transfers to the statistic through \Cref{lem:sld-comparison}, giving the constant separation
\begin{equation}\label{eq:raw-gap}
\begin{aligned}
 f(x,y)=0
 &\Longrightarrow
 \chi(x,y)\le1.2,\\
 f(x,y)=1
 &\Longrightarrow
 \chi(x,y)\ge\frac{1.82^2}{2}=1.6562.
\end{aligned}
\end{equation}

Prior work~\cite{ranklower} proved rank lower bounds from a similar quadratic statistic, a squared Hilbert--Schmidt norm without the normalization. Their bounds require perfect recovery on Alice's case: their quantity then vanishes exactly on the inputs with $f(x,y)=0$, because Bob's pre-recovery system decouples from the qubit's reference, and this zero pattern alone drives their rank argument. With errors on both routing cases, this unnormalized quantity fails for a second reason beyond the loss of exact zeros: its scale depends on the dimension. A protocol can pad Bob's state with a maximally mixed factor for free, for instance when Alice prepares a maximally entangled pair and sends one half in her message. Under this padding, correctness and the trace-norm gap are unchanged, but the squared Hilbert--Schmidt norm is divided by the dimension of the factor, so the values on the two routing cases collapse together and no constant threshold separates them. In our work, the statistic $\chi$ removes the scale through its normalization: dividing by the average state leaves $\chi$ unchanged under any such padding, so the constant gap survives regardless of how large the messages and local systems become.

\section{Low-rank approximation}\label{sec:rank}

The previous section assigns a number $\chi(x,y)$ to every input pair, with a constant gap between the cases $f(x,y)=0$ and $f(x,y)=1$. This section proves the main technical reduction: we preserve the gap while replacing the matrix $[\chi(x,y)]_{x,y}$ by a matrix of controlled rank. Controlling the rank is what the final step needs: \Cref{sec:formation} shows that the pattern of inner product never comes from a matrix of low rank, so the rank bound proved here forces the shared resource to be large. The rank bound will depend only on the Schmidt rank $d$ of the shared state and the input length $n$. It will not depend on the size of the messages or of any other system the parties use.

Preserving the routing gap does not require preserving the entries exactly. The two routing cases occupy two ranges of values a constant width apart, so a replacement that moves each entry by less than half of this width keeps the ranges separated. The following definition records the smallest rank of a replacement with a given allowed error per entry.

\begin{definition}[Approximate rank]
For a real matrix $M$, its entrywise approximate rank is
\[
 \rank_\eta(M)
 =\min\{\rank_\R\widetilde M:\norm{M-\widetilde M}_{\max}\le\eta\},
 \qquad
 \norm{A}_{\max}=\max_{x,y}\abs{A_{x,y}}.
\]
The minimum is over real matrices $\widetilde M$ with the same dimensions as $M$.
\end{definition}

The replacement is constructed in four parts, one per subsection.
\begin{enumerate}
\item \emph{Product-state mixing} (\Cref{subsec:regularization}).
The statistic compares Bob's two states $\rho^0_{x,y}$ and $\rho^1_{x,y}$ through the operator inverse $\mathcal L_{\tau_{x,y}}^+$ formed from their average $\tau_{x,y}$, and this inverse becomes arbitrarily large where $\tau_{x,y}$ has small eigenvalues. We mix both states with a small amount of a fixed product state; the average then becomes comparable with that product state, while a constant routing gap is preserved.

\item \emph{Product-state preconditioning} (\Cref{subsec:preconditioning}).
We divide everything by the product state, so that it becomes the unit of measurement; this preconditioning leaves the value of the statistic unchanged. In the new units, the operator to be inverted has eigenvalues in a window that depends only on $d$, and the rescaled operators are sums of at most $d^2+1$ terms, each a factor depending only on $x$ tensored with a factor depending only on $y$.

\item \emph{Moment approximation} (\Cref{subsec:moments}).
In the new units, each entry of the statistic is built from two ingredients: the inverse of the preconditioned operator and the rescaled difference of Bob's two states. The final part will replace this inverse by a polynomial in the preconditioned operator, that is, a combination of powers of its deviation from the identity. Putting the $j$-th such power in the place of the inverse turns each entry into a simpler number, the $j$-th moment; over all input pairs, these form the $j$-th moment matrix. We approximate each moment matrix by a matrix of low rank: multiplying out the product expansions leaves few terms, and each term fills its matrix with an $x$-dependent number times a $y$-dependent number, a matrix of rank one. The rank depends only on $d$, the degree $j$, the entrywise accuracy, and the logarithm of the number of inputs, no matter how large the messages and local systems are.

\item \emph{Polynomial inversion} (\Cref{subsec:inversion}).
The eigenvalue window from the preconditioning is bounded away from zero, and on such a window the inverse can be approximated uniformly by a polynomial. Substituting this polynomial for the inverse writes the statistic matrix as a combination of finitely many moment matrices. A combination of low-rank matrices has low rank, and the errors per entry stay below half the gap, so the result is a single low-rank matrix that still separates the two routing cases.
\end{enumerate}
The section's main result, \Cref{prop:chi-rank}, is stated at the end of the first subsection, once the constants describing the preserved gap have been fixed.

\subsection{Product-state mixing}\label{subsec:regularization}

We first construct the product state $\Omega_{x,y}$. We require it to split into a factor depending only on $x$ and a factor depending only on $y$, preserving the product structure of \Cref{eq:compatible-output}, and to be comparable with the average $\tau_{x,y}$.

\begin{definition}[Reference product state]
Let $\overline{\mathcal E}_x=(\mathcal E_x^0+\mathcal E_x^1)/2$, and set
\[
 \alpha_x=\overline{\mathcal E}_x(I_L/d),
 \qquad
 \beta_y=\mathcal F_y(\rho_R),
 \qquad
 \Omega_{x,y}=\alpha_x\otimes\beta_y,
\]
where $\rho_R=\sum_r\lambda_r\ketbra r r$. The operators $\alpha_x$ and $\beta_y$ are density operators. The first depends only on $x$, and the second depends only on $y$. The two sides are treated asymmetrically for the comparison below: the pinching inequality dominates the shared state by $d$ times $I_L\otimes\rho_R$, and rescaling $I_L$ to the density operator $I_L/d$ contributes the second factor of $d$ in \Cref{lem:source-domination}.
\end{definition}

The next lemma establishes the other requirement, the comparison of $\Omega_{x,y}$ with the average $\tau_{x,y}$.

\begin{lemma}[Comparison with the product state]\label{lem:source-domination}
For every input pair, $\tau_{x,y}\preceq d^2\Omega_{x,y}$. Equivalently, $d^2\Omega_{x,y}-\tau_{x,y}$ is positive semidefinite.
\end{lemma}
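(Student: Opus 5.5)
The plan is to bound the shared state by a product operator, push that bound through the channels by complete positivity, and then average over $q$.

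First, the pinching inequality applied to the shared state. Let $P_k=\sum_r\ketbra{r}{r}_L\otimes\ketbra{r\oplus k}{r\oplus k}_R$, where $\oplus$ is addition modulo $d$, and let $\mathcal P(X)=\sum_{k}P_kXP_k$ be the pinching onto the $d$ blocks $k=0,\dots,d-1$. The standard inequality $X\preceq d\,\mathcal P(X)$ holds for positive $X$. I would verify it by writing $\mathcal P$ as an average of $d$ unitary conjugations by diagonal phase unitaries $\sum_k\omega^{jk}P_k$, so that $\mathcal P(X)=\frac1d\sum_j U_jXU_j^*\succeq\frac1d X$, using the $j=0$ term.

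Applied to $\ketbra\psi\psi$, all of $\ket\psi$ lies in the $k=0$ block, so this choice of blocks gives $\mathcal P(\ketbra\psi\psi)=\ketbra\psi\psi$ and nothing is gained. So I would instead pinch with respect to the $R$ basis alone: $\mathcal P_R(X)=\sum_t(I\otimes\ketbra tt)X(I\otimes\ketbra tt)$, which again satisfies $X\preceq d\,\mathcal P_R(X)$. Then
\[
\mathcal P_R(\ketbra\psi\psi)=\sum_t\lambda_t\ketbra tt_L\otimes\ketbra tt_R\preceq I_L\otimes\rho_R .
\]
Combining the two bounds gives $\ketbra\psi\psi\preceq d\,I_L\otimes\rho_R$. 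This matches the remark in the definition of $\Omega_{x,y}$.

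Next, push the bound through the channels. The map $\mathcal E_x^q\otimes\mathcal F_y$ is completely positive, so it preserves $\preceq$. This gives
\[
\rho^q_{x,y}\preceq d\,\mathcal E_x^q(I_L)\otimes\mathcal F_y(\rho_R)=d^2\,\mathcal E_x^q(I_L/d)\otimes\beta_y .
\]
Averaging over $q\in\{0,1\}$ and using linearity of $\overline{\mathcal E}_x$ yields $\tau_{x,y}\preceq d^2\,\alpha_x\otimes\beta_y=d^2\Omega_{x,y}$.

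The only delicate point is the first step. The pinching must be chosen in the one-sided $R$ basis, not with blocks adapted to $\ket\psi$, so that it produces a product operator rather than $\ketbra\psi\psi$ itself. Once $X\preceq d\,\mathcal P_R(X)$ is established via the unitary-average representation, the rest is routine monotonicity under completely positive maps.
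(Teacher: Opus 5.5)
Your proof is correct and takes essentially the paper's route: both establish $\ketbra\psi\psi\preceq d\,I_L\otimes\rho_R$ by pinching in the $R$ basis, and then use that $\mathcal E_x^q\otimes\mathcal F_y$ (the paper uses $\overline{\mathcal E}_x\otimes\mathcal F_y$ directly) preserves operator inequalities. The only difference is in verifying the pinching step: the paper applies Cauchy--Schwarz to the diagonal coefficients $v_{r,r}$ of a test vector, whereas you use the unitary-average representation of the pinching map.
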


The proof, an application of the pinching inequality~\cite{hayashipinching}, is given in Appendix~\ref{app:preconditioning}.

We now mix a small amount of the product state $\Omega_{x,y}$ into Bob's two routing states $\rho^0_{x,y}$ and $\rho^1_{x,y}$. We choose five percent as a convenient balance: more mixing controls the inverse better but reduces the routing gap.

\begin{definition}[Regularized routing statistic]
Fix $s=0.05$ and define
\[
 \rho^{q,s}_{x,y}=(1-s)\rho^q_{x,y}+s\Omega_{x,y},
 \qquad
 \tau^s_{x,y}=(1-s)\tau_{x,y}+s\Omega_{x,y},
 \qquad
 \Delta^s_{x,y}=(1-s)\Delta_{x,y}.
\]
Set $\chi_s(x,y)=\chi_{\mathrm{SLD}}(\rho^{0,s}_{x,y},\rho^{1,s}_{x,y})$.
\end{definition}

The next lemma records both sides of this balance: the new average $\tau^s_{x,y}$ lies between two multiples of the product state, and the regularized statistic $\chi_s$ stays close to the original statistic $\chi$.

\begin{lemma}[Bounds after mixing]\label{lem:smoothing}
For every input pair,
\begin{equation}\label{eq:product-sandwich}
 a\Omega_{x,y}\preceq\tau^s_{x,y}\preceq b\Omega_{x,y},
 \qquad
 a=0.05,
 \qquad
 b=0.95d^2+0.05.
\end{equation}
Moreover,
\[
 (1-s)\chi(x,y)-2s
 \le\chi_s(x,y)
 \le(1-s)\chi(x,y).
\]
\end{lemma}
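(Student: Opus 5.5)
Two parts: the operator sandwich and the comparison of the statistics.

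\emph{Sandwich.} Since $\tau^s_{x,y}=(1-s)\tau_{x,y}+s\Omega_{x,y}$ and $\tau_{x,y}\succeq0$, we get $\tau^s_{x,y}\succeq s\Omega_{x,y}=0.05\,\Omega_{x,y}$. For the upper bound we use \Cref{lem:source-domination}: $\tau_{x,y}\preceq d^2\Omega_{x,y}$ gives $\tau^s_{x,y}\preceq\bigl((1-s)d^2+s\bigr)\Omega_{x,y}=b\,\Omega_{x,y}$.

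\emph{Statistic.} Write $\chi_{\mathrm{SLD}}(\rho,\sigma)=\Tr[Z\mathcal L_\Sigma^+(Z)]$. For the regularized pair we have $Z_s=(1-s)Z$ and $\Sigma_s=(1-s)\Sigma+2s\Omega$. Set $\Sigma'=(1-s)\Sigma$.

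\emph{Upper bound.} We use that $(\Sigma,Z)\mapsto\Tr[Z\mathcal L_\Sigma^+(Z)]$ is operator-antimonotone in $\Sigma$ when $Z$ is supported in $\supp\Sigma$. This holds because it is the variational quantity
\[
\sup_H\bigl(2\Tr[ZH]-\Tr[H\mathcal L_\Sigma(H)]\bigr),\qquad \Tr[H\mathcal L_\Sigma(H)]=\Tr[\Sigma H^2],
\]
which is linear in $\Sigma$ inside the supremum. The support condition holds because $\pm Z\preceq\Sigma$. Since $\Sigma_s\succeq\Sigma'$, antimonotonicity and the scaling $\mathcal L_{c\Sigma}^+=c^{-1}\mathcal L_\Sigma^+$ give
\[
\chi_s\le\Tr\bigl[(1-s)^2Z\,\mathcal L_{\Sigma'}^+(Z)\bigr]=(1-s)\,\chi .
\]

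\emph{Lower bound.} We use convexity of $(\Sigma,Z)\mapsto\Tr[Z\mathcal L_\Sigma^+(Z)]$, which follows from the same variational formula since it is a supremum of affine functions. Alternatively, we can use the measured-triangular-discrimination form of \cite{liu}: it is a supremum over measurements of $\sum_k(p_k-q_k)^2/(p_k+q_k)$, a sum of jointly convex functions. The identity $\rho^{q,s}=(1-s)\rho^q+s\Omega$ decomposes the pair $(\rho^{0,s},\rho^{1,s})$ as a mixture of $(\rho^0,\rho^1)$ with weight $1-s$ and $(\Omega,\Omega)$ with weight $s$. So convexity gives $\chi_s\le(1-s)\chi+s\cdot0$, which recovers the upper bound a second time. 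For the lower bound we need something else. Take an optimal measurement $\{P_k\}$ for $\chi(\rho^0,\rho^1)$, with outcome distributions $p,q$ and per-outcome value $\frac{(p_k-q_k)^2}{p_k+q_k}$. Applying the same measurement to the mixed states gives distributions $p^s_k=(1-s)p_k+s\omega_k$ and $q^s_k=(1-s)q_k+s\omega_k$. Then
\[
\frac{(1-s)^2(p_k-q_k)^2}{(1-s)(p_k+q_k)+2s\omega_k}
\ge(1-s)\frac{(p_k-q_k)^2}{p_k+q_k}-\frac{2s\omega_k(p_k-q_k)^2}{(p_k+q_k)^2},
\]
using $\frac{1}{u+v}\ge\frac1u-\frac{v}{u^2}$. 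The last term is at most $2s\omega_k$ because $\abs{p_k-q_k}\le p_k+q_k$. Summing over $k$ and using $\sum_k\omega_k=1$ gives $\chi_s\ge(1-s)\chi-2s$, since $\chi_s$ is a supremum over measurements. If the SLD quantity is achieved only as a supremum of such classical values, the same argument applies with an approximately optimal measurement.

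\emph{Main obstacle.} The key step is identifying $\chi_{\mathrm{SLD}}$ with the measured classical triangular discrimination, i.e.\ a supremum over POVMs, as quoted via \cite{liu} in \Cref{lem:sld-comparison}. Once that holds, both inequalities are elementary per-outcome estimates. If only the variational operator formula is available, we would instead plug the test operator $H=(1-s)^{-1}\cdot$(optimal $H$ for $\chi$), rescaled and clipped so that $\norm{H}\le1$, into the regularized supremum, and bound the $\Omega$ cost by $2s\Tr[\Omega H^2]\le2s$. This requires knowing that the optimal $H$ has operator norm at most $1$, which follows from $-\Sigma\preceq Z\preceq\Sigma$.
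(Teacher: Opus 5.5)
Your proof is correct. The sandwich and the upper bound $\chi_s\le(1-s)\chi$ follow the paper's argument. The paper also uses the variational identity $\inner{v}{L^+v}_{\HS}=\max_H\bigl(2\inner{v}{H}_{\HS}-\inner{H}{LH}_{\HS}\bigr)$ and compares the objectives termwise via $\mathcal L_{\tau^s}\succeq(1-s)\mathcal L_\tau$, which is your $\Sigma_s\succeq(1-s)\Sigma$.

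For the lower bound you take a different route. The paper stays in the operator picture. It evaluates the regularized objective at the unregularized SLD $H_*=\mathcal L_\tau^+(\Delta)$, which gives $(1-s)\chi/2-s\Tr(\Omega H_*^2)$. It then needs $-I\preceq H_*\preceq I$, which it derives from the positive integral representation of $\mathcal L_\tau^{-1}$ together with $-\tau\preceq\Delta\preceq\tau$. You pass to the measured form instead. There the corresponding bound $\abs{p_k-q_k}\le p_k+q_k$ comes for free from positivity of probabilities, and the rest is the scalar inequality $1/(u+v)\ge1/u-v/u^2$.

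The two arguments are the same estimate seen from two sides. If your optimal measurement is the eigenbasis of $H_*$, then $(p_k-q_k)/(p_k+q_k)$ are the eigenvalues of $H_*$. Your error term $2s\sum_k\omega_k(p_k-q_k)^2/(p_k+q_k)^2$ is then exactly the paper's $2s\Tr(\Omega H_*^2)$.

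What your route costs is the identification of $\chi_{\mathrm{SLD}}$ with a supremum over POVMs. \Cref{lem:sld-comparison} does not state this; it only appears through a citation in its proof. You use both directions: attainment for $\chi$, and the upper bound by $\chi_s$ for every POVM. Both are short to supply:
\begin{itemize}
\item With $L=\mathcal L_\Sigma^+(Z)$, one has $\Tr[P_kZ]=\operatorname{Re}\Tr[P_k\Sigma L]$.
\item Cauchy--Schwarz gives $(\Tr P_kZ)^2\le\Tr[P_k\Sigma]\,\Tr[\Sigma LP_kL]$.
\item Summing over $k$ bounds the classical value by $\Tr[\Sigma L^2]=\Tr[ZL]=\chi_{\mathrm{SLD}}$.
\item Equality holds for the projective measurement onto the eigenvectors of $L$, completed on $\ker\Sigma$.
\end{itemize}

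Your fallback remark is muddled but harmless. No factor $(1-s)^{-1}$ and no clipping are needed: rescaling only worsens the bound, and clipping destroys optimality. The unscaled $H_*$ already has operator norm at most one, and plugging it in directly is precisely the paper's proof.
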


The proof is given in Appendix~\ref{app:preconditioning}.

We now carry the routing gap through the mixing: the gap narrows slightly but stays constant. The corollary also fixes the two constants $\theta$ and $g$ used from here on.

\begin{corollary}[Regularized routing gap]\label{cor:regularized-gap}
For every input pair of a protocol with error at most $0.09$,
\begin{equation}\label{eq:smoothed-gap}
\begin{aligned}
 f(x,y)=0
 &\Longrightarrow\chi_s(x,y)\le0.95\cdot1.2=1.14,\\
 f(x,y)=1
 &\Longrightarrow\chi_s(x,y)\ge0.95\cdot1.6562-0.1=1.47339.
\end{aligned}
\end{equation}
Define the midpoint $\theta$ of the two bounds and the half-gap $g$, half their distance, by
\begin{equation}\label{eq:theta-gap}
 \theta=\frac{1.14+1.47339}{2}=1.306695,
 \qquad
 g=\frac{1.47339-1.14}{2}=0.166695.
\end{equation}
Then $\chi_s(x,y)\le\theta-g$ when $f(x,y)=0$, and $\chi_s(x,y)\ge\theta+g$ when $f(x,y)=1$.
\end{corollary}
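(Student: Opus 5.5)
This corollary follows by chaining the three estimates already in place: \Cref{prop:trace-gap}, \Cref{lem:sld-comparison} (together giving \Cref{eq:raw-gap}), and the two-sided bound of \Cref{lem:smoothing}. The plan is to treat the two routing cases separately, apply the appropriate side of \Cref{lem:smoothing} in each, and then do the arithmetic for $\theta$ and $g$. No new estimate is needed; the only care required is to use the correct side of each inequality and to track the numerical constants at $s=0.05$.

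\emph{Case $f(x,y)=0$.} \Cref{eq:raw-gap} gives $\chi(x,y)\le 1.2$, obtained from the upper half $\chi\le\norm{\rho^0_{x,y}-\rho^1_{x,y}}_1$ of \Cref{lem:sld-comparison} combined with $\norm{\rho^0_{x,y}-\rho^1_{x,y}}_1\le4\sqrt{0.09}=1.2$. The upper bound $\chi_s\le(1-s)\chi$ of \Cref{lem:smoothing} then yields
\[
 \chi_s(x,y)\le 0.95\cdot 1.2=1.14 .
\]

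\emph{Case $f(x,y)=1$.} \Cref{eq:raw-gap} gives $\chi(x,y)\ge 1.82^2/2=1.6562$, from the lower half of \Cref{lem:sld-comparison} and $\norm{\rho^0_{x,y}-\rho^1_{x,y}}_1\ge2(1-0.09)=1.82$. The lower bound $\chi_s\ge(1-s)\chi-2s$ of \Cref{lem:smoothing} with $2s=0.1$ then gives
\[
 \chi_s(x,y)\ge 0.95\cdot1.6562-0.1=1.57339-0.1=1.47339 .
\]

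\emph{Constants.} With the two bounds $1.14$ and $1.47339$, direct computation gives $\theta=(1.14+1.47339)/2=1.306695$ and $g=(1.47339-1.14)/2=0.166695$. By construction $\theta-g=1.14$ and $\theta+g=1.47339$, which are exactly the final assertions of the corollary. Since $g>0$, the gap indeed survives the mixing.

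The only step that needs a check is that the lower-bound direction of \Cref{lem:smoothing} is applied to the lower bound on $\chi$, and the upper-bound direction to the upper bound. Both maps $\chi\mapsto(1-s)\chi$ and $\chi\mapsto(1-s)\chi-2s$ are increasing in $\chi$, so the inequalities propagate in the correct direction.
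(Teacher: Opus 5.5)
Your proposal is correct and takes the same route as the paper: you combine \Cref{eq:raw-gap} with the matching side of \Cref{lem:smoothing} in each routing case, and $\theta$ and $g$ then follow by direct arithmetic. Your numerical checks ($0.95\cdot1.6562-0.1=1.47339$, $\theta=1.306695$, $g=0.166695$) are all correct.
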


\begin{proof}
The two numerical bounds follow by combining \Cref{lem:smoothing} with \Cref{eq:raw-gap}. The remaining claims follow from the definitions of $\theta$ and $g$.
\end{proof}

Here $g/2$ is the allowed error in the matrix approximation, not the protocol error. Thus, for every protocol with error at most $0.09$, comparing an approximated entry with $\theta$ still determines $f(x,y)$: values below $\theta$ correspond to $f(x,y)=0$, while values above $\theta$ correspond to $f(x,y)=1$. In other words, a single low-rank matrix computes the entire routing function by one threshold test per entry. The numerical constants are not optimized; the closing remark of Appendix~\ref{app:preconditioning} records that the argument tolerates every fixed protocol error below approximately $0.126$.

\begin{proposition}[Approximate rank of the routing statistic]\label{prop:chi-rank}
There is an absolute constant $C_1$ such that, for every $n$ and every routing statistic constructed above from a pure resource of Schmidt rank $d$,
\[
 \rank_{g/2}\bigl([\chi_s(x,y)]_{x,y}\bigr)
 \le
 \exp\bigl(C_1d\log(2d)\bigr)\log(2N),
\]
where $N=2^n$ is the number of rows and of columns of the statistic matrix and $g$ is the constant in \Cref{eq:theta-gap}. No correctness assumption is needed for this structural approximation.
\end{proposition}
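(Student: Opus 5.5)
The plan follows the four-part outline of this section. Throughout, write $\mathcal E^s$-type objects for the mixed states and set $\Omega=\Omega_{x,y}=\alpha_x\otimes\beta_y$.

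\emph{Step 1: precondition.} Conjugate by $\Omega^{-1/2}$ on $\supp\Omega$. By \Cref{lem:smoothing}, $\supp\tau^s_{x,y}=\supp\Omega$. Define
\[
 K=\Omega^{-1/2}\tau^s_{x,y}\Omega^{-1/2},
 \qquad
 D=\Omega^{-1/2}\Delta^s_{x,y}\Omega^{-1/2}.
\]
Then \Cref{eq:product-sandwich} gives $aI\preceq K\preceq bI$ on $\supp\Omega$. I then need the value of the statistic in these units. The SLD equation $\mathcal L_{\tau}(H)=\Delta$ does not transform covariantly under this congruence, because $\mathcal L$ averages left and right multiplication. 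I therefore rewrite $\chi_s/2$ as
\[
 \Tr\bigl[\Delta\,\mathcal L_\tau^+(\Delta)\bigr]
 =\inner{\widehat D}{\mathcal L_{\widehat K}^{+}(\widehat D)},
\]
with the superoperator acting in a suitably weighted Hilbert--Schmidt inner product. Concretely, I use the superoperator $\mathcal M=\mathcal L_\Omega^{-1/2}\mathcal L_\tau\mathcal L_\Omega^{-1/2}$. Operator monotonicity of $T\mapsto\mathcal L_T$ in the Loewner order on superoperators, together with the sandwich, gives $a\,\id\preceq\mathcal M\preceq b\,\id$. Moreover $\chi_s/2=\inner{Z}{\mathcal M^{-1}Z}$ with $Z=\mathcal L_\Omega^{-1/2}(\Delta^s)$.

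\emph{Step 2: polynomial inversion.} On $[a,b]$, approximate $1/t$ uniformly by a polynomial $p(t)=\sum_{j\le k}c_j(1-t/b)^j$ to accuracy $\delta$. The truncated Neumann series suffices, with error $(1-a/b)^{k+1}/a$. Since $b=O(d^2)$, reaching $\delta=O(1)$ relative to $\chi\le2$ needs degree $k=O(d^2)$; a Chebyshev polynomial would give $k=O(d)$. Because $\inner{Z}{Z}$ is bounded (it is at most $\chi_s\cdot b$), the entrywise error is below $g/4$. Each entry thus becomes $\sum_j c_j m_j(x,y)$, where
\[
 m_j(x,y)=\inner{Z}{(\id-\mathcal M/b)^jZ}.
\]
The coefficients satisfy $|c_j|\le 2^{O(k)}$.

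\emph{Step 3: moment matrices.} Expand everything using \Cref{eq:compatible-output}. The operators $\tau^s$ and $\Delta^s$ are sums of at most $d^2+1$ terms of the form $X\otimes Y$ (the extra term being $\Omega$). The map $\mathcal L_\Omega$ factors as products of left and right multiplications by $\alpha_x\otimes\beta_y$.

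The obstacle is $\mathcal L_\Omega^{-1/2}$, which is not a tensor product of an $x$-part and a $y$-part. I would first try to avoid it by replacing $\mathcal L_\Omega$ with the product superoperator $\mathcal S=\Lambda_{\alpha}\otimes\Lambda_{\beta}$, where $\Lambda_T(X)=T^{1/2}XT^{1/2}$. The sandwich then transfers to $\mathcal L_\tau$ relative to $\mathcal S$, with constants worse by absolute factors. This follows from the standard comparison between the SLD and the square-root metric, which differ by at most a factor $2$.

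With a product preconditioner, $m_j(x,y)$ is a trace of a product of $2j+2$ factors. Each factor is a sum of at most $O(d^2)$ terms of the form ($x$-superoperator)$\otimes$($y$-superoperator). Multiplying out gives at most $d^{O(j)}$ terms, and each term is $u(x)v(y)$. The moment matrix therefore has exact rank at most $d^{O(j)}$, with no dependence on the message dimensions. Summing over $j\le k=O(d)$ gives rank $\exp(O(d\log d))$.

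\emph{Step 4: control term sizes and obtain the $\log(2N)$ factor.} The factor $\log(2N)$ in the statement suggests the intended route. The exact rank above may blow up through the coefficient magnitudes, so each rank-one term should be bounded in size, with $|u(x)|,|v(y)|\le B$. One then applies Johnson--Lindenstrauss, or equivalently a random-sign sketch of the inner product $\inner{u(x)}{v(y)}$, reducing to dimension $O(B^2 g^{-2}\log N)$ at entrywise error $g/4$. Here $B$ is bounded via the Schmidt weights $\sqrt{\lambda_r\lambda_t}$ and the sandwich constants, giving $B^2=\exp(O(d\log d))$.

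I expect Step 3 to be the main difficulty. Making the inverse preconditioner factor across $x$ and $y$ while keeping the sandwich $a\preceq\mathcal M\preceq b$ and norm control is the crux; the rest is bookkeeping.
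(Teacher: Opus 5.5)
Your Steps 1 and 2 agree with the paper: precondition by $\mathcal L_\Omega^{-1/2}$ so the spectrum lies in $[a,b]$, then use a Chebyshev polynomial of degree $O(d)$. Note that the superoperator must be $\mathcal L_{\tau^s}$, not $\mathcal L_\tau$, for the lower bound $a$ to hold.

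\textbf{The gap is in Step 3.} You correctly locate the obstacle: in the eigenbasis of $\Omega$, $\mathcal L_\Omega^{-1}$ multiplies the matrix unit $\ketbra kl$ by $2/(\omega_k+\omega_l)$, which does not split into an $x$-part and a $y$-part. Your replacement $\mathcal S=\Lambda_\alpha\otimes\Lambda_\beta=\Lambda_\Omega$ multiplies the same matrix unit by $\sqrt{\omega_k\omega_l}$. So $\Lambda_\Omega\preceq\mathcal L_\Omega$ holds by AM--GM, but $\mathcal L_\Omega\preceq c\,\Lambda_\Omega$ would need $\tfrac12(\sqrt r+1/\sqrt r)\le c$ for every ratio $r$ of eigenvalues of $\Omega$. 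Nothing bounds these ratios in terms of $d$. For instance, Alice can append to her message a locally prepared qubit in state $\operatorname{diag}(1-\epsilon,\epsilon)$ without affecting correctness. Since $\mathcal L_{\tau^s}\succeq a\mathcal L_\Omega$, the operator $\mathcal S^{-1/2}\mathcal L_{\tau^s}\mathcal S^{-1/2}$ then has largest eigenvalue at least $a\max_{k,l}(\omega_k+\omega_l)/(2\sqrt{\omega_k\omega_l})$, which is unbounded. No polynomial of degree depending only on $d$ can approximate its inverse, so dimension-independence is lost.

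The factor-two comparison you recall is real, but it holds between $\mathcal L_\Omega$ and the Wigner--Yanase operator $\tfrac14(L_\Omega^{1/2}+R_\Omega^{1/2})^2$, where $L_\Omega,R_\Omega$ are left and right multiplication. That operator is not a product: its inverse has the same kind of coupled denominators $4/(\sqrt{\omega_k}+\sqrt{\omega_l})^2$. Switching the statistic itself to $\Lambda_{\tau^s}$ does not help either, because $\Lambda_{\tau^s}$ is not linear in $\tau^s$ and so loses the product expansion.

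\textbf{The missing idea} is how to remove these denominators while keeping $\mathcal L_\Omega$ as the preconditioner. The paper proceeds as follows.
\begin{enumerate}
\item After expanding $q_j$ over left/right words, the only non-separable factor is $\prod_v\omega_v/\prod_e(\omega_v+\omega_{v'})$ over a tree.
\item Edge weights with $\vartheta_{e,v}+\vartheta_{e,v'}=1$, and leftover exponents $h_v\ge0$ summing to one, turn each edge factor into $g_\vartheta(\log\omega_v-\log\omega_{v'})$ with $g_\vartheta(z)=e^{\vartheta z}/(1+e^z)$.
\item The Fourier representation of $g_\vartheta$, whose $L^1$ norm is logarithmic in $1/\min\{\vartheta,1-\vartheta\}$, writes this as an integral of imaginary powers $\omega_v^{it}\omega_{v'}^{-it}$. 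These do factor, since $\Omega^z=\alpha_x^z\otimes\beta_y^z$.
\item Noncommutative H\"older, using $\sum_v h_v=1$, bounds each integrand by $\norm{D}_{\mathrm{op}}^2\norm{C}_{\mathrm{op}}^j$ independently of dimension.
\item Each entry is then an integral of rank-$R_j$ matrices with bounded entries. Sampling with Hoeffding's inequality and a union bound over the $N^2$ entries gives the approximate rank, and this is exactly where $\log(2N)$ comes from.
\end{enumerate}

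Your Step 4 cannot play this role. A sum of $R$ rank-one matrices has rank at most $R$ whatever the coefficient sizes. So if Step 3 had produced an exact dimension-free decomposition, you would need neither a sketch nor a $\log(2N)$ factor. The logarithm signals a continuous superposition of low-rank matrices that has to be sampled, and your argument never produces one.
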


The remaining three subsections construct this approximation; the supporting proofs are given in Appendices~\ref{app:preconditioning} and~\ref{app:moment-rank}.

\subsection{Product-state preconditioning}\label{subsec:preconditioning}

This subsection carries out the division by the product state. We begin with the statistic as a whole, treating Hermitian operators as vectors: for Hermitian operators $U$ and $V$, write $\inner{U}{V}_{\HS}=\Tr(UV)$ for their Hilbert--Schmidt inner product. On this space, dividing by the product state means conjugating the superoperators by $\mathcal L_\Omega^{-1/2}$.

\begin{definition}[Preconditioned Hilbert--Schmidt operators]
For each input pair, suppress the indices $x,y$ and restrict the physical operators to $\supp\Omega$. Set the ordinary operator $E=\tau^s-\Omega$. On the Hilbert--Schmidt space over this support, define the superoperators
\[
 A=\mathcal L_\Omega^{-1/2}\mathcal L_{\tau^s}\mathcal L_\Omega^{-1/2},
 \qquad
 B=A-I_{\HS}=\mathcal L_\Omega^{-1/2}\mathcal L_E\mathcal L_\Omega^{-1/2},
\]
where $I_{\HS}$ is the identity superoperator, and define
\[
 u=\mathcal L_\Omega^{-1/2}(\Delta^s)
\]
in the same Hilbert--Schmidt space. All inverse powers are taken on the indicated support. When $E$, the resulting superoperators, or $u$ are used on the full spaces, they are extended by zero.
\end{definition}

The next lemma records what the division achieves. The operator $A$ to be inverted is now bounded on both sides, with bounds depending only on $d$, so its inverse can no longer become arbitrarily large. The statistic has lost nothing: its value is unchanged, merely rewritten through $A^{-1}$ and $u$. The role of the deviation $B$ emerges in \Cref{subsec:moments}: a polynomial substituted for the inverse of $A=I_{\HS}+B$ expands into powers of $B$. The upper bound in \Cref{eq:preconditioned-chi} is recorded for \Cref{subsec:inversion}, where it converts the relative error of a polynomial approximation of the reciprocal into an absolute error per entry.

\begin{lemma}[Preconditioned form of the statistic]\label{lem:preconditioned-form}
For every input pair,
\begin{equation}\label{eq:A-bounds}
 aI_{\HS}\preceq A\preceq bI_{\HS}.
\end{equation}
Moreover,
\begin{equation}\label{eq:preconditioned-chi}
 \frac{\chi_s}{2}=\inner{u}{A^{-1}u}_{\HS}\le1.
\end{equation}
\end{lemma}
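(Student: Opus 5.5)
The plan is to pass the operator sandwich of \Cref{lem:smoothing} through the map $T\mapsto\mathcal L_T$ and then conjugate by $\mathcal L_\Omega^{-1/2}$. Everything is restricted to $\supp\Omega$, which contains $\supp\tau^s$ because $\tau^s\preceq b\Omega$; the extensions by zero therefore change nothing.

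\emph{Step 1 (sandwich for $A$).} For Hermitian $T$ and Hermitian $X$, $\inner{X}{\mathcal L_T(X)}_{\HS}=\Tr(TX^2)$. If $T\succeq0$ this equals $\Tr(XTX)\ge0$, so $T\mapsto\mathcal L_T$ is monotone in the Loewner order on the real Hilbert--Schmidt space of Hermitian operators. This space is the relevant one because $\Delta^s$ is Hermitian and all the superoperators preserve Hermiticity. Applying monotonicity to \Cref{eq:product-sandwich} gives $a\mathcal L_\Omega\preceq\mathcal L_{\tau^s}\preceq b\mathcal L_\Omega$. Next we check that $\mathcal L_\Omega$ is positive definite on operators supported in $\supp\Omega$. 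In an eigenbasis of $\Omega$ with eigenvalues $\omega_i>0$, $\mathcal L_\Omega$ acts on matrix units as multiplication by $(\omega_i+\omega_j)/2>0$, so $\mathcal L_\Omega^{-1/2}$ exists there. Conjugating the sandwich by the positive operator $\mathcal L_\Omega^{-1/2}$ preserves the order and gives \Cref{eq:A-bounds}. In particular $A$ is invertible on this space.

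\emph{Step 2 (rewriting the statistic).} By the definition of $\chi_{\mathrm{SLD}}$ and the identities from \Cref{sec:gap}, $\chi_s/2=\Tr[\Delta^s\mathcal L_{\tau^s}^+(\Delta^s)]$. Since $\tau^s\succeq a\Omega$, the support of $\tau^s$ equals $\supp\Omega$, and $\Delta^s$ lives there. To see the latter, note $\pm\Delta^s\preceq\tau^s$, because $\rho^{0,s},\rho^{1,s}\succeq0$. Hence $\mathcal L_{\tau^s}$ is invertible on this space and the Moore--Penrose inverse is the true inverse. We then substitute
\[
 \mathcal L_{\tau^s}=\mathcal L_\Omega^{1/2}A\mathcal L_\Omega^{1/2},
\qquad
 \mathcal L_{\tau^s}^{-1}=\mathcal L_\Omega^{-1/2}A^{-1}\mathcal L_\Omega^{-1/2}.
\]
Since $\mathcal L_\Omega^{-1/2}$ is self-adjoint for the Hilbert--Schmidt inner product, this yields $\chi_s/2=\inner{u}{A^{-1}u}_{\HS}$.

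\emph{Step 3 (the bound $\le1$).} The upper bound of \Cref{lem:sld-comparison} applied to $\rho^{0,s},\rho^{1,s}$ gives $\chi_s\le\norm{\rho^{0,s}-\rho^{1,s}}_1\le2$. Alternatively, \Cref{lem:smoothing} gives $\chi_s\le(1-s)\chi\le(1-s)\cdot2$. Either way $\chi_s/2\le1$.

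The main obstacle is the bookkeeping of supports, that is, making sure the inverses and Moore--Penrose inverses agree on the restricted space. This rests on $\supp\tau^s=\supp\Omega\supseteq\supp\Delta^s$, which the lower bound $a\Omega\preceq\tau^s$ delivers. A secondary point is to confirm that monotonicity of $T\mapsto\mathcal L_T$ is only needed on Hermitian operators. For complex $X$ the quadratic form $\Tr(X^*(TX+XT))/2$ has real part $\Tr(TXX^*)/2+\Tr(TX^*X)/2\ge0$ for $T\succeq0$, so monotonicity holds on the full space as well.
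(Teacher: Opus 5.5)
Your proposal is correct and follows the paper's proof. You push $a\Omega\preceq\tau^s\preceq b\Omega$ through $T\mapsto\mathcal L_T$ and conjugate by $\mathcal L_\Omega^{-1/2}$, use $a>0$ to identify supports so that the Moore--Penrose inverse is an ordinary inverse, and bound $\chi_s\le\norm{\rho^{0,s}-\rho^{1,s}}_1\le2$ via \Cref{lem:sld-comparison}; you only make explicit the monotonicity check $\inner{X}{\mathcal L_T(X)}_{\HS}=\Tr(TX^2)$ and the support argument via $\pm\Delta^s\preceq\tau^s$, which the paper leaves implicit.
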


The proof is given in Appendix~\ref{app:preconditioning}.

We now divide the physical operators on $X\otimes Y$, Bob's pre-recovery system from \Cref{sec:model}, in which $X$ is Alice's message and $Y$ collects Bob's own outputs, by the product state as well, retaining the separation between the $x$-dependent and $y$-dependent factors. For each input pair, again suppress the indices and define on $\supp\Omega$
\[
 C=\Omega^{-1/2}E\Omega^{-1/2},
 \qquad
 D=\Omega^{-1/2}\Delta^s\Omega^{-1/2}.
\]
The inverse powers are taken on $\supp\Omega$, and $C$ and $D$ are extended by zero on the full physical space. On $\supp\Omega$, these definitions invert to $E=\Omega^{1/2}C\Omega^{1/2}$ and $\Delta^s=\Omega^{1/2}D\Omega^{1/2}$, the forms in which $E$ and $\Delta^s$ enter the moment expansion of \Cref{subsec:moments}. The next definition states exactly what the retained product structure means.

\begin{definition}[Product expansion]
A family of operators $O_{x,y}$ on $X\otimes Y$ has a product expansion of length at most $r$ if there are operator families $O^X_{x,k}$ and $O^Y_{y,k}$ such that
\[
 O_{x,y}=\sum_{k=1}^r O^X_{x,k}\otimes O^Y_{y,k}
\]
for every $(x,y)$.
\end{definition}

The next lemma quantifies what the division by the product state leaves intact: the expansions of $C$ and $D$ are still short, and both operators have operator norm at most $d^2$. The norm bounds are the nontrivial part, since dividing by $\Omega_{x,y}$ enlarges operators wherever $\Omega_{x,y}$ is small; they hold because \Cref{lem:source-domination} bounds everything being divided by $d^2$ times the product state.

\begin{lemma}[Bounded product expansions]\label{lem:whitened-compatibility}
The operators $C_{x,y}$ and $D_{x,y}$ have product expansions with lengths $r_C$ and $r_D$, where $r_C\le d^2+1$ and $r_D\le d^2$. They satisfy $\norm{C_{x,y}}_{\mathrm{op}}\le d^2$ and $\norm{D_{x,y}}_{\mathrm{op}}\le d^2$. These bounds depend only on $d$, not on the dimensions of messages, local outputs, or work systems, and not on the Kraus ranks.
\end{lemma}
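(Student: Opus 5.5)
Both claims come from two sources: the product structure of \Cref{eq:compatible-output}, and the fact that $\Omega_{x,y}=\alpha_x\otimes\beta_y$ is itself a product, so that $\Omega^{-1/2}=\alpha_x^{-1/2}\otimes\beta_y^{-1/2}$ on $\supp\Omega=\supp\alpha_x\otimes\supp\beta_y$.

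\emph{Expansions.} By \Cref{eq:compatible-output},
\[
 \Delta_{x,y}=\sum_{r,t=1}^d\sqrt{\lambda_r\lambda_t}\,\tfrac12\bigl(\mathcal E_x^0-\mathcal E_x^1\bigr)(\ketbra r t)\otimes\mathcal F_y(\ketbra r t),
\]
which is a product expansion of length $d^2$. Likewise $\tau_{x,y}=\sum_{r,t}\sqrt{\lambda_r\lambda_t}\,\overline{\mathcal E}_x(\ketbra rt)\otimes\mathcal F_y(\ketbra rt)$ has length $d^2$. Then $\Delta^s=(1-s)\Delta$, and
\[
 E=\tau^s-\Omega=(1-s)\tau+(s-1)\,\alpha_x\otimes\beta_y
\]
adds one more product term, giving length $d^2+1$. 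Conjugating each term by $\alpha_x^{-1/2}\otimes\beta_y^{-1/2}$ keeps it a product: the $x$-factor becomes $\alpha_x^{-1/2}(\cdot)\alpha_x^{-1/2}$ and the $y$-factor becomes $\beta_y^{-1/2}(\cdot)\beta_y^{-1/2}$. Before conjugating, each factor should be compressed to $\supp\alpha_x$ or $\supp\beta_y$ respectively. This compression is harmless, because $\tau^s$ and $\rho^{q,s}$ are dominated by multiples of $\Omega$ (\Cref{lem:source-domination}, \Cref{lem:smoothing}), so their supports lie inside $\supp\Omega$; hence $\Delta^s$ and $E$ are unchanged when compressed to $\supp\alpha_x\otimes\supp\beta_y$. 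This gives $r_D\le d^2$ and $r_C\le d^2+1$.

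\emph{Norm bounds.} These follow from operator inequalities after conjugation. By \Cref{eq:product-sandwich}, $0\preceq a\Omega\preceq\tau^s\preceq b\Omega$, so
\[
 (a-1)\Omega\preceq E\preceq(b-1)\Omega.
\]
Conjugating by $\Omega^{-1/2}$ gives $(a-1)I\preceq C\preceq(b-1)I$ on $\supp\Omega$. Since $b-1=0.95(d^2-1)\le d^2$ and $1-a=0.95\le d^2$, we get $\norm{C}_{\mathrm{op}}\le d^2$.

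For $D$, note that $\rho^{q,s}\succeq0$ gives $-\tau^s\preceq\Delta^s\preceq\tau^s$. Combined with $\tau^s\preceq b\Omega$, this yields $-bI\preceq D\preceq bI$. Here $b=0.95d^2+0.05\le d^2$ because $d\ge1$, so $\norm{D}_{\mathrm{op}}\le d^2$.

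The only delicate point is the support bookkeeping: the inverse square roots must be well defined, and the compression of the factors must not change the operators. Everything else is direct. None of the constants involves message or work-system dimensions, since only $d$ and the Schmidt structure enter.
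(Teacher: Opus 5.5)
Your proof is correct and follows the paper's argument. You use the same length-$d^2$ expansions of $\Delta$ and $\tau$ from \Cref{eq:compatible-output}, plus one product term for $\Omega$, together with the factorization $\Omega^{-1/2}=\alpha_x^{-1/2}\otimes\beta_y^{-1/2}$. The norm bounds come, as in the paper, from conjugating the sandwich inequalities by $\Omega^{-1/2}$; your route for $D$ through $-\tau^s\preceq\Delta^s\preceq\tau^s\preceq b\Omega$ works as well as the paper's route through $\tau\preceq d^2\Omega$.

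The only real difference is the support bookkeeping. You deduce that $\Delta^s$ and $E$ are supported in $\supp\Omega$ from their domination by multiples of $\Omega$, then compress factorwise using $P_\Omega=P_{\alpha_x}\otimes P_{\beta_y}$. The paper instead shows termwise that every Kraus range of $\mathcal E_x^q$ lies in $\supp\alpha_x$, and that $\mathcal F_y(\rho_R)$ and $\mathcal F_y(I_R)$ share a support, so each factor is already supported correctly. Both routes are valid.
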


The proof is given in Appendix~\ref{app:preconditioning}.

Mixing and preconditioning are now complete. The constant routing gap is preserved, the spectrum of $A$ lies in $[a,b]$, and the physical operators $C$ and $D$ have short product expansions and bounded norms, both controlled by $d$ alone.

\subsection{Moment approximation}\label{subsec:moments}

The spectral bound on $A$ and the product expansions of $C$ and $D$ still have to be combined. We now prepare the moment matrices for the replacement of the inverse by a polynomial: a polynomial in $A=I_{\HS}+B$ is a combination of powers of $B$, so the quantities to control are the numbers $\inner{u}{B^ju}_{\HS}$, one per input pair and degree $j$. The definition below records these matrices in the form used by the rank analysis, and the lemma after it confirms that this form equals $\inner{u}{B^ju}_{\HS}$.

\begin{definition}[Moment matrices]
For $j\ge0$, define the real matrix $q_j$ by
\[
 q_j(x,y)=
 \inner{\Delta^s}{
 \mathcal L_\Omega^{-1}
 (\mathcal L_E\mathcal L_\Omega^{-1})^j
 (\Delta^s)}_{\HS}.
\]
\end{definition}

\begin{lemma}[Moment identity]\label{lem:moment-identity}
For every $j\ge0$, $q_j(x,y)=\inner{u}{B^ju}_{\HS}$.
\end{lemma}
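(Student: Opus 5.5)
We will prove the identity by unwinding the definitions of $u$ and $B$ and using that $\mathcal L_\Omega^{-1/2}$ is self-adjoint with respect to $\inner{\cdot}{\cdot}_{\HS}$ on the Hilbert--Schmidt space over $\supp\Omega$. All objects live on this support, where $\Omega$ is invertible. The relevant operators vanish outside it: $E=\tau^s-\Omega$ and $\Delta^s$ are supported in $\supp\Omega$ by \Cref{eq:product-sandwich}, since $\tau^s\preceq b\Omega$ and $\pm\Delta^s\preceq\tau^s$. So restricting to the support loses nothing. The extension by zero means $\mathcal L_\Omega^{-1}$ in the definition of $q_j$ is read as the inverse on that support, which is also its Moore--Penrose inverse.

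\textbf{Self-adjointness.} The superoperator $\mathcal L_\Omega$ is self-adjoint for the Hilbert--Schmidt inner product, since $\Tr[U(\Omega V+V\Omega)]=\Tr[(\Omega U+U\Omega)V]$ by cyclicity. It is also positive definite on the support: in an eigenbasis of $\Omega$ it acts on matrix units $\ketbra ij$ by multiplication by $(\omega_i+\omega_j)/2>0$. Hence its inverse square root $\mathcal L_\Omega^{-1/2}$ is well defined, self-adjoint, and diagonal in the same basis.

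\textbf{Expanding $B^j$.} By definition $B=\mathcal L_\Omega^{-1/2}\mathcal L_E\mathcal L_\Omega^{-1/2}$. Inserting $\mathcal L_\Omega^{-1/2}\mathcal L_\Omega^{-1/2}=\mathcal L_\Omega^{-1}$ between consecutive factors gives
\[
 B^j=\mathcal L_\Omega^{-1/2}(\mathcal L_E\mathcal L_\Omega^{-1})^{j-1}\mathcal L_E\mathcal L_\Omega^{-1/2}
\]
for $j\ge1$, and $B^0=I_{\HS}$.

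\textbf{Substituting.} With $u=\mathcal L_\Omega^{-1/2}(\Delta^s)$, we move one factor $\mathcal L_\Omega^{-1/2}$ across the inner product by self-adjointness. This gives
\[
 \inner{u}{B^ju}_{\HS}=\inner{\Delta^s}{\mathcal L_\Omega^{-1/2}B^j\mathcal L_\Omega^{-1/2}(\Delta^s)}_{\HS}.
\]
Combining the outer factors gives
\[
 \mathcal L_\Omega^{-1/2}B^j\mathcal L_\Omega^{-1/2}=\mathcal L_\Omega^{-1}(\mathcal L_E\mathcal L_\Omega^{-1})^{j},
\]
which is exactly the expression defining $q_j(x,y)$. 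The case $j=0$ reads $\inner{u}{u}_{\HS}=\inner{\Delta^s}{\mathcal L_\Omega^{-1}\Delta^s}_{\HS}$, the same computation. All quantities are real because every operator involved is Hermitian and $\mathcal L_E$, $\mathcal L_\Omega^{\pm1}$ preserve Hermiticity.

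\textbf{Main obstacle.} The only delicate point is support bookkeeping: checking that $\mathcal L_E$ maps operators supported on $\supp\Omega$ into themselves, and that the zero extensions agree with the restricted computation. This holds because $E$ is supported in $\supp\Omega$, so $EX+XE$ stays inside whenever $X$ does. The rest is formal algebra.
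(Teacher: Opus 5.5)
Your proof is correct and is the paper's argument: substitute the definitions of $u$ and $B$, merge adjacent factors $\mathcal L_\Omega^{-1/2}$ into $\mathcal L_\Omega^{-1}$, and use self-adjointness of $\mathcal L_\Omega^{-1/2}$ to move the outer factor across the inner product. Your support bookkeeping spells out what the paper leaves implicit. One caveat on that bookkeeping: the zero-extended inverse agrees with the full-space Moore--Penrose inverse of $\mathcal L_\Omega$ only on operators supported in $\supp\Omega$, since the two differ on the off-diagonal blocks. Those are the only operators your computation uses, so the argument is unaffected.
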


The proof is given in Appendix~\ref{app:preconditioning}.

We now describe the mechanism that yields an entrywise low-rank approximation of each moment matrix, with a rank independent of the dimensions of $X$ (Alice's message) and $Y$ (Bob's outputs); all supporting proofs are in Appendix~\ref{app:moment-rank}. Through the identities $E=\Omega^{1/2}C\Omega^{1/2}$ and $\Delta^s=\Omega^{1/2}D\Omega^{1/2}$, each entry $q_j(x,y)$ is built from matrix elements of $C$ and $D$ and from eigenvalues of $\Omega_{x,y}$. Almost every ingredient separates into an $x$-dependent and a $y$-dependent factor: the product expansions split $C$ and $D$, and each eigenvalue of $\Omega_{x,y}=\alpha_x\otimes\beta_y$ is a product of one eigenvalue from each side. The exception is the denominators contributed by the inverses $\mathcal L_\Omega^{-1}$, each a sum of two eigenvalues; sums do not factor, and removing them is the technical core of the argument. Two ideas remove them. First, the eigenvalue powers in the numerator can be redistributed so that each denominator is matched with a weighted product of its own two eigenvalues, with nonnegative leftover powers summing to one (\Cref{lem:fractional-orientation}). Second, a Fourier identity for the logistic kernel writes each matched ratio as an integral of imaginary eigenvalue powers, and imaginary powers separate exactly as real powers do (\Cref{lem:logistic-fourier}). Each entry thereby becomes a single integral whose integrand separates completely.

The separated integrand has the two properties that make the rank dimension-free. At each integration point it is a sum of at most $r_D^2r_C^j$ products of an $x$-dependent and a $y$-dependent number, a matrix of rank at most $r_D^2r_C^j$; and because the leftover powers are nonnegative and sum to one, the noncommutative H\"older inequality bounds it by $\norm{D_{x,y}}_{\mathrm{op}}^2\norm{C_{x,y}}_{\mathrm{op}}^j$. Replacing the integral by an average over $O(\eta^{-2}\log(2N))$ sampled points then achieves entrywise error $\eta$ with positive probability, by Hoeffding's inequality. \Cref{lem:whole-path} carries out this argument and proves the exact parameterized rank bound; substituting the product lengths and norm bounds of \Cref{lem:whitened-compatibility} gives the following corollary.

\begin{corollary}[Moment rank in terms of $d$]\label{cor:moment-resource}
For the routing construction above, there is an absolute constant $C_0$ such that for every $j\ge0$ and every $0<\eta\le1$,
\[
 \rank_\eta(q_j)
 \le
 \exp\left(C_0(j+1)\bigl(\log(2d)+\log\log(j+3)\bigr)\right)
 \eta^{-2}\log(2N).
\]
In particular, for every fixed $c_*>0$, the bound is $\exp(O(d\log(2d)))\eta^{-2}\log(2N)$ uniformly over $0\le j\le c_*d$, where the implied constant may depend on $c_*$.
\end{corollary}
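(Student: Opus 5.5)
Corollary~\ref{cor:moment-resource} follows by substituting the parameters of \Cref{lem:whitened-compatibility} into the parameterized bound of \Cref{lem:whole-path}. I take that bound in the form the mechanism above produces. Sampling $T=O(\eta^{-2}M^2\log(2N))$ integration points, each contributing a matrix of rank at most $r_D^2r_C^j$, gives
\[
 \rank_\eta(q_j)\le r_D^2r_C^j\cdot O\bigl(\eta^{-2}M^2\log(2N)\bigr).
\]
Here $M$ bounds the separated integrand pointwise. By H\"older it is $\norm{D}_{\mathrm{op}}^2\norm{C}_{\mathrm{op}}^j$, multiplied by the total mass of the product of the $j+1$ logistic Fourier kernels. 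Hoeffding's inequality on each of the $N^2$ entries, with a union bound, gives failure probability below one once $T\gtrsim M^2\eta^{-2}\log(2N)$.

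The substitution is routine: $r_D\le d^2$, $r_C\le d^2+1\le 2d^2$, and both norms are at most $d^2$. The rank factor is then at most $d^4(2d^2)^j$, and $M^2\le d^{4(j+2)}\kappa_j^2$, where $\kappa_j$ bounds the integral weight. Every $d$-dependent factor is thus $\exp(O((j+1)\log(2d)))$, absorbed by choosing $C_0$ large.

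The one point needing care is $\kappa_j$, which is where the $\log\log(j+3)$ term comes from. The logistic kernel's Fourier transform is a $\operatorname{sech}$-type density with $L^1$ norm $O(1)$. The issue is that each of the $j+1$ denominators carries a fractional weight from \Cref{lem:fractional-orientation}, and the constants in \Cref{lem:logistic-fourier} may depend on these weights, for instance through factors like $1/\sin(\pi w)$ when a weight $w$ approaches an endpoint. I would check that the orientation lemma keeps all weights inside a window $[c/(j+3),1-c/(j+3)]$, or otherwise bounds the product of these constants by $(\log(j+3))^{O(j+1)}$. This is the step I expect to be the obstacle. If a weight can degenerate, I would truncate the integration domain to $|t|\le O(\log(j+3))$ and control the tails separately, which yields the same bound.

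The ``in particular'' clause is then immediate. For $0\le j\le c_*d$ we have $j+1\le (c_*+1)d$, and $\log\log(j+3)\le \log\log(c_*d+3)=O_{c_*}(\log(2d))$. The exponent is therefore $O_{c_*}(d\log(2d))$, uniformly in $j$ over that range.
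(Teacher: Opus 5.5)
Your route is the paper's: substitute $r_D\le d^2$, $r_C\le d^2+1$ and $K_C,K_D\le d^2$ from \Cref{lem:whitened-compatibility} into \Cref{lem:whole-path}, absorb the $d$-dependent factors, the ceiling and the remaining fixed factors into $C_0$, and the ``in particular'' clause then follows exactly as you say. The step you flag is already settled in \Cref{lem:whole-path}, which carries the Fourier mass explicitly as $V_j=2^{j+1}\bigl(C_F(1+\log(2j+4))\bigr)^{j+1}$, for two reasons: \Cref{lem:fractional-orientation} puts every weight in $[\tfrac1{2(j+1)},1-\tfrac1{2(j+1)}]$, and \Cref{lem:logistic-fourier} bounds $\tfrac1{2\pi}\norm{\widehat g_\vartheta}_1$ by $C_F\bigl(1+\log\tfrac1{\min\{\vartheta,1-\vartheta\}}\bigr)$. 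It is this logarithmic $L^1$ dependence, and not the sup-norm value $\pi/\sin(\pi\vartheta)$ you mention (which would only give $\exp(O((j+1)\log(j+3)))$), that produces the $\log\log(j+3)$ term, so no truncation is needed; truncating large $\abs{t}$ would not help in any case, since that mass blows up at small $\abs{t}$.
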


The substitution of the routing parameters into \Cref{lem:whole-path} is recorded in Appendix~\ref{app:moment-rank}.

Thus every moment matrix needed for the polynomial approximation has a rank bound depending only on $d$, its degree, the target entrywise error, and $N$.

\subsection{Polynomial inversion}\label{subsec:inversion}

We now combine the moment matrices to approximate $\chi_s$. The remaining inverse is $A^{-1}$. Since the spectrum of $A$ lies in $[a,b]$, we can approximate the reciprocal $\lambda\mapsto1/\lambda$ uniformly on this interval by a polynomial $p_m$. Expanding $p_m(A)=p_m(I_{\HS}+B)$ expresses the approximation through the moments $q_0,\ldots,q_m$: if $p_m(1+z)=\sum_{j=0}^mc_jz^j$, then $\inner{u}{p_m(A)u}_{\HS}=\sum_{j=0}^mc_jq_j$ by \Cref{lem:moment-identity}. We need both a small reciprocal-approximation error, to preserve the routing gap, and control of the coefficients $c_j$, the shifted coefficients, because they multiply the accumulated errors of the moment approximations. The next lemma gives both properties.

\begin{lemma}[Chebyshev approximation of the reciprocal]\label{lem:chebyshev}
For every $0<\delta<1$, there is a polynomial $p_m$ of degree $m=O(\sqrt{b/a}\log\frac2\delta)=O(d\log\frac2\delta)$ satisfying $\sup_{\lambda\in[a,b]}\abs{1-\lambda p_m(\lambda)}\le\delta$. One such polynomial is obtained as follows: for $k=m+1$, set
\[
 r_k(\lambda)=
 \frac{T_k((b+a-2\lambda)/(b-a))}
 {T_k((b+a)/(b-a))},
 \qquad
 p_m(\lambda)=\frac{1-r_k(\lambda)}{\lambda}.
\]
For this choice, if $p_m(1+z)=\sum_{j=0}^m c_jz^j$, then $\sum_{j=0}^m\abs{c_j}\le\exp(O(m))$, with an absolute implied constant.
\end{lemma}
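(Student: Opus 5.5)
We will verify the two claims separately: the uniform approximation property, and then the bound on the shifted coefficients. Both rest on elementary properties of the Chebyshev polynomial $T_k$.

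\emph{Approximation.} By construction $1-\lambda p_m(\lambda)=r_k(\lambda)$. The affine map $\lambda\mapsto(b+a-2\lambda)/(b-a)$ sends $[a,b]$ onto $[-1,1]$, where $\abs{T_k}\le1$. Hence
\[
 \sup_{[a,b]}\abs{r_k}\le 1/T_k(\mu),
 \qquad
 \mu=(b+a)/(b-a)=(\kappa+1)/(\kappa-1),
 \qquad
 \kappa=b/a.
\]
The standard formula $T_k(\mu)=\tfrac12\bigl[(\mu+\sqrt{\mu^2-1})^k+(\mu-\sqrt{\mu^2-1})^k\bigr]$ gives
\[
 T_k(\mu)\ge\tfrac12\left(\frac{\sqrt\kappa+1}{\sqrt\kappa-1}\right)^k\ge\tfrac12 e^{2k/\sqrt\kappa}.
\]
It therefore suffices to take $k=\lceil\tfrac{\sqrt\kappa}{2}\log\tfrac2\delta\rceil$. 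From \Cref{lem:smoothing}, $b/a=19d^2+1$, so $\sqrt\kappa=O(d)$. This gives $m=k-1=O(d\log\tfrac2\delta)$.

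We also check that $p_m$ is a polynomial of degree $m$. The numerator $1-r_k(\lambda)$ vanishes at $\lambda=0$, since $r_k(0)=T_k(\mu)/T_k(\mu)=1$. So $\lambda$ divides it, and the quotient has degree $k-1=m$.

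\emph{Coefficient bound.} Write $P(z)=p_m(1+z)=\sum_j c_jz^j$. We bound $\sum_j\abs{c_j}$ by the sum of absolute coefficients of the numerator $1-r_k(1+z)$, up to controlled factors. The plan has three steps.

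First, bound $r_k(1+z)$ as a polynomial in $z$. The argument of $T_k$ is $w=\mu_1-\gamma z$, with $\gamma=2/(b-a)\le 2$ and $\abs{\mu_1}=\abs{(b+a-2)/(b-a)}\le 1$. We may assume $b-a\ge1$; when $d=1$ these quantities are still $O(1)$, which is enough. The coefficients of $T_k$ in the monomial basis have absolute sum $T_k(i)$ up to sign, which is at most $(1+\sqrt2)^k$. Substituting $w=\mu_1-\gamma z$ multiplies the absolute coefficient sum by at most $(\abs{\mu_1}+\gamma)^k\le 3^k$. Dividing by $T_k(\mu)\ge1$ only helps. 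So $1-r_k(1+z)$ has absolute coefficient sum at most $1+(3(1+\sqrt2))^k=e^{O(m)}$.

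Second, divide by $\lambda=1+z$. This is the step needing care, because dividing a power series by $1+z$ can make coefficients grow. However, the quotient here is exactly a polynomial, since the numerator vanishes at $z=-1$. Write $N(z)=\sum_{j=0}^k n_jz^j$ for the numerator. Synthetic division gives the quotient coefficients as alternating partial sums,
\[
 c_j=\sum_{i>j}(-1)^{i-j-1}n_i,
\]
so $\abs{c_j}\le\sum_i\abs{n_i}$.

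Third, combine the two steps. Summing over $j$ gives
\[
 \sum_j\abs{c_j}\le(m+1)\,e^{O(m)}=e^{O(m)},
\]
with absolute constants.

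The main obstacle is getting uniform constants in the first step, not the division. If the crude coefficient bound for $T_k$ turns out to be awkward, we will instead use Cauchy estimates. The polynomial $P$ has degree $m$, and $\abs{P}$ can be bounded on the circle $\abs z=1$, because $1+z$ ranges over a disc of radius $1$ around $1$. On that disc $\abs{T_k(w)}\le(\abs w+\sqrt{\abs w^2+1})^k=e^{O(k)}$, since $\abs w=O(1)$. The division by $1+z$ is handled by the maximum principle applied on a slightly larger circle. Each coefficient then satisfies $\abs{c_j}\le\max_{\abs z=1}\abs P=e^{O(m)}$.
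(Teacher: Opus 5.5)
Your proposal is correct and follows essentially the same route as the paper: the Chebyshev growth estimate $T_k(\mu)\ge\frac12\bigl(\frac{\sqrt\kappa+1}{\sqrt\kappa-1}\bigr)^k$ gives the degree, then you bound the $\ell_1$ coefficient norm of $1-r_k(1+z)$ and divide exactly by $1+z$ using alternating partial sums, so that $\sum_j\abs{c_j}\le k\,e^{O(k)}$. The only cosmetic difference is in the coefficient bound: the paper bounds the shifted Chebyshev polynomial through the three-term recurrence, getting $9^k$, whereas you use $\sum_j\abs{t_{k,j}}=\abs{T_k(i)}\le(1+\sqrt2)^k$ together with submultiplicativity under the affine substitution; your Cauchy-estimate fallback is not needed.
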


The proof is given in Appendix~\ref{app:moment-rank}.

Combining this polynomial with the moment bounds proves \Cref{prop:chi-rank}, the approximation of the matrix $[\chi_s(x,y)]_{x,y}$ within $g/2$ per entry by a matrix of low rank. In outline, the polynomial error and the accumulated moment errors are each kept below $g/4$, so every entry moves by at most $g/2$, and the ranks of the $m+1$ moment approximations add up to the stated bound.

Every routing protocol in the normal form with Schmidt rank $d$ and error at most $0.09$ now yields a matrix of rank at most $\exp(C_1d\log(2d))\log(2N)$ that computes the routing function by one simple threshold comparison per entry. The rank sees neither the messages nor the local systems, and correctness enters only through the constant routing gap, never through the exact zeros on which the earlier rank arguments~\cite{ranklower} rely: the robustness earned by the routing gap thus survives the rank reduction. The low-rank approximation itself exists for every protocol in the normal form, correct or not; correctness only places each entry in the range belonging to its routing case. The next section uses this freedom when it applies the construction to protocols whose error is larger and controlled only on average.

\section{Entanglement lower bound}\label{sec:formation}

The analysis of \Cref{sec:gap,sec:rank} applies to protocols with a pure shared state, while the pre-shared resource $\rho_{LR}$ may be mixed. Purifying $\rho_{LR}$ can increase its entanglement of formation, so a lower bound for the purification would not transfer to the original state. We therefore work with the ensembles of pure states producing $\rho_{LR}$, the objects over which $E_F$ takes its infimum.

From here on, the routing function is the inner product $f_n$ of \Cref{thm:formation}. Set
\[
 N=2^n,
 \qquad
 X_*=\{0,1\}^n\setminus\{0^n\},
 \qquad
 M=\abs{X_*}=N-1,
\]
and let $\mu$ be uniform on $X_*\times\{0,1\}^n$. For every $x\ne0$, inner product is zero for exactly half of the $y$'s and one for the other half, so $f_n$ is balanced under $\mu$. Recovery is tested on half of a maximally entangled pair. Let
\[
 \lvert\Phi\rangle_{\bar QQ}
 =\frac{\lvert00\rangle+\lvert11\rangle}{\sqrt2},
 \qquad
 \Phi_{\bar QQ}=\ketbra\Phi\Phi.
\]

For any resource state and input $z=(x,y)$, let $\mathcal T_z$ be the recovered
channel selected by $f_n(z)$, meaning Alice's channel when $f_n(z)=0$ and
Bob's channel when $f_n(z)=1$. Its selected Bell infidelity is
\[
 \delta_z
 =1-\Tr\left[
 \Phi(\id_{\bar Q}\otimes\mathcal T_z)(\Phi)
 \right].
\]
It vanishes exactly when the selected party recovers the qubit perfectly.

Now fix an ensemble of pure states producing $\rho_{LR}$. Since the protocol is correct with the mixture, a fixed fraction of the ensemble weight must consist of components whose average error over inputs remains small; the next lemma makes this precise.

\begin{lemma}[Good ensemble components]\label{lem:good-components}
Consider a protocol satisfying the hypotheses of \Cref{thm:formation}, and fix any finite pure-state decomposition
\[
 \rho_{LR}=\sum_i p_i\ketbra{\psi_i}{\psi_i}.
\]
For $z=(x,y)$, run the same protocol with resource $\ketbra{\psi_i}{\psi_i}$, and let $\delta_{i,z}$ be the Bell infidelity of the recovered channel selected by $f_n(z)$. Define the average error of component $i$ as
\[
 \overline\delta_i=\mathbb E_{z\sim\mu}\delta_{i,z}.
\]
Then
\[
 \sum_{i:\,\overline\delta_i\le0.055}p_i\ge\frac2{11}.
\]
\end{lemma}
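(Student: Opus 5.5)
The approach is an averaging argument in two steps: first relate the Bell infidelity of the mixture to a convex combination of the component infidelities, then apply Markov's inequality to the component averages.

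\emph{Step 1: linearity in the resource.} For a fixed input $z$, the recovered channel $\mathcal T_z$ is affine in the resource state. Indeed, the maps $\mathcal A_x$, $\mathcal B_y$, the partial traces and the recovery channels are fixed linear maps, and the resource enters only through $\cdot\otimes\rho_{LR}$. Writing $\rho_{LR}=\sum_ip_i\ketbra{\psi_i}{\psi_i}$ therefore gives $\mathcal T_z=\sum_ip_i\mathcal T_{i,z}$, where $\mathcal T_{i,z}$ is the selected recovered channel when the same protocol runs on $\ketbra{\psi_i}{\psi_i}$. The Bell infidelity is linear in the channel, so $\delta_z=\sum_ip_i\delta_{i,z}$.

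\emph{Step 2: from worst-case diamond error to infidelity.} For the mixture, worst-case error at most $\epsilon=0.09$ means $\norm{\mathcal T_z-\id_Q}_{\diam}\le\epsilon$ for every $z$. Testing on $\Phi_{\bar QQ}$ gives $\norm{(\id\otimes\mathcal T_z)(\Phi)-\Phi}_1\le\epsilon$. The fidelity deficit with a pure state is bounded by half the trace distance, since $1-\Tr[\Phi\sigma]=\Tr[(I-\Phi)(\sigma-\Phi)]\le\frac12\norm{\sigma-\Phi}_1$; this holds because $\Tr(\sigma-\Phi)=0$ and $0\preceq I-\Phi\preceq I$. Hence $\delta_z\le\epsilon/2=0.045$ for every $z$. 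Averaging over $z\sim\mu$ and exchanging the sums gives
\[
 \sum_ip_i\overline\delta_i=\mathbb E_{z\sim\mu}\delta_z\le0.045.
\]

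\emph{Step 3: Markov.} Each $\overline\delta_i$ is nonnegative, so the weight of the components with $\overline\delta_i>0.055$ satisfies $\sum_{i:\overline\delta_i>0.055}p_i\le0.045/0.055=9/11$. The complementary weight is therefore at least $2/11$, which is the claim.

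The only point that needs care is Step 2: the constants must match, and this requires the bound $1-F\le\frac12\norm{\cdot}_1$ together with the paper's convention that the diamond norm carries no factor $\frac12$. With that convention the threshold $0.055$ and the fraction $2/11$ come out exactly. Step 1 also needs the remark that the local operations do not depend on which decomposition component is present, which holds because the protocol is fixed and only the resource varies.
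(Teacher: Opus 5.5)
Your proposal is correct and follows the paper's proof essentially step for step. Both arguments use linearity of the selected recovered channel in the resource, then bound $\delta_z\le\frac12\cdot0.09=0.045$ by testing the diamond-norm error on $\Phi$ with the traceless-difference estimate, then average over $\mu$, and finally apply Markov's inequality to get weight at least $1-0.045/0.055=2/11$.
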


The proof is given in Appendix~\ref{app:formation}.

The entanglement of a pure component is the entropy of its local state, so the theorem follows once every component with small average error is shown to have entropy $\Omega(\log n)$: the average entanglement of every ensemble is then of the same order. The next lemma gives this entropy bound.

\begin{lemma}[Average-error entropy bound]\label{lem:average-entropy}
There is an absolute constant $C_E<\infty$ such that the following holds for all sufficiently large $n$.
Let $\lvert\psi\rangle_{LR}$ be a pure resource used in an otherwise unrestricted protocol in the model of \Cref{sec:model}. If its selected Bell infidelities satisfy
\[
 \mathbb E_{z\sim\mu}\delta_z\le0.055,
\]
then
\[
 S(\psi_L)
 \ge
 \frac1{2500}
 \left(\log_2n-\log_2\log_2n-C_E\right).
\]
\end{lemma}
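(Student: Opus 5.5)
The plan is to argue by contraposition. Suppose $S(\psi_L)$ is small. We first truncate $\lvert\psi\rangle$ to a state of small Schmidt rank $d$ that still routes well on average. Then we feed this truncated state into \Cref{prop:chi-rank}, which needs no correctness assumption, and obtain a low-rank matrix that reproduces the inner-product pattern on most of $\mu$. Finally a discrepancy argument for inner product forces $d\log d\gtrsim n$, which in turn forces $S(\psi_L)=\Omega(\log n)$.

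\emph{Truncation.} Order the Schmidt coefficients decreasingly and keep the top $d$ of them. Every discarded $\lambda_r$ satisfies $\lambda_r\le 1/d$, so $-\log_2\lambda_r\ge\log_2 d$. Hence the discarded weight $w$ satisfies $w\log_2 d\le S(\psi_L)$. We choose $d=\lceil 2^{S(\psi_L)/\eta}\rceil$ for a small absolute constant $\eta$, which gives $w\le\eta$. Let $\psi'$ be the renormalized truncation; it has overlap $1-w$ with $\psi$. Running the same channels on $\psi'$ changes every output by $O(\sqrt\eta)$ in trace norm. So each Bell infidelity moves by $O(\sqrt\eta)$, and $\mathbb E_\mu\delta'_z\le0.055+O(\sqrt\eta)$.

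\emph{Entrywise gap from Bell infidelity.} The paper's gap uses the worst-case diamond error $0.09$, which an average Bell infidelity cannot supply. I would therefore redo \Cref{prop:trace-gap} pointwise with $\delta'_z$ in place of $\epsilon$.

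If $f=1$, Bell fidelity $1-\delta'_z$ of Bob's recovered channel gives recovery of $\lvert0\rangle$ and $\lvert1\rangle$ with average error $O(\delta'_z)$. This yields $\norm{\rho^0-\rho^1}_1\ge2-O(\delta'_z)$.

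If $f=0$, the Stinespring continuity argument of \Cref{lem:complement}, run with the Bell input, yields $\norm{\rho^0-\rho^1}_1\le O(\sqrt{\delta'_z})$.

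By \Cref{lem:sld-comparison} and \Cref{lem:smoothing}, every $z$ with $\delta'_z\le\delta_0$ then satisfies the regularized gap of \Cref{cor:regularized-gap}, possibly with a slightly smaller absolute half-gap $g'$. Here $\delta_0$ is an absolute threshold. By Markov's inequality, the set of inputs where the gap fails has $\mu$-mass $\kappa\le(0.055+O(\sqrt\eta))/\delta_0$. Fitting $\kappa$ below the level the next step tolerates is exactly where the numbers $0.055$ and $1/2500$ get pinned down.

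\emph{Rank versus inner product.} \Cref{prop:chi-rank}, which holds for any protocol, gives a matrix $\widetilde M$ with the following properties: it has rank $r\le\exp(C_1d\log(2d))\log(2N)$, it is $g'/2$-close to $[\chi_s]$, and it has entries bounded by $O(1)$ since $\chi_s\le2$.

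Put $W=\widetilde M-\theta J$, whose rank is at most $r+1$, and let $H=[(-1)^{f_n(x,y)}]$ be the $\pm1$ inner-product matrix. On the good inputs, $-H\cdot W\ge g'/2$. On the bad inputs the same quantity is at least $-O(1)$. So $\bigl|\mathbb E_\mu[H\cdot W]\bigr|\ge(g'/2)(1-\kappa)-O(\kappa)$, which is an absolute constant $c>0$ when $\kappa$ is small.

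On the other hand, write $W$ in factorization form, $W=UV^{\top}$ with rows of norm $O(\sqrt{r+1})$, so that $\gamma_2(W)\le O(\sqrt{r+1})$. The spectral norm bound $\norm{H}=\sqrt N$ then gives $\abs{\mathbb E_\mu[H\cdot W]}\le O(\sqrt{r/N})$; removing the row $x=0$ is harmless. Combining the two bounds gives $r\ge c'N$. Hence $C_1d\log(2d)\ge n\log 2-\log\log(2N)-O(1)$, and therefore $\log_2 d\ge\log_2n-\log_2\log_2n-O(1)$.

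With $d\le 2^{S/\eta+1}$ this gives $S(\psi_L)\ge\eta(\log_2n-\log_2\log_2n-C_E)$, and $\eta=1/2500$ is the intended choice.

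\emph{Main obstacle.} I expect the hardest step to be the pointwise routing gap stated through Bell infidelity, together with the bookkeeping of constants. The truncation error, the $\sqrt{\delta}$ loss in the information-disturbance step, and the Markov fraction all have to be absorbed while keeping $\kappa$ small enough against $g'$. I would first try to bound everything through purified distance, to keep the losses square-root rather than worse.
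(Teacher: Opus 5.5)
Your truncation (top $d$ Schmidt coefficients instead of the paper's level set $\{r:-\log_2\lambda_r\le E/\eta\}$), your pointwise bounds $t_z\le O(\sqrt{\delta_z})$ for $f_n(z)=0$ and $t_z\ge2-O(\delta_z)$ for $f_n(z)=1$, and your final rank step are all sound. The final step uses $\gamma_2$ where the paper uses the more elementary duality $\sum S_{xy}H_{xy}\le\norm{S}_{\mathrm{op}}\norm{H}_*\le\sqrt N\sqrt r\,\norm{H}_{\mathrm F}$. The genuine gap is the middle step: thresholding at $\delta'_z\le\delta_0$, bounding the bad mass by Markov, and charging $-O(1)$ per bad input. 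Under the fixed hypothesis $\mathbb E_\mu\delta_z\le0.055$ this cannot yield a positive correlation, for any $\eta$.

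With the natural constants ($t_z\le4\sqrt{\delta_z}$, $t_z\ge2-4\delta_z$, then \Cref{lem:sld-comparison,lem:smoothing}), you only get $\chi_s\le3.8\sqrt{\delta_z}$ when $f_n(z)=0$ and $\chi_s\ge1.8-7.6\delta_z$ when $f_n(z)=1$. A pointwise regularized gap therefore exists only for $\delta_0\lesssim0.088$. Markov then gives just $\kappa\le0.055/\delta_0>0.62$ before truncation and $\kappa\le0.075/\delta_0>0.85$ after it. Good inputs gain at most $g'/2$, which tends to $0$ as $\delta_0\to0.088$, while bad inputs lose order one, so $(g'/2)(1-\kappa)-O(\kappa)<0$. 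Even crediting $+1$ per good input and $-1$ per bad one would require $\kappa<1/2$, i.e.\ $\delta_0>0.11$, which is past the point where the gap closes. So the constants are not ``pinned down'' by this step: it fails, and $0.055$ is forced on you by \Cref{lem:good-components}.

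The missing idea is to skip the threshold and keep the input-dependent bounds at every input, including the bad ones. These bounds degrade gracefully in $\delta_z$, and that graded information is exactly what your $-O(1)$ charge throws away. Since $f_n$ is balanced under $\mu$, $\mathbb E_\mu[S_{xy}(\chi_s(z)-c)]$ does not depend on $c$. Jensen for the square root then gives
\[
 \mathbb E_\mu\bigl[S_{xy}(\chi_s(z)-1)\bigr]
 \ge\tfrac12\bigl(1-3.8\sqrt{\nu_0}\bigr)+\tfrac12\bigl(0.8-7.6\nu_1\bigr)
 =0.9-1.9\sqrt{\nu_0}-3.8\nu_1,
\]
where $\nu_q=\mathbb E[\delta_z\mid f_n(z)=q]$ and $\nu_0+\nu_1\le0.15$. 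Minimizing gives $37/400$. Subtracting $g/2$ for the approximant leaves $\gamma=0.0091525>0$, and your $\gamma_2$ bound or the paper's nuclear-norm argument then gives $r=\Omega(N)$ as you intended.
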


\begin{proof}[Proof overview]
The complete calculation is given in Appendix~\ref{app:formation}; we record its
mechanism here. Write $E=S(\psi_L)$ and set $\eta=1/2500$. Markov's inequality applied to
$-\log_2\lambda_r$ truncates the Schmidt decomposition to a state
$\lvert\phi\rangle$ of rank
\[
 d\le 2^{E/\eta}
\]
at trace distance at most $0.04$ from $\lvert\psi\rangle$. Running the same
protocol with $\phi$ increases the average selected Bell infidelity from at
most $0.055$ to at most $0.075$.

For the truncated protocol, let
$t_z=\norm{\rho_z^0-\rho_z^1}_1$ be the distinguishability of Bob's
pre-recovery states for the two computational-basis inputs. The two routing
cases imply the input-dependent bounds
\[
 f_n(z)=0\Longrightarrow t_z\le4\sqrt{\delta_z},
 \qquad
 f_n(z)=1\Longrightarrow t_z\ge2-4\delta_z.
\]
The comparison and smoothing bounds for the SLD statistic then give
\[
 \mathbb E_{z\sim\mu}
 \left[S_{xy}(\chi_s(z)-1)\right]\ge\frac{37}{400},
 \qquad S_{xy}=2f_n(x,y)-1.
\]
Apply \Cref{prop:chi-rank} to the truncated rank-$d$ protocol, giving a
low-rank approximant of the statistic matrix. After restricting this
approximant to the rows with $x\ne0$ and subtracting the all-ones matrix, the
resulting matrix still has constant positive correlation with $S$ and
uniformly bounded entries. The row orthogonality
$SS^{\mathsf T}=NI$ and operator--nuclear norm duality therefore force its
rank to be $\Omega(N)$. Comparing this with the structural upper bound
$\exp(C_1d\log(2d))\log(2N)+1$ gives
\[
 \log_2d\ge\log_2n-\log_2\log_2n-C_E.
\]
Since $\log_2d\le E/\eta$, the claim follows.
\end{proof}

\begin{proof}[Proof of \Cref{thm:formation}]
Consider any finite pure-state ensemble of $\rho_{LR}$. By \Cref{lem:good-components}, the components with average selected Bell infidelity at most $0.055$ have total weight at least $2/11$. For all sufficiently large $n$, the lower bound in \Cref{lem:average-entropy} is nonnegative. Therefore every ensemble satisfies
\begin{align*}
 \sum_i p_iS((\psi_i)_L)
 &\ge
 \frac2{11}\cdot\frac1{2500}
 \left(\log_2n-\log_2\log_2n-C_E\right)\\
 &=\frac1{13750}
 \left(\log_2n-\log_2\log_2n-C_E\right).
\end{align*}
Taking the infimum over all finite pure-state ensembles proves the theorem with $C=C_E$.
\end{proof}

\section{Conclusion}

For inner product modulo $2$ and all sufficiently large $n$, the original mixed resource state in every one-round $f$-routing protocol with worst-case error at most $0.09$ on both routing cases satisfies
\[
 E_F(\rho_{LR})
 \ge
 \frac1{13750}
 \left(\log_2n-\log_2\log_2n-C\right).
\]
This is a robust logarithmic lower bound on genuine entanglement in a model that leaves finite message lengths, local-system dimensions, and local operations unrestricted. In particular, separable resources, and more generally resource families with uniformly bounded entanglement of formation, fail for all sufficiently large input lengths.

Appendix~\ref{app:support-rank} proves a companion support-rank theorem: the smaller local rank $d$ of the shared state satisfies $d\log_2(2d)=\Omega(n)$, so each party's support dimension is nearly linear in $n$. For mixed states this is a support-dimension cost rather than an entanglement measure.

The polynomial open problem~\cite{maynotes} does not mention a particular resource measure. The results of this paper give logarithmic progress under the $E_F$ interpretation and under the dimension-based interpretation of Appendix~\ref{app:support-rank}, but a polynomial lower bound remains open under every interpretation. Any transfer to a particular QPV protocol or another NLQC family additionally requires a reduction with compatible error and resource accounting~\cite{complexity}.

\appendix

\section{Product-state mixing and preconditioning}\label{app:preconditioning}

This appendix proves the five lemmas stated in \Cref{sec:rank}: \Cref{lem:source-domination,lem:smoothing,lem:preconditioned-form,lem:whitened-compatibility,lem:moment-identity}. Throughout, for a positive operator $T$, $P_T$ denotes the projection onto its support.

\begin{proof}[Proof of \Cref{lem:source-domination}]
On the Schmidt support, we first prove $\ketbra\psi\psi\preceq d I_L\otimes\rho_R$. This follows from the pinching inequality~\cite{hayashipinching}. In the form used here, the constant is the number of blocks~\cite{ogawahayashi}. We include the short direct proof. For a vector $\lvert v\rangle=\sum_{r,t}v_{r,t}\lvert r,t\rangle$,
\[
 \abs{\langle\psi\mid v\rangle}^2
 =\abs{\sum_r\sqrt{\lambda_r}v_{r,r}}^2
 \le d\sum_r\lambda_r\abs{v_{r,r}}^2
 \le d\langle v\rvert I_L\otimes\rho_R\lvert v\rangle.
\]
This proves the auxiliary inequality. The map $\overline{\mathcal E}_x\otimes\mathcal F_y$ preserves operator inequalities. Applying it gives $\tau_{x,y}\preceq d\,\overline{\mathcal E}_x(I_L)\otimes\mathcal F_y(\rho_R)=d^2\alpha_x\otimes\beta_y$, which is the claimed bound.
\end{proof}

\begin{proof}[Proof of \Cref{lem:smoothing}]
The lower operator bound follows from $\tau^s=(1-s)\tau+s\Omega\succeq s\Omega$. The upper bound follows from \Cref{lem:source-domination}:
\[
 \tau^s
 \preceq\bigl((1-s)d^2+s\bigr)\Omega
 =(0.95d^2+0.05)\Omega.
\]

The two bounds on $\chi_s$ follow from the same variational identity. For the upper bound, we compare the regularized and original variational objectives. For the lower bound, we evaluate the regularized objective at an optimizer for the original statistic.

Let $L$ be a self-adjoint positive semidefinite linear map on the real Hilbert space of Hermitian operators, equipped with the Hilbert--Schmidt inner product, and let $L^+$ be its Moore--Penrose inverse. For $v\in\operatorname{ran}L$, restricting to the support of $L$ in the standard variational identity~\cite{bv} gives
\begin{equation}\label{eq:variational-inverse}
 \inner{v}{L^+v}_{\HS}
 =\max_H\left(2\inner{v}{H}_{\HS}-\inner{H}{LH}_{\HS}\right),
\end{equation}
where the maximum is over Hermitian $H$. For positive definite $L$, the Moore--Penrose inverse $L^+$ is the ordinary inverse.

For the upper bound, apply \Cref{eq:variational-inverse} to $\mathcal L_{\tau^s}$ and $\Delta^s$ on $\supp\Omega$. Since $\mathcal L_{\tau^s}\succeq(1-s)\mathcal L_\tau$ and $\Delta^s=(1-s)\Delta$, the objective for every Hermitian $H$ is at most $(1-s)$ times the corresponding objective for $\mathcal L_\tau$ and $\Delta$. The relation $-\tau\preceq\Delta\preceq\tau$ implies $P_\tau\Delta P_\tau=\Delta$. Moreover, $\mathcal L_\tau$ is invertible on operators supported on $\supp\tau$, so $\Delta\in\operatorname{ran}\mathcal L_\tau$. Hence \Cref{eq:variational-inverse} identifies the maximum of the latter objective with $\chi/2$. This proves $\chi_s\le(1-s)\chi$.

For the lower bound, set $H_*=\mathcal L_\tau^+(\Delta)$ and extend it by zero from $\supp\tau$ to $\supp\Omega$. On operators supported on $\supp\tau$, the inverse of $\mathcal L_\tau$ has a positive integral representation~\cite{hiaipetz}. Together with $-\tau\preceq\Delta\preceq\tau$, this representation gives $-I\preceq H_*\preceq I$ on $\supp\Omega$. Evaluate the variational objective for $\mathcal L_{\tau^s}$ and $\Delta^s$ at $H_*$. Since $\inner{H_*}{\mathcal L_\Omega(H_*)}_{\HS}=\Tr(\Omega H_*^2)\le1$,
\[
 \frac{\chi_s}{2}
 \ge
 2(1-s)\inner{\Delta}{H_*}_{\HS}
 -(1-s)\inner{H_*}{\mathcal L_\tau(H_*)}_{\HS}
 -s\inner{H_*}{\mathcal L_\Omega(H_*)}_{\HS}
 \ge (1-s)\frac\chi2-s.
\]
Multiplication by two proves the lower bound.
\end{proof}

\begin{proof}[Proof of \Cref{lem:preconditioned-form}]
Applying $T\mapsto\mathcal L_T$ to \Cref{eq:product-sandwich} gives $a\mathcal L_\Omega\preceq\mathcal L_{\tau^s}\preceq b\mathcal L_\Omega$. Conjugating by $\mathcal L_\Omega^{-1/2}$ proves \Cref{eq:A-bounds}. Since $a>0$, the operators $\tau^s$ and $\Omega$ have the same support, so the inverses below are ordinary inverses on the restricted spaces. The identities $\rho^{0,s}+\rho^{1,s}=2\tau^s$ and $\rho^{0,s}-\rho^{1,s}=2\Delta^s$ give
\[
 \frac{\chi_s}{2}
 =\inner{\Delta^s}{\mathcal L_{\tau^s}^{-1}(\Delta^s)}_{\HS}
 =\inner{u}{A^{-1}u}_{\HS}.
\]
The final inequality follows from \Cref{lem:sld-comparison} and $\norm{\rho^{0,s}-\rho^{1,s}}_1\le2$.
\end{proof}

\begin{proof}[Proof of \Cref{lem:whitened-compatibility}]
Let $\mathcal G_x=(\mathcal E_x^0-\mathcal E_x^1)/2$. For each $q$, the range of every Kraus operator of $\mathcal E_x^q$ lies in $\supp\alpha_x$. Therefore, for every $T$,
\[
 P_{\alpha_x}\mathcal E_x^q(T)P_{\alpha_x}
 =\mathcal E_x^q(T).
\]
The same identity holds for $\overline{\mathcal E}_x$ and $\mathcal G_x$.

On Bob's side, $\rho_R$ is positive definite on the Schmidt support, so $\mathcal F_y(\rho_R)$ and $\mathcal F_y(I_R)$ have the same support. Hence $P_{\beta_y}\mathcal F_y(T)P_{\beta_y}=\mathcal F_y(T)$ for every $T$ on that support. It follows that every term in \Cref{eq:compatible-output} is supported on $\supp\alpha_x\otimes\supp\beta_y$. The Moore--Penrose inverse square roots below are therefore well-defined.

Using \Cref{eq:compatible-output} and $\Omega^{-1/2}=\alpha_x^{-1/2}\otimes\beta_y^{-1/2}$ on the support yields
\[
 D_{x,y}
 =(1-s)\sum_{r,t=1}^d\sqrt{\lambda_r\lambda_t}\,
 \alpha_x^{-1/2}\mathcal G_x(\ketbra r t)\alpha_x^{-1/2}
 \otimes
 \beta_y^{-1/2}\mathcal F_y(\ketbra r t)\beta_y^{-1/2},
\]
which has length at most $d^2$. Since $E=(1-s)(\tau-\Omega)$, the corresponding expansion with $\overline{\mathcal E}_x$ contributes $d^2$ terms to $C$, followed by the product term $-(1-s)P_{\alpha_x}\otimes P_{\beta_y}$. Thus $r_C\le d^2+1$.

For the norm bounds, $-\tau\preceq\Delta\preceq\tau$ and $\tau\preceq d^2\Omega$ imply $-d^2P_\Omega\preceq D\preceq d^2P_\Omega$. The two inequalities in \Cref{eq:product-sandwich} give $(a-1)P_\Omega\preceq C\preceq(b-1)P_\Omega$. With the stated values of $a$ and $b$, both operator norms are at most $d^2$ for every $d\ge1$.
\end{proof}

\begin{proof}[Proof of \Cref{lem:moment-identity}]
Substitute $B=\mathcal L_\Omega^{-1/2}\mathcal L_E\mathcal L_\Omega^{-1/2}$ and $u=\mathcal L_\Omega^{-1/2}(\Delta^s)$ into $\inner{u}{B^ju}_{\HS}$. Each adjacent pair of factors $\mathcal L_\Omega^{-1/2}$ combines to $\mathcal L_\Omega^{-1}$. The resulting expression is exactly the definition of $q_j(x,y)$.
\end{proof}

The error value $0.09$ is not optimized. With the five-percent mixing we apply, the gap and rank arguments of \Cref{sec:gap,sec:rank}, and with them the results of Appendix~\ref{app:support-rank}, work for every fixed protocol error below approximately $0.126$, after adjusting $\theta$ and $g$. Extending \Cref{thm:formation} additionally requires adjusting the infidelity threshold $0.055$ of \Cref{sec:formation} and the constants derived from it.

\section{Moment approximation and polynomial inversion}\label{app:moment-rank}

This appendix contains the moment-approximation and polynomial-inversion arguments. Three general lemmas come first: the edge weighting (\Cref{lem:fractional-orientation}), the Fourier transform of the logistic kernel (\Cref{lem:logistic-fourier}), and the parameterized moment rank bound (\Cref{lem:whole-path}). The proofs of \Cref{cor:moment-resource}, \Cref{lem:chebyshev}, and \Cref{prop:chi-rank} follow. Throughout, the setup of \Cref{subsec:preconditioning,subsec:moments} is in force: the product state $\Omega=\alpha_x\otimes\beta_y$, the operators $C$ and $D$, and the moment matrices $q_j$, with all eigenvalue decompositions, inverse powers, and complex powers taken on $\supp\Omega$.

\begin{lemma}[Edge weights on a tree]\label{lem:fractional-orientation}
Let $T$ be a tree with vertex set $V$ and $m\ge1$ edges. Set $\varepsilon_m=1/(2m)$. For each endpoint $v$ of each edge $e$, there is a number $\vartheta_{e,v}$ such that $\vartheta_{e,v}+\vartheta_{e,v'}=1$ for $e=\{v,v'\}$, $\varepsilon_m\le\vartheta_{e,v}\le1-\varepsilon_m$, and $z_v:=\sum_{e\ni v}\vartheta_{e,v}\le1$ for every $v$.
\end{lemma}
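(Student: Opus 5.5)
The plan is to root the tree and let each edge put most of its weight on its lower endpoint, using subtree sizes to correct the weights. Fix a root $\rho\in V$. Every non-root vertex $v$ then has a unique parent edge $e_v$ joining it to its parent, and $v\mapsto e_v$ is a bijection from $V\setminus\{\rho\}$ onto the edge set. For a non-root vertex $v$, let $k_v$ be the number of edges in the subtree hanging below $v$, counting $e_v$ itself; thus $1\le k_v\le m$. Write $\varepsilon=\varepsilon_m$ and define, for $e_v=\{v,p\}$ with $p$ the parent of $v$,
\[
 \vartheta_{e_v,v}=1-k_v\varepsilon,
 \qquad
 \vartheta_{e_v,p}=k_v\varepsilon .
\]
The condition $\vartheta_{e,v}+\vartheta_{e,v'}=1$ then holds by construction.

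Next I would check the range. Since $1\le k_v\le m$ and $m\varepsilon=1/2$, the parent's share satisfies $\varepsilon\le k_v\varepsilon\le 1/2$. The child's share satisfies $1/2\le 1-k_v\varepsilon\le 1-\varepsilon$. Both values therefore lie in $[\varepsilon_m,1-\varepsilon_m]$.

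The key step is the load count, which rests on the additivity identity
\[
 k_v=1+\sum_{w\text{ child of }v}k_w .
\]
This holds because the subtree edges below $v$ are $e_v$ together with the disjoint subtrees below its children. For a non-root vertex $v$, the load is
\[
 z_v=(1-k_v\varepsilon)+\sum_{w\text{ child of }v}k_w\varepsilon
 =1-k_v\varepsilon+(k_v-1)\varepsilon
 =1-\varepsilon\le 1 .
\]
For the root, the subtrees of its children partition all $m$ edges, so
\[
 z_\rho=\sum_{w\text{ child of }\rho}k_w\varepsilon=m\varepsilon=\tfrac12\le1 .
\]

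The main obstacle is choosing the weights so that both the range constraint and the load constraint hold at once. The naive rule $\vartheta_{e_v,v}=1-\varepsilon$ for every edge fails: a vertex with many children would collect $(1-\varepsilon)+c_v\varepsilon>1$. Weighting each edge by its subtree size makes the contributions telescope, and the choice $\varepsilon_m=1/(2m)$ is exactly what keeps $k_v\varepsilon\le 1/2$, so every weight stays inside the allowed interval.
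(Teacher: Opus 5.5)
Your proof is correct and is essentially the paper's construction. The paper sets $\vartheta_{e,v}=\abs{V_{e,v'}}/(m+1)$, the fraction of vertices on the far side of $e$; this is exactly your rooted rule with $\varepsilon$ replaced by $1/(m+1)$, since your $k_v$ is the vertex count of $v$'s side of $e_v$. The paper's load count uses that the components of $T-v$ partition $V\setminus\{v\}$, which is the unrooted form of your additivity identity.

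The choice $1/(m+1)$ makes the weights root-independent and gives $z_v=m/(m+1)$ at every vertex. Your $\varepsilon=1/(2m)$ gives $z_v=1-\varepsilon$ off the root and $z_\rho=1/2$, and either choice satisfies the lemma.
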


\begin{proof}[Proof of \Cref{lem:fractional-orientation}]
For an edge $e=\{v,v'\}$, let $V_{e,v}$ and $V_{e,v'}$ be the vertex sets of the two components of $T-e$ containing $v$ and $v'$, respectively. Define
\[
 \vartheta_{e,v}=\frac{\abs{V_{e,v'}}}{m+1},
 \qquad
 \vartheta_{e,v'}=\frac{\abs{V_{e,v}}}{m+1}.
\]
The two component sizes sum to $m+1$, so $\vartheta_{e,v}+\vartheta_{e,v'}=1$. Each component has between $1$ and $m$ vertices, and therefore
\[
 \frac1{m+1}
 \le \vartheta_{e,v}
 \le \frac m{m+1}.
\]
For $m\ge1$, this interval is contained in $[1/(2m),1-1/(2m)]$.

Fix a vertex $v$. For each edge $e$ incident to $v$, write $v'_e$ for its other endpoint. The sets $V_{e,v'_e}$ are exactly the components of $T-v$, so they partition $V\setminus\{v\}$. Consequently,
\[
 z_v=\sum_{e\ni v}\vartheta_{e,v}
 =\frac m{m+1}
 \le1.
\]
\end{proof}

The Fourier transform below is useful because, for $\lambda,\lambda'>0$ and $0<\vartheta<1$,
\[
 \frac{\lambda^\vartheta{\lambda'}^{1-\vartheta}}{\lambda+\lambda'}
 =g_\vartheta(\log\lambda-\log\lambda').
\]
A Fourier representation of $g_\vartheta$ separates the coupled denominator into powers of $\lambda$ and $\lambda'$.

\begin{lemma}[Fourier transform of $g_\vartheta$]\label{lem:logistic-fourier}
Define the Fourier transform and its inverse by
\[
 \widehat g(t)=\int_{\R}g(z)e^{-itz}\,dz,
 \qquad
 g(z)=\frac1{2\pi}\int_{\R}\widehat g(t)e^{itz}\,dt.
\]
For $0<\vartheta<1$, let $g_\vartheta(z)=e^{\vartheta z}/(1+e^z)$. Then
\[
 \widehat g_\vartheta(t)
 =\Gamma(\vartheta-it)\Gamma(1-\vartheta+it)
 =\frac{\pi}{\sin(\pi(\vartheta-it))},
\]
and there is an absolute constant $C_F$ such that $\frac1{2\pi}\norm{\widehat g_\vartheta}_1\le C_F\bigl(1+\log\frac1{\min\{\vartheta,1-\vartheta\}}\bigr)$.
\end{lemma}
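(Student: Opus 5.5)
We plan to compute $\widehat g_\vartheta$ by a change of variables to a Beta integral. Substituting $w=e^z$, so that $dz=dw/w$ and $e^{-itz}=w^{-it}$, gives
\[
 \widehat g_\vartheta(t)=\int_0^\infty\frac{w^{\vartheta-it-1}}{1+w}\,dw .
\]
With $s=\vartheta-it$ we have $0<\Re s=\vartheta<1$, so this is the standard integral $\int_0^\infty w^{s-1}(1+w)^{-1}\,dw=B(s,1-s)=\Gamma(s)\Gamma(1-s)$. The integral converges absolutely because $\abs{w^{s-1}}=w^{\vartheta-1}$, which is integrable near $0$, and the integrand decays like $w^{\vartheta-2}$ at infinity. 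Euler's reflection formula then gives $\pi/\sin(\pi s)$, which is exactly the claimed formula, since $1-s=1-\vartheta+it$. The inversion formula holds because $g_\vartheta$ is continuous and integrable. The function $g_\vartheta$ decays like $e^{\vartheta z}$ as $z\to-\infty$ and like $e^{-(1-\vartheta)z}$ as $z\to+\infty$. Integrability of $\widehat g_\vartheta$ follows from the bound below.

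The main work is the $L^1$ bound. Write $s=\vartheta-it$ and use
\[
 \abs{\sin(\pi s)}^2=\sin^2(\pi\vartheta)+\sinh^2(\pi t).
\]
This gives
\[
 \abs{\widehat g_\vartheta(t)}=\frac{\pi}{\sqrt{\sin^2(\pi\vartheta)+\sinh^2(\pi t)}}.
\]
Set $\mu=\min\{\vartheta,1-\vartheta\}\in(0,1/2]$. Then $\sin(\pi\vartheta)=\sin(\pi\mu)\ge2\mu$.

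We split the integral into three ranges.
\begin{enumerate}
\item For $\abs t\le\mu$, bound the integrand by $\pi/(2\mu)$. This range contributes $O(1)$.
\item For $\mu\le\abs t\le1$, use $\sinh(\pi\abs t)\ge\pi\abs t$ to bound the integrand by $1/\abs t$. This range contributes $2\log(1/\mu)$.
\item For $\abs t\ge1$, use $\sinh(\pi\abs t)\ge c\,e^{\pi\abs t}$. The tail is then $O(1)$.
\end{enumerate}
Summing the three ranges and dividing by $2\pi$ yields $\frac1{2\pi}\norm{\widehat g_\vartheta}_1\le C_F\bigl(1+\log\frac1\mu\bigr)$ with an absolute constant $C_F$.

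The only point requiring care is this middle range $\mu\le\abs t\le1$, which is where the logarithmic dependence on $\mu$ arises. The identity for $\Gamma(s)\Gamma(1-s)$ is classical, and the complex exponent causes no trouble because the Beta integral is analytic in $s$ on the strip $0<\Re s<1$. There it agrees with the real-parameter formula, so the identity extends by analytic continuation.
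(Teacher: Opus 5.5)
Your proposal is correct and follows essentially the same route as the paper: the substitution $w=e^z$ reduces to the Beta integral, the reflection formula gives $\pi/\sin(\pi(\vartheta-it))$, and the $L^1$ bound comes from $\abs{\sin(\pi(\vartheta-it))}^2=\sin^2(\pi\vartheta)+\sinh^2(\pi t)$ together with $\sin(\pi\mu)\ge2\mu$, $\sinh(\pi t)\ge\pi t$ on $[0,1]$, and exponential decay for $\abs t\ge1$. Your split of $[0,1]$ at $\mu$ simply evaluates the paper's term $\int_0^1 dt/\sqrt{\delta^2+t^2}$ explicitly.
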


\begin{proof}[Proof of \Cref{lem:logistic-fourier}]
With $\lambda=e^z$, $\widehat g_\vartheta(t)=\int_0^\infty \lambda^{\vartheta-it-1}(1+\lambda)^{-1}\,d\lambda=\Gamma(\vartheta-it)\Gamma(1-\vartheta+it)$. The integral is the beta integral, and the reflection formula gives the sine expression~\cite{dlmf}. Moreover, $\abs{\sin(\pi(\vartheta-it))}^2=\sin^2(\pi\vartheta)+\sinh^2(\pi t)$. Let $\delta=\min\{\vartheta,1-\vartheta\}$. Since $0<\delta\le1/2$, $\sin(\pi\delta)\ge2\delta$. On $0\le t\le1$, $\sinh(\pi t)\ge\pi t$, while for $t\ge1$ it grows exponentially. Hence
\[
 \frac1{2\pi}\norm{\widehat g_\vartheta}_1
 \le C\int_0^1\frac{dt}{\sqrt{\delta^2+t^2}}
 +C\int_1^\infty e^{-\pi t}\,dt
 \le C_F\left(1+\log\frac1\delta\right).
\]
\end{proof}

\begin{lemma}[Moment rank bound]\label{lem:whole-path}
In the setup above, let the row and column input sets each have size $N$. Suppose $C_{x,y}$ and $D_{x,y}$ have product expansions of length at most $r_C$ and $r_D$, respectively. Let $K_C=\max_{x,y}\norm{C_{x,y}}_{\mathrm{op}}$ and $K_D=\max_{x,y}\norm{D_{x,y}}_{\mathrm{op}}$. For $j\ge0$, define $R_j=r_D^2r_C^j$, $M_j=K_D^2K_C^j$, and $V_j=2^{j+1}\bigl(C_F(1+\log(2j+4))\bigr)^{j+1}$. Then for every $\eta>0$,
\begin{equation}\label{eq:whole-path-rank}
 \rank_\eta(q_j)
 \le
 2R_j\left\lceil
 6V_j^2M_j^2\eta^{-2}\log(2N)
 \right\rceil.
\end{equation}
\end{lemma}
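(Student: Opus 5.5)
I expand $q_j(x,y)$ in the joint eigenbasis of $\Omega_{x,y}=\alpha_x\otimes\beta_y$ on its support. I write $\alpha_x=\sum_a a_a\ketbra{a}{a}$ and $\beta_y=\sum_b b_b\ketbra{b}{b}$, so each eigenvector $\lvert a,b\rangle$ has eigenvalue $\lambda_{ab}=a_ab_b$. In this basis $\mathcal L_\Omega^{-1}$ acts on matrix units $\ketbra{v}{w}$ by multiplication with $2/(\lambda_v+\lambda_w)$, and $\mathcal L_E$ multiplies on the left or right by $E$. Substituting $E=\Omega^{1/2}C\Omega^{1/2}$ and $\Delta^s=\Omega^{1/2}D\Omega^{1/2}$, $q_j$ becomes a sum over left/right choices $\sigma\in\{L,R\}^j$, giving at most $2^j$ words, times a sum over index sequences $v_0,\dots,v_{j+1}$. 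Each term is a product of matrix elements of $D$, $C,\dots,C$, $D$ with eigenvalue factors: square roots $\lambda_v^{1/2}$ from the $\Omega^{1/2}$'s, and $j+1$ denominators $\lambda_v+\lambda_w$ from the inverses.

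For each word, the denominators link index pairs, and the pairs linked form a tree with $m=j+1$ edges on the indices occurring (it is a path-like graph, since each new inverse couples the two current outer indices). I apply \Cref{lem:fractional-orientation} to obtain weights $\vartheta_{e,v}$. Every vertex carries total numerator power $1$ from the $\Omega^{1/2}$ factors on either side. I assign power $\vartheta_{e,v}$ to edge $e$ at $v$ and keep the leftover $1-z_v\ge0$ at $v$. Each edge ratio then equals $g_{\vartheta}(\log\lambda_v-\log\lambda_w)$, and by \Cref{lem:logistic-fourier} it is $\frac1{2\pi}\int\widehat g_\vartheta(t)\lambda_v^{it}\lambda_w^{-it}dt$. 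Collecting, $q_j(x,y)=\int \Phi(\mathbf t)\,F_{x,y}(\mathbf t)\,d\mathbf t$ over $\mathbf t\in\R^{j+1}$, summed over the $2^j$ words. Here $\Phi=\prod_e\widehat g_{\vartheta_e}(t_e)/(2\pi)$ has $\norm{\Phi}_1\le (C_F(1+\log(2j+2)))^{j+1}$, since $\varepsilon_m=1/(2m)$. The integrand $F_{x,y}(\mathbf t)$ is a trace of $D$, $C$'s and complex powers $\Omega^{w}$ with $\operatorname{Re}w\ge0$ and total real part $1$ per cycle. Since $\Omega^w=\alpha_x^w\otimes\beta_y^w$, inserting the product expansions of $C$ and $D$ writes $F_{x,y}(\mathbf t)$ as a sum of at most $R_j=r_D^2r_C^j$ terms, each an $x$-number times a $y$-number. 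Noncommutative H\"older with exponents given by the real parts, together with $\norm{\Omega^{i s}}_{\mathrm{op}}\le1$ and $\Tr\Omega=1$, bounds $\abs{F_{x,y}(\mathbf t)}\le M_j$.

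Normalizing, $q_j(x,y)=Z\,\mathbb E_{\mathbf t\sim\nu}[\text{phase}\cdot F_{x,y}(\mathbf t)]$. Here $\nu=\abs{\Phi}/\norm{\Phi}_1$ is mixed over words, and $Z\le 2^{j}\cdot 2\cdot\norm{\Phi}_1$; the extra $2$ absorbs the factors $2$ from the $\mathcal L^{-1}$ and fits within $V_j$. Each sample is bounded by $V_jM_j$. I draw $T=\lceil 6V_j^2M_j^2\eta^{-2}\log(2N)\rceil$ i.i.d.\ samples and take the empirical mean. For real and imaginary parts, Hoeffding gives failure probability below $1/(2N^2)$ per entry, for instance $2\exp(-T\eta^2/(2V_j^2M_j^2))$ with the constant $6$ chosen to beat the union bound over the $N^2$ entries. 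Some sample therefore gives entrywise error at most $\eta$. Taking the real part, since $q_j$ is real, at most doubles the rank: the rank is at most $2R_jT$, as in \Cref{eq:whole-path-rank}.

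The main obstacle is the bookkeeping in the first two steps. I must verify that each word's denominator structure really is a tree, that the numerator square roots give total weight exactly $1$ at each index so that the leftovers are nonnegative, and that the imaginary-power insertions keep the trace in a H\"older-admissible form. I would settle this first by writing the $j=0,1$ cases explicitly.
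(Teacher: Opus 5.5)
Your proposal is correct and follows the paper's proof essentially step for step: eigenbasis expansion over left/right words, the denominator tree with the edge weights of \Cref{lem:fractional-orientation}, the logistic Fourier representation, the cyclic trace feature bounded by noncommutative H\"older, separation via $\Omega^z=\alpha_x^z\otimes\beta_y^z$ and the product expansions, and Hoeffding sampling with a union bound and a factor two for the real part. One small correction: the denominator graph is a tree in which each new edge shares one endpoint with the previous edge, but it is not a path in general (the word $\mathsf L^j$ gives a star centered at the initial column index); this is harmless, since the edge-weight lemma holds for every tree.
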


\begin{proof}[Proof of \Cref{lem:whole-path}]
We first expand $q_j$ and isolate the denominators that prevent separation. We then rewrite those denominators as Fourier integrals. The resulting integrand separates into $x$-dependent and $y$-dependent factors, and sampling the integral produces a finite-rank approximation.

\emph{Stage 1: combinatorial expansion.}
Fix one input pair and diagonalize $\Omega$ on its support, with eigenvalues $\omega_a>0$. In this basis, $\bigl(\mathcal L_\Omega^{-1}(H)\bigr)_{a,b}=2H_{a,b}/(\omega_a+\omega_b)$. Each occurrence of $\mathcal L_E$ is the average of left and right multiplication by $E$. Expand the $j$ occurrences into words $w\in\{\mathsf L,\mathsf R\}^j$, where the letters $\mathsf L$ and $\mathsf R$ select left and right multiplication.

Fix a word $w\in\{\mathsf L,\mathsf R\}^j$. Start with two vertices $A_0,B_0$. At step $\ell$, if $w_\ell=\mathsf L$, introduce a new vertex $A_\ell$ and set $B_\ell=B_{\ell-1}$. If $w_\ell=\mathsf R$, introduce a new vertex $B_\ell$ and set $A_\ell=A_{\ell-1}$. The pairs $e_\ell=\{A_\ell,B_\ell\}$, for $0\le\ell\le j$, are the edges of the denominator tree $T_w$. Each step keeps one endpoint of the preceding edge and adds one new vertex.

Define the operator multigraph on the same vertices. Its first edge is the initial $D$ matrix element between $A_0$ and $B_0$. Each of its next $j$ edges is the $C$ matrix element between a replaced vertex and the new vertex that replaces it. Its last edge is the final $D$ matrix element between $A_j$ and $B_j$. Every vertex has degree two, and the multigraph is connected. It is therefore the operator cycle. When $j=0$, this cycle consists of two parallel $D$ edges.

Orient each operator edge from the row index to the column index of its matrix element. Every vertex then has one incoming and one outgoing operator edge, so the operator cycle is directed. Choose $v_0,v_1,\ldots,v_{j+1}$ in this directed order, and let $O_{w,k}$ be the ordinary operator on the edge from $v_k$ to $v_{k+1}$, with indices read modulo $j+2$. Exactly two of the $O_{w,k}$ are $D$, and the remaining $j$ are $C$. If $I_\Omega$ is the eigenbasis index set and $\iota:V(T_w)\to I_\Omega$ is an assignment, write $\omega_{\iota(v)}$ for the eigenvalue attached to $v$ and $\omega_\iota$ for the resulting tuple. Direct expansion of the matrix products gives, with a scalar kernel $K_w$ that collects all eigenvalue dependence and is computed in \Cref{eq:path-kernel} below,
\begin{equation}\label{eq:indexed-path-expansion}
 q_j(x,y)
 =2\sum_{w\in\{\mathsf L,\mathsf R\}^j}\ \sum_{\iota:V(T_w)\to I_\Omega}
 K_w(\omega_\iota)
 \prod_{k=0}^{j+1}
 (O_{w,k})_{\iota(v_k),\iota(v_{k+1})}.
\end{equation}
The $j+1$ factors $\mathcal L_\Omega^{-1}$ contribute $2^{j+1}$, while the $j$ left-right averages contribute $2^{-j}$. This gives the factor two in \Cref{eq:indexed-path-expansion}. Substituting $E=\Omega^{1/2}C\Omega^{1/2}$ and $\Delta^s=\Omega^{1/2}D\Omega^{1/2}$ shows that the scalar eigenvalue factor for a fixed assignment of eigenbasis indices to the vertices is
\begin{equation}\label{eq:path-kernel}
 K_w(\omega)=
 \frac{\prod_{v\in V(T_w)}\omega_v}
 {\prod_{e=\{v,v'\}\in E(T_w)}(\omega_v+\omega_{v'})}.
\end{equation}
Indeed, the operator cycle has degree two at every vertex, so the square-root factors from its incident operators multiply to one power of $\omega_v$. Thus $K_w$ contains the only remaining dependence that does not yet separate between $x$ and $y$.

\emph{Stage 2: Fourier representation.}
Apply \Cref{lem:fractional-orientation} to $T_w$, with $m=j+1$. After ordering the endpoints of each edge $e=\{v,v'\}$, write $\vartheta_{e,v}=\vartheta_e$ and $\vartheta_{e,v'}=1-\vartheta_e$. Put $z_v=\sum_{e\ni v}\vartheta_{e,v}$ and $h_v=1-z_v$. Then $h_v\ge0$ and $\sum_vh_v=\abs{V(T_w)}-\abs{E(T_w)}=1$. \Cref{eq:path-kernel} becomes
\begin{equation}\label{eq:oriented-kernel}
 K_w(\omega)=
 \prod_v\omega_v^{h_v}
 \prod_{e=\{v,v'\}}
 \frac{\omega_v^{\vartheta_{e,v}}\omega_{v'}^{\vartheta_{e,v'}}}
 {\omega_v+\omega_{v'}}.
\end{equation}
The formula remains valid when two vertices are assigned the same eigenvalue index.

For each ordered edge $e=(v,v')$, apply the identity preceding \Cref{lem:logistic-fourier} with $\lambda=\omega_v$, $\lambda'=\omega_{v'}$, and $\vartheta=\vartheta_e$. The lemma represents the resulting factor as a Fourier integral. The range of $\vartheta_e$ supplied by \Cref{lem:fractional-orientation} bounds the total variation of the Fourier measure for each edge by $C_F(1+\log(2j+4))$.

Let $t_e$ be the Fourier frequency associated with edge $e$. For each vertex $v$, define $t_v=\sum_{e\ni v}\sigma_{v,e}t_e$, where $\sigma_{v,e}=1$ and $\sigma_{v',e}=-1$ for the chosen order $e=(v,v')$. The Fourier factors then combine to $\prod_v\omega_v^{it_v}$, and $\sum_vt_v=0$.

Insert these Fourier representations into \Cref{eq:indexed-path-expansion}. For each word $w$, the sum over eigenbasis indices becomes the cyclic trace in \Cref{eq:cyclic-feature}. Take the product of the measures for the $j+1$ tree edges, and then sum these product measures over all $2^j$ words. After including the remaining factor of two, the resulting finite complex measure $\mu_j$ satisfies $\norm{\mu_j}_{\mathrm{TV}}\le V_j$. Here $\norm{\mu_j}_{\mathrm{TV}}$ is the total mass of the variation measure $\abs{\mu_j}$, and the parameter space records the word together with all of its Fourier frequencies.

For each choice $\zeta$ of a word and its Fourier frequencies, define
\begin{equation}\label{eq:cyclic-feature}
 F_\zeta(x,y)=
 \Tr\left[
 \Omega^{h_0+it_0}O_0
 \Omega^{h_1+it_1}O_1
 \cdots
 \Omega^{h_{j+1}+it_{j+1}}O_{j+1}
 \right],
\end{equation}
where the ordinary operators $O_k$ consist of two copies of $D$ and $j$ copies of $C$, and the nonnegative exponents satisfy $\sum_{k=0}^{j+1}h_k=1$. When $h_i=0$, functional calculus defines $\Omega^{it_i}$ as a unitary on $\supp\Omega$. Since the exponents $h_i$ sum to one, noncommutative Hölder gives a bound uniform in the operator dimension. If $h_i>0$, then $\norm{\Omega^{h_i+it_i}}_{1/h_i}=1$ because $\Tr\Omega=1$; if $h_i=0$, then $\norm{\Omega^{it_i}}_{\mathrm{op}}=1$. Taking the operator norm of each $O_i$ therefore gives $\abs{F_\zeta(x,y)}\le\norm{D}_{\mathrm{op}}^2\norm{C}_{\mathrm{op}}^j\le M_j$.

\emph{Stage 3: separation into row and column factors.}
Use $\Omega_{x,y}=\alpha_x\otimes\beta_y$. Every complex power separates as $\Omega_{x,y}^z=\alpha_x^z\otimes\beta_y^z$. These powers are taken on the supports of $\alpha_x$ and $\beta_y$ and extended by zero on their kernels. Expand the product expansions of the two $D$ factors and the $j$ $C$ factors in \Cref{eq:cyclic-feature}. Enumerate the $R_j=r_D^2r_C^j$ tuples of product indices by $\ell=1,\ldots,R_j$. Then
\[
 F_\zeta(x,y)
 =\sum_{\ell=1}^{R_j}F^X_{\zeta,\ell}(x)
 F^Y_{\zeta,\ell}(y),
\]
where the two factors are the corresponding traces on $X$ and $Y$. Thus every cyclic feature matrix $[F_\zeta(x,y)]_{x,y}$ has complex rank at most $R_j$.

\emph{Stage 4: sampling the integral.}
The Fourier representation gives the exact identity $q_j(x,y)=\int F_\zeta(x,y)\,d\mu_j(\zeta)$. If $\norm{\mu_j}_{\mathrm{TV}}=0$ or $M_j=0$, then $q_j=0$ and the claim is immediate. Otherwise write $d\mu_j=\xi\,d\abs{\mu_j}$, where $\abs\xi=1$, and set
\[
 T=\left\lceil6V_j^2M_j^2\eta^{-2}\log(2N)\right\rceil.
\]
Draw $\zeta_1,\ldots,\zeta_T$ independently from the probability measure $\abs{\mu_j}/\norm{\mu_j}_{\mathrm{TV}}$, where each $\zeta_k$ records both the word and its frequencies. Define
\[
 \widetilde q_j=
 \operatorname{Re}\left[
 \frac{\norm{\mu_j}_{\mathrm{TV}}}{T}
 \sum_{k=1}^T\xi(\zeta_k)F_{\zeta_k}
 \right].
\]
For each entry, $\widetilde q_j$ is an unbiased estimator of $q_j$, and each real summand has absolute value at most $V_jM_j$. Hoeffding's inequality and a union bound over the $N^2$ entries show that $\norm{q_j-\widetilde q_j}_{\max}\le\eta$ with positive probability whenever $T\ge6V_j^2M_j^2\eta^{-2}\log(2N)$. Each sampled complex feature matrix has rank at most $R_j$, so their sum has complex rank at most $R_jT$. Taking the real part increases real rank by at most a factor of two. Hence $\rank\widetilde q_j\le2R_jT$, which proves \Cref{eq:whole-path-rank}.

All formulas were written on $\supp\Omega$. If $\Omega$ is singular, the support restriction removes zero eigenvalues before the Fourier expansion. The powers still separate into functions of $\alpha_x$ and $\beta_y$, and the Hölder bound is unchanged. Thus no continuity argument or extra factor depending on the smallest nonzero eigenvalue is needed.
\end{proof}

\begin{proof}[Proof of \Cref{cor:moment-resource}]
Apply \Cref{lem:whole-path} with the bounds from \Cref{lem:whitened-compatibility}: $r_D\le d^2$, $r_C\le d^2+1$, and $K_D,K_C\le d^2$. These give $R_j\le d^4(d^2+1)^j$ and $M_j\le d^{2j+4}$. Their logarithms are $O((j+1)\log(2d))$, while $\log V_j=O((j+1)\log\log(j+3))$. Substituting these estimates into \Cref{eq:whole-path-rank} and increasing the absolute constant $C_0$ absorbs the ceiling and all remaining fixed factors.
\end{proof}

\begin{proof}[Proof of \Cref{lem:chebyshev}]
The reciprocal approximation uses the standard Chebyshev construction \cite{saad,vishnoi}. Among polynomials of degree at most $k$ that equal one at zero, $r_k$ minimizes the largest absolute value on $[a,b]$. Its degree bound follows from the growth of $T_k$ outside $[-1,1]$. We prove the coefficient bound separately because the rank argument needs the coefficients after shifting the variable by one.

The affine map $t(\lambda)=(b+a-2\lambda)/(b-a)$ sends $[a,b]$ to $[1,-1]$. Put $\kappa=b/a=19d^2+1$ and $c=(b+a)/(b-a)=(\kappa+1)/(\kappa-1)>1$. Since $r_k(0)=1$, the numerator $1-r_k(\lambda)$ is divisible by $\lambda$. Thus $p_m$ is a polynomial of degree at most $k-1=m$, and $1-\lambda p_m(\lambda)=r_k(\lambda)$. On $[a,b]$, $\abs{r_k(\lambda)}\le1/T_k(c)$. The growth estimate for Chebyshev polynomials gives $1/T_k(c)\le\delta$ once $k\ge(\sqrt\kappa/2)\log(2/\delta)$. This proves the approximation bound.

It remains to bound the coefficients after shifting by one. For a polynomial $H$, let $\norm H_{\ell_1}$ denote the sum of the absolute values of its coefficients. The affine polynomial $t(1+z)=(1-2/d^2)-\bigl(2/(0.95d^2)\bigr)z$ has coefficient norm less than four for every $d\ge1$. Let $H_\ell(z)=T_\ell(t(1+z))$. The recurrence $T_{\ell+1}=2tT_\ell-T_{\ell-1}$ gives $\norm{H_{\ell+1}}_{\ell_1}\le8\norm{H_\ell}_{\ell_1}+\norm{H_{\ell-1}}_{\ell_1}$. Starting from $H_0=1$ and $\norm{H_1}_{\ell_1}<4$, induction gives $\norm{H_k}_{\ell_1}\le9^k$.

Since $T_k(c)\ge1$, the polynomial $R(z)=1-r_k(1+z)$ has coefficient norm at most $1+9^k$. Also $R(-1)=0$, so $1+z$ divides $R(z)$ and $R(z)=(1+z)P(z)$ with $P(z)=p_m(1+z)$. If $R(z)=\sum_{\ell=0}^k a_\ell z^\ell$, the quotient coefficients satisfy $c_j=\sum_{\ell=0}^j(-1)^{j-\ell}a_\ell$. Therefore $\sum_{j=0}^{k-1}\abs{c_j}\le k\sum_{\ell=0}^k\abs{a_\ell}\le k(1+9^k)=\exp(O(k))$. Since $k=m+1$, this proves the coefficient estimate.
\end{proof}

\begin{proof}[Proof of \Cref{prop:chi-rank}]
Choose $\delta=g/8$ in \Cref{lem:chebyshev}; the resulting degree satisfies $m=O(d)$. We first bound the inversion error. Since the spectrum of $A$ lies in $[a,b]$, \Cref{eq:preconditioned-chi} gives
\[
 \abs{\inner{u}{A^{-1}u}_{\HS}-\inner{u}{p_m(A)u}_{\HS}}
 \le\delta\inner{u}{A^{-1}u}_{\HS}
 \le\delta.
\]
With $A=I_{\HS}+B$, \Cref{lem:moment-identity} gives $\inner{u}{p_m(A)u}_{\HS}=\sum_{j=0}^m c_jq_j$. Because $\chi_s=2\inner{u}{A^{-1}u}_{\HS}$, polynomial inversion contributes at most $2\delta=g/4$ to the error in $\chi_s$.

We next bound the accumulated moment-approximation error. Set
\[
 \eta=
 \frac{g}{
 8\max\left\{1,\sum_{j=0}^m\abs{c_j}\right\}}.
\]
For each $j$, choose a real matrix $\widetilde q_j$ with $\norm{q_j-\widetilde q_j}_{\max}\le\eta$ and rank bounded by \Cref{cor:moment-resource}. The matrix $\widetilde\chi=2\sum_{j=0}^m c_j\widetilde q_j$ satisfies
\[
 \norm{[\chi_s(x,y)]-\widetilde\chi}_{\max}
 \le
 2\delta+2\eta\sum_{j=0}^m\abs{c_j}
 \le
 \frac g4+\frac g4
 =
 \frac g2.
\]
Finally, \Cref{lem:chebyshev} gives $\sum_{j=0}^m\abs{c_j}\le\exp(O(m))$, and hence $\eta^{-2}\le C_g\exp(O(m))$, where $C_g$ depends only on the fixed constant $g$. Apply \Cref{cor:moment-resource} to every $j\le m=O(d)$ and add the ranks of the $m+1$ matrices. This gives $\rank\widetilde\chi\le\exp(C_1d\log(2d))\log(2N)$ for an absolute constant $C_1$.
\end{proof}

\section{Support rank and sign rank}\label{app:support-rank}

The main text measures the shared resource by its entanglement of formation. This appendix records the companion dimension-based results. Let $d=\min\{\rank\rho_L,\rank\rho_R\}$ be the smaller local rank of the shared state; we call $d$ the support rank of the resource. We charge the support-dimension cost
\[
 E_{\dim}(\rho_{LR})=\log_2\min\{\rank\rho_L,\rank\rho_R\}.
\]
Taking the minimum excludes dimensions available to only one party; the next lemma gives the precise reduction behind this choice. For a pure state, $E_{\dim}$ is log Schmidt rank, and $t$ EPR pairs have cost $t$. For a mixed state, however, $E_{\dim}$ may charge shared classical correlations and is not an entanglement monotone. Since $E_F(\rho_{LR})\le E_{\dim}(\rho_{LR})$, an $E_F$ lower bound rules out large separable resources that a support-dimension lower bound alone would not exclude. In the other direction, the results below make the raw support rank nearly linear in $n$, which the entanglement bound does not give.

\begin{lemma}[Rank-preserving purification]\label{lem:purification}
Every protocol with shared state $\rho_{LR}$ and $d=\min\{\rank\rho_L,\rank\rho_R\}$ has an equivalent protocol with a pure shared state of Schmidt rank $d$ on two $d$-dimensional local registers. Its two local states have rank $d$, and every channel $\mathcal N^{x,y}$ is reproduced exactly after local systems are discarded. The initial channels can also be taken to be isometries, with the extra output systems kept on their own sides.
\end{lemma}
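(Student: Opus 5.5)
We prove the lemma by purifying the shared state and handing the purifying system to the party with the larger local rank. Then the Schmidt rank across the cut is the smaller local rank. By symmetry of the two roles, assume $d=\rank\rho_L\le\rank\rho_R$; the other case is identical with Alice and Bob exchanged.

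\emph{Purification.} Let $\lvert\Psi\rangle_{LRP}$ be any purification of $\rho_{LR}$, and give $P$ to Bob. Across the cut $L\,|\,RP$, the reduced state on $L$ is $\rho_L$. Hence $\lvert\Psi\rangle$ has Schmidt rank exactly $\rank\rho_L=d$, and it has a Schmidt decomposition
\[
 \lvert\Psi\rangle=\sum_{r=1}^d\sqrt{\lambda_r}\,\lvert e_r\rangle_L\lvert f_r\rangle_{RP},
 \qquad \lambda_r>0.
\]
Let $L'=\operatorname{span}\{\lvert e_r\rangle\}=\supp\rho_L$ and $R'=\operatorname{span}\{\lvert f_r\rangle\}=\supp\Psi_{RP}$. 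Both are $d$-dimensional, and $\lvert\Psi\rangle\in L'\otimes R'$. The new pure resource is $\lvert\Psi\rangle$ regarded as a state on $L'\otimes R'$, with Schmidt rank $d$ and full-rank local states of rank $d$.

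\emph{New local operations.} Write $J_L:L'\to L$ and $J_{R}:R'\to R\otimes P$ for the inclusion isometries. Alice's new first channel is $\mathcal A_x\circ(\id_Q\otimes J_L(\cdot)J_L^*)$. Bob's new first channel is $\mathcal B_y\circ\Tr_P\circ J_R(\cdot)J_R^*$, which discards the purifying system locally. Both are completely positive and trace preserving, since conjugation by an isometry is a channel.

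Because $\Tr_P\ketbra\Psi\Psi=\rho_{LR}$ and $P$ is never touched by Alice, we get for every input state on $Q$ (including halves of entangled states with a reference)
\[
 (\mathcal A_x\otimes\mathcal B_y\circ\Tr_P)(\,\cdot\,\otimes\ketbra\Psi\Psi)
 =(\mathcal A_x\otimes\mathcal B_y)(\,\cdot\,\otimes\rho_{LR}).
\]
So $\mathcal N^{x,y}$ is reproduced exactly, and the same recovery channels $\mathcal D_A^{x,y},\mathcal D_B^{x,y}$ yield identical recovered channels and identical errors.

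\emph{Isometric form.} Finally, replace each first channel by a Stinespring isometry and let the party who applied it keep the environment. When the kept systems are discarded, the channel $\mathcal N^{x,y}$ to $M_AM_B$ is unchanged. The recovery channels are precomposed with the partial trace over these kept systems, so correctness is unaffected.

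\emph{Main obstacle.} The only real issue is that the Schmidt rank must equal $d$ and not $\max\{\rank\rho_L,\rank\rho_R\}$. This is exactly why the purifying system goes to the party with the larger rank: the cut then isolates the smaller-rank side, whose reduced state is unchanged by purification. Everything else is bookkeeping: restricting to supports, and noting that tracing out $P$ is a local operation of Bob's.
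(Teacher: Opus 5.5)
Your proposal is correct and follows essentially the same route as the paper: purify, give the purifying system to the party with the larger local rank so that the Schmidt rank across the cut equals the smaller local rank $d$, and use only the $d$-dimensional Schmidt-support state as the new resource. The receiving party then re-embeds it by the inclusion isometry (the paper's $V:K\to R\otimes E$), discards the purifier locally, and runs the original channel, after which the first channels are Stinespring-dilated with the environments kept locally.
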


Purification and isometric dilation are standard tools; the point of the lemma is that these transformations preserve the smaller local rank $d$ exactly. The reduction is a local simulation, not an assumption that the parties are supplied with an additional purification of $\rho_{LR}$: neither step adds a shared resource or changes any induced channel. The lemma places any protocol into the normal form of \Cref{sec:model}, with the support rank in the role of the Schmidt rank $d$.

\begin{proof}
Assume first that $\rank\rho_L=d\le\rank\rho_R$. Choose a mathematical purification $\lvert\Psi\rangle_{LRE}$ and view it across the bipartition $L:(RE)$. The register $E$ is used only in this construction and is not supplied as an additional resource. The Schmidt rank across this bipartition is $d$ because the local state on $L$ is $\rho_L$.

Let $K\subseteq R\otimes E$ be the $d$-dimensional span of the Schmidt vectors on the second side. There are an isometry $V:K\to R\otimes E$ and a pure state $\lvert\psi\rangle_{LK}$ such that $(I_L\otimes V)\lvert\psi\rangle=\lvert\Psi\rangle$. Use only $\lvert\psi\rangle_{LK}$ as the new shared resource. Bob applies $V$ to $K$, discards $E$, and then applies his original channel $\mathcal B_y$ to $R$. After $E$ is discarded, the resulting state on $LR$ is exactly $\rho_{LR}$. All subsequent channels are therefore unchanged, and both local states of $\lvert\psi\rangle$ have rank $d$.

If instead $\rank\rho_R=d\le\rank\rho_L$, apply the same construction across the bipartition $(LE):R$. The $d$-dimensional Schmidt-support register on the first side is held by Alice. She reconstructs $L\otimes E$, discards $E$, and then applies her original channel $\mathcal A_x$ to $QL$. This again reproduces $\rho_{LR}$ exactly and uses a pure shared state whose two local states have rank $d$.

Let $L'$ and $R'$ be the supports of the reduced states of the new pure resource on Alice's and Bob's sides. Both spaces have dimension $d$. Restricting the local channels to $L'$ and $R'$ does not change the protocol because the shared state has no component outside these supports. From now on, relabel $L'$ and $R'$ as $L$ and $R$, so $I_L$ and $I_R$ denote the identities on the Schmidt supports.

By Stinespring's theorem, each initial channel can be replaced by an isometry with an additional local output. Alice and Bob keep these extra outputs until recovery. The recovery channels act only on the original outputs, so the protocol is reproduced exactly. Since all outputs of the initial isometries are kept, the map from $Q$ to the enlarged systems $M_AM_B$ is an isometry for every input pair. The shared state is unchanged, so the smaller local rank $d$ is unchanged.
\end{proof}

\begin{samepage}
\begin{theorem}[Support rank and dimension]\label{thm:main}
There is an absolute constant $c>0$ with the following property. Let $f_n(x,y)=\bigoplus_{i=1}^n x_iy_i$ be the inner product modulo $2$ of $x,y\in\{0,1\}^n$. For all sufficiently large $n$ and every protocol for $f_n$ in the standard $f$-routing model of \Cref{sec:model} with worst-case error at most $0.09$, if $d=\min\{\rank\rho_L,\rank\rho_R\}$, then $d\log_2(2d)\ge cn$. Consequently, the support-dimension cost satisfies $E_{\dim}(\rho_{LR})\ge \log_2 n-\log_2\log_2 n-O(1)$.
\end{theorem}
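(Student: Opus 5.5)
The plan is to reduce to the pure-state normal form and compare the rank bound of \Cref{prop:chi-rank} with a sign-rank-type lower bound for inner product. First, apply \Cref{lem:purification} to replace the protocol by an equivalent one whose shared state is pure of Schmidt rank $d=\min\{\rank\rho_L,\rank\rho_R\}$, with isometric initial channels. The purified protocol induces exactly the same channels $\mathcal N^{x,y}$, so it is still $0.09$-correct in worst case. \Cref{cor:regularized-gap} then applies on every input pair: $\chi_s(x,y)\le\theta-g$ when $f_n(x,y)=0$ and $\chi_s(x,y)\ge\theta+g$ when $f_n(x,y)=1$. \Cref{prop:chi-rank} supplies a real matrix $\widetilde\chi$ with $\norm{[\chi_s]-\widetilde\chi}_{\max}\le g/2$ and
\[
 \rank\widetilde\chi\le\exp\bigl(C_1d\log(2d)\bigr)\log(2N).
\]

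Next, set $W=\widetilde\chi-\theta J$, where $J$ is the all-ones matrix. Then $\rank W\le\rank\widetilde\chi+1$. By construction $S_{xy}W_{xy}\ge g/2$ for every entry, where $S_{xy}=2f_n(x,y)-1$. The entries also satisfy $\abs{W_{xy}}\le 2+\theta+g/2=O(1)$, because $0\le\chi_s\le2$ by \Cref{lem:preconditioned-form}. So $W$ sign-represents $S$ with constant margin and bounded entries. This is stronger than sign rank, and Forster-type margin arguments apply; the simplest route avoids Forster entirely and uses the duality from the overview of \Cref{lem:average-entropy}. The Hadamard-type matrix $S$ satisfies $SS^{\mathsf T}=NI$, so $\norm S_{\mathrm{op}}=\sqrt N$. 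This gives
\[
 \frac g2N^2\le\sum_{x,y}S_{xy}W_{xy}=\Tr(S^{\mathsf T}W)\le\norm S_{\mathrm{op}}\norm W_{1}\le\sqrt N\sqrt{\rank W}\,\norm W_{\HS}\le\sqrt N\sqrt{\rank W}\cdot O(N).
\]
Hence $\rank W=\Omega(N)$. If $f_n(0,y)=0$ for all $y$ causes trouble (it does not here, since worst-case error covers every pair), rows with $x\ne0$ could be dropped anyway.

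Finally, combine the two bounds:
\[
 \Omega(2^n)\le\exp\bigl(C_1d\log(2d)\bigr)(n+1)+1.
\]
Taking logarithms gives $C_1d\log(2d)\ge n\log2-\log(n+1)-O(1)\ge cn$ for large $n$, which is the claim $d\log_2(2d)\ge cn$. For the dimension cost, write $d\le 2^{E}$ with $E=E_{\dim}$. Then $2^E(E+1)\ge cn$, so $E+\log_2(E+1)\ge\log_2n-O(1)$. Since $E\le\log_2 n+O(1)$ in the relevant range (otherwise the claim is trivial), $\log_2(E+1)\le\log_2\log_2n+O(1)$, which yields $E_{\dim}\ge\log_2n-\log_2\log_2n-O(1)$.

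The main obstacle is the middle step: the approximant carries only an entrywise margin, not exact zeros, so the rank lower bound must tolerate bounded perturbations. The trace/nuclear-norm duality above is what handles this. The one point to verify carefully is the uniform entry bound on $W$, since the nuclear-norm estimate needs $\norm W_{\HS}=O(N)$. This follows from $0\le\chi_s\le 2$ together with the $g/2$ approximation, so the whole argument reduces to bookkeeping once \Cref{prop:chi-rank} is in hand.
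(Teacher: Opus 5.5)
Your proposal is correct, but it departs from the paper's proof at the key lower-bound step. The paper also purifies via \Cref{lem:purification}, applies \Cref{prop:chi-rank}, and subtracts $\theta J$. It then keeps only the sign pattern of $\widetilde\chi-\theta J$, packaged as the function-independent reduction \Cref{thm:protocol-signrank}. The lower bound comes from Forster's theorem, $\operatorname{signrank}(S_{f_n})\ge N/\norm{S_{f_n}}_{\mathrm{op}}=\sqrt N$ (\Cref{lem:ip-signrank}), which gives $2^{n/2}\le\exp(C_1d\log(2d))\log(2N)+1$.

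You instead keep two quantitative facts: the margin $S_{xy}W_{xy}\ge g/2$, and the uniform entry bound that follows from $0\le\chi_s\le 2$. You then run the operator--nuclear duality and Cauchy--Schwarz argument that the paper uses only in the average-case \Cref{lem:average-entropy}. This yields $\rank W\ge (g/2B)^2N$, where $B$ is your entry bound.

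Your route buys two things. It is more elementary, since it needs no Forster. It also gives a linear rather than square-root rank lower bound for inner product, which improves the constant $c$ by roughly a factor of two; the final form $d\log_2(2d)\ge cn$ is unchanged. You are also right that the row $x=0$ causes no trouble here, because the full Walsh--Hadamard sign matrix already satisfies $SS^{\mathsf T}=NI$.

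The paper's route buys modularity. \Cref{thm:protocol-signrank} uses only the signs, needs neither the size of the margin nor the entry bound, and applies verbatim to any $f$ of large sign rank. Your inequality, by contrast, is driven by $\norm{S}_{\mathrm{op}}$ being small. Since your $W$ is simultaneously a sign representation and a bounded-margin representation, both lower bounds are available to you. Your final conversion to $E_{\dim}$, with the case split on whether $E\le\log_2 n$ (here $d=2^{E}$ exactly), matches the paper's case split on $d\ge n$.
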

\end{samepage}

\begin{definition}[Sign rank]
A sign matrix has entries in $\{-1,+1\}$. Its sign rank is the smallest rank of a real matrix $H$ whose entries have the same signs:
\[
 \operatorname{signrank}(S)
 =\min\{\rank_\R H:S_{x,y}H_{x,y}>0\text{ for all }x,y\}.
\]
For a Boolean function $f$, use the sign convention $S_f(x,y)=2f(x,y)-1$.
\end{definition}

\begin{theorem}[Function-independent reduction to sign rank]\label{thm:protocol-signrank}
There is an absolute constant $C_1<\infty$ with the following property. Let $f:\{0,1\}^n\times\{0,1\}^n\to\{0,1\}$ be a total Boolean function, let $N=2^n$, and set $S_f(x,y)=2f(x,y)-1$. Suppose that $f$ has a protocol in the standard $f$-routing model of \Cref{sec:model} with worst-case error at most $0.09$ in the full diamond norm under the convention of that section. If
\[
 d=\min\{\rank\rho_L,\rank\rho_R\},
\]
then
\[
 \operatorname{signrank}(S_f)
 \le
 \exp\bigl(C_1d\log(2d)\bigr)\log(2N)+1.
\]
\end{theorem}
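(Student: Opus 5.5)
The plan is to combine the rank-preserving purification of \Cref{lem:purification} with the approximate-rank bound of \Cref{prop:chi-rank}, and then turn the resulting low-rank approximant into a sign representation of $S_f$ by subtracting the threshold $\theta$.

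First, apply \Cref{lem:purification} to the given protocol. This produces an equivalent protocol in the normal form of \Cref{sec:model}: a pure shared state of Schmidt rank $d=\min\{\rank\rho_L,\rank\rho_R\}$ on $d$-dimensional registers, with isometric initial channels. Every induced channel $\mathcal N^{x,y}$ is reproduced exactly, so the worst-case error is still at most $0.09$ on both routing cases. The construction of \Cref{sec:gap,sec:rank} then applies verbatim to this normal form. In particular, \Cref{cor:regularized-gap} shows that the matrix $[\chi_s(x,y)]_{x,y}$ satisfies
\[
 \chi_s(x,y)\le\theta-g \quad\text{when } f(x,y)=0,
 \qquad
 \chi_s(x,y)\ge\theta+g \quad\text{when } f(x,y)=1.
\]
This step uses only \Cref{prop:trace-gap}, \Cref{lem:sld-comparison} and \Cref{lem:smoothing}, none of which depend on the choice of $f$. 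The argument is therefore function-independent, as the statement requires.

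Second, invoke \Cref{prop:chi-rank}. It gives a real matrix $\widetilde\chi$ with
\[
 \norm{[\chi_s(x,y)]-\widetilde\chi}_{\max}\le g/2,
 \qquad
 \rank\widetilde\chi\le\exp\bigl(C_1d\log(2d)\bigr)\log(2N).
\]
Set $H=\widetilde\chi-\theta J$, where $J$ is the all-ones $N\times N$ matrix. Then $H_{x,y}\ge g/2>0$ whenever $f(x,y)=1$, and $H_{x,y}\le-g/2<0$ whenever $f(x,y)=0$. Hence $S_f(x,y)H_{x,y}>0$ for every input pair. Since $\rank J=1$, subadditivity of rank gives $\rank H\le\rank\widetilde\chi+1$, which is exactly the stated bound on $\operatorname{signrank}(S_f)$, with the same absolute constant $C_1$ as in \Cref{prop:chi-rank}.

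All the substantive work sits in \Cref{prop:chi-rank} and in the gap preservation of \Cref{cor:regularized-gap}, both of which are assumed here. The step I would check most carefully is the first one: the normal form must be reached without changing $d$ or any induced channel. In particular, Bob's system $M_B$ must still split as $X\otimes Y$ with Alice-side channels $\mathcal E_x^q$ acting on a $d$-dimensional $L$. Otherwise the product expansion behind \Cref{eq:compatible-output}, and with it the bound $r_C\le d^2+1$ of \Cref{lem:whitened-compatibility}, would involve the larger local rank rather than $d$. \Cref{lem:purification} handles exactly this, whichever side attains the minimum rank, since in both cases the resulting pure state has both Schmidt supports of dimension $d$. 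Once that is in place, the rest of the proof is the threshold subtraction above.
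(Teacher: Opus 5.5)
Your proposal is correct and matches the paper's proof. It passes to the normal form via \Cref{lem:purification}, takes the approximant from \Cref{prop:chi-rank}, and subtracts $\theta J$ using the gap of \Cref{cor:regularized-gap}, with the rank-one term accounting for the $+1$.
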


The approximate-rank construction of \Cref{sec:rank} gives a low-rank matrix whose entries lie below a fixed threshold when $f(x,y)=0$ and above it when $f(x,y)=1$. Subtracting the threshold produces a matrix with sign pattern $2f-1$. This proves the function-independent reduction stated in \Cref{thm:protocol-signrank}. We give that proof first and then apply the result to inner product modulo $2$, whose sign rank is exponential in $n$. A constant function has sign rank one, so a growing lower bound cannot hold for every Boolean function.

\begin{proof}[Proof of \Cref{thm:protocol-signrank}]
By \Cref{lem:purification}, the protocol may be taken in the normal form of \Cref{sec:model}, with $d$ the Schmidt rank of its pure shared state. \Cref{prop:chi-rank} then gives a real matrix $\widetilde\chi$ such that
\[
 \norm{\widetilde\chi-[\chi_s(x,y)]_{x,y}}_{\max}\le\frac g2,
 \qquad
 \rank\widetilde\chi\le\exp\bigl(C_1d\log(2d)\bigr)\log(2N).
\]
If $f(x,y)=0$, then \Cref{cor:regularized-gap} gives $\widetilde\chi_{x,y}\le\theta-g/2<\theta$. If $f(x,y)=1$, it gives $\widetilde\chi_{x,y}\ge\theta+g/2>\theta$. Let $J$ be the all-ones matrix. The matrix $\widetilde\chi-\theta J$ therefore has the same entrywise signs as $S_f$. Since $J$ has rank one,
\[
 \operatorname{signrank}(S_f)
 \le\rank(\widetilde\chi-\theta J)
 \le\rank\widetilde\chi+1,
\]
which proves the claim.
\end{proof}

\begin{lemma}[Sign rank of inner product modulo 2]\label{lem:ip-signrank}
Let $N=2^n$ and $f_n(x,y)=\operatorname{IP}_n(x,y)=\bigoplus_{i=1}^n x_iy_i$. Then
\[
 \operatorname{signrank}(S_{f_n})\ge\sqrt N=2^{n/2}.
\]
\end{lemma}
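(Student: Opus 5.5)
The plan is Forster's argument: a spectral-norm bound on the sign matrix forces its sign rank to be large. Write $S=S_{f_n}$, an $N\times N$ matrix with entries $S_{x,y}=(-1)^{1+\langle x,y\rangle}$. Up to a global sign, this is the Sylvester--Hadamard matrix. Its rows are orthogonal, so $SS^{\mathsf T}=NI$ and the operator norm is $\norm{S}_{\mathrm{op}}=\sqrt N$. Forster's theorem states that every $N\times N$ sign matrix satisfies $\operatorname{signrank}(S)\ge N/\norm{S}_{\mathrm{op}}$. Substituting the norm gives $\operatorname{signrank}(S)\ge\sqrt N=2^{n/2}$, which is the claim. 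If citing Forster's theorem is acceptable, the proof consists of this norm computation and the citation.

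For a self-contained proof, I would reproduce Forster's argument in three steps. First, suppose $H$ has rank $k$ and $S_{x,y}H_{x,y}>0$ for all $x,y$. Write $H_{x,y}=\langle u_x,v_y\rangle$ with $u_x,v_y\in\R^k$. Since the sign constraints are strict inequalities, $H$ can be perturbed slightly so that the vectors are in general position.

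Second, I would apply Forster's balancing lemma. It supplies an invertible linear map $A$ such that the normalized vectors $\hat u_x=Au_x/\norm{Au_x}$ satisfy $\sum_x\hat u_x\hat u_x^{\mathsf T}=(N/k)I_k$. Replacing $u_x$ by $\hat u_x$ and $v_y$ by the normalization of $A^{-\mathsf T}v_y$ rescales each entry of $H$ by a positive factor, so all signs are preserved.

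Third, I would bound the quantity $\sum_y\sum_x\abs{\langle\hat u_x,\hat v_y\rangle}$ from above and below.
\begin{itemize}
\item Lower bound: because each term has absolute value at most one, $\abs{t}\ge t^2$, so the sum is at least $\sum_y\sum_x\langle\hat u_x,\hat v_y\rangle^2=\sum_y\hat v_y^{\mathsf T}(N/k)I_k\hat v_y=N^2/k$.
\item Upper bound: by the sign agreement, the sum equals $\sum_y\bigl\langle\sum_xS_{x,y}\hat u_x,\hat v_y\bigr\rangle\le\sum_y\norm{\sum_xS_{x,y}\hat u_x}$. Applying Cauchy--Schwarz over $y$ and then the operator norm of $S$, column by column of $\hat U=[\hat u_x]$, gives at most $\sqrt N\cdot\bigl(\sum_y\norm{\sum_xS_{x,y}\hat u_x}^2\bigr)^{1/2}\le\sqrt N\cdot\norm{S}_{\mathrm{op}}\sqrt N$.
\end{itemize}
Comparing the two bounds gives $N^2/k\le N\norm{S}_{\mathrm{op}}$, that is, $k\ge N/\norm{S}_{\mathrm{op}}=\sqrt N$.

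The only substantial step is the existence of the balancing map $A$, whose proof uses a compactness argument over general-position configurations. Since this lemma is classical and the paper cites results in this style, I would cite Forster for it rather than reprove it. The remaining steps are the short calculations above.
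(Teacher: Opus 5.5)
Your proposal is correct and matches the paper's proof: the paper likewise observes that $S_{f_n}=-W_N$ has orthogonal rows, so $\norm{S_{f_n}}_{\mathrm{op}}=\sqrt N$, and then cites Forster's bound $\operatorname{signrank}(S)\ge N/\norm{S}_{\mathrm{op}}$. Your optional reconstruction of Forster's argument (balancing, then comparing $N^2/k$ with $N\norm{S}_{\mathrm{op}}$) is also correct, though the paper does not include it and relies on the citation alone.
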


\begin{proof}
Let $W_N(x,y)=(-1)^{\operatorname{IP}_n(x,y)}$ be the Walsh--Hadamard matrix. The sign matrix of $f_n$ is $S_{f_n}=-W_N$, and global negation does not change sign rank.

The rows of $W_N$ are pairwise orthogonal, and each has squared Euclidean norm $N$. Hence $W_NW_N^{\mathsf T}=NI$ and $\norm{W_N}_{\mathrm{op}}=\sqrt N$. The bound in~\cite{forster} gives $\operatorname{signrank}(S)\ge N/\norm{S}_{\mathrm{op}}$ for every $N\times N$ sign matrix $S$. Applying it to $W_N$ proves the claim.
\end{proof}

\begin{proof}[Proof of \Cref{thm:main}]
Fix a protocol for $f_n$ with error at most $0.09$, and let $d$ be the smaller local rank of its shared state. By \Cref{thm:protocol-signrank} and \Cref{lem:ip-signrank},
\[
 2^{n/2}
 \le\operatorname{signrank}(S_{f_n})
 \le\exp(C_1d\log(2d))\log(2N)+1.
\]
For sufficiently large $n$, $2^{n/2}\ge2$. The preceding display then implies
\[
 2^{n/2-1}
 \le\exp(C_1d\log(2d))\log(2^{n+1}).
\]
Taking natural logarithms gives
\[
 \left(\frac n2-1\right)\log2
 \le C_1d\log(2d)+\log\bigl((n+1)\log2\bigr).
\]
The final term is $o(n)$ and can be absorbed into the linear term for all sufficiently large $n$. After adjusting the absolute constant, we obtain $d\log_2(2d)\ge cn$.

We finally convert this bound into a bound on the support-dimension cost. If $d\ge n$, then $E_{\dim}=\log_2d\ge\log_2n$. If $d<n$ and $n\ge2$, then $\log_2(2d)\le1+\log_2n\le2\log_2n$, so the preceding bound gives $d\ge cn/(2\log_2n)$. Taking base-two logarithms yields $E_{\dim}=\log_2d\ge\log_2n-\log_2\log_2n-O(1)$.
\end{proof}

\Cref{thm:main} makes the raw support rank $d$ nearly linear in $n$, but its charged cost $\log_2d$ grows only logarithmically. A polynomial lower bound on the charged dimension cost would require $\log_2d\ge n^{\Omega(1)}$, not merely polynomial growth of $d$.

The support-rank theorem connects directly to dimension-based resource conventions. For a pure resource, $E_{\dim}$ is log Schmidt rank; for an EPR resource, it is the number of EPR pairs. Under the log-dimension accounting used in the equivalence between $f$-routing and conditional disclosure of quantum secrets (CDQS)~\cite{relating,cdqs}, \Cref{thm:main} also gives a robust $\Omega(\log n)$ dimension lower bound. Together with the reductions from conditional disclosure of secrets (CDS) to CDQS and from CDQS to $f$-routing, this yields an $\Omega(\log n)$ lower bound on CDS shared randomness for inner product, up to the constant-error translations in the reductions~\cite{relating}. At the classical level this bound is not new: logarithmic lower bounds for inner-product CDS with constant error follow from the communication-complexity method of Gay, Kerenidis, and Wee~\cite{gkw} in the error-tolerant form of Applebaum and Vasudevan~\cite{av}, and robust logarithmic randomness bounds are likewise known~\cite{maynotes}. The observation above thus recovers the known classical level by a quantum route. Separations between the classical and quantum versions of CDS are studied in~\cite{comparecds}. This implication uses dimension accounting. For a mixed CDQS resource, the CDQS-to-$f$-routing construction may replace the resource by a purification and therefore need not preserve $E_F$~\cite{complexity}.

\section{Ensemble components and entropy bound}\label{app:formation}

This appendix proves \Cref{lem:good-components} and \Cref{lem:average-entropy}, expanding the overview in \Cref{sec:formation}. The argument truncates the Schmidt decomposition, derives input-dependent routing-gap bounds for the truncated protocol, averages them into a correlation bound, and plays the correlation against the approximate-rank bound of \Cref{prop:chi-rank}, which applies because it uses no correctness assumption.

\begin{proof}[Proof of \Cref{lem:good-components}]
Let $\mathcal T_z$ be the selected recovered channel when the protocol uses $\rho_{LR}$, and let $\mathcal T_{i,z}$ be the corresponding channel for the pure component $\psi_i$. Linearity in the resource state gives
\[
 \mathcal T_z=\sum_i p_i\mathcal T_{i,z}.
\]
Therefore
\[
 \sum_i p_i\delta_{i,z}
 =1-\Tr\left[
 \Phi(\id_{\bar Q}\otimes\mathcal T_z)(\Phi)
 \right].
\]
For density operators $\omega,\sigma$ and an effect $0\preceq P\preceq I$, tracelessness of $\omega-\sigma$ gives $\abs{\Tr[P(\omega-\sigma)]}\le\norm{\omega-\sigma}_1/2$. The raw diamond-error bound thus implies, for every $z$,
\[
 \sum_i p_i\delta_{i,z}
 \le\frac12
 \norm{(\id_{\bar Q}\otimes\mathcal T_z)(\Phi)-\Phi}_1
 \le0.045.
\]
Averaging over $z$ gives $\sum_i p_i\overline\delta_i\le0.045$. Markov's inequality now yields
\[
 \sum_{i:\,\overline\delta_i\le0.055}p_i
 \ge1-\frac{0.045}{0.055}
 =\frac2{11}.
\]
\end{proof}

\begin{proof}[Proof of \Cref{lem:average-entropy}]
Write $E=S(\psi_L)$ and take a Schmidt decomposition
\[
 \lvert\psi\rangle
 =\sum_r\sqrt{\lambda_r}\,\lvert r\rangle_L\lvert r\rangle_R.
\]
Set $\eta=1/2500$. For a Schmidt index distributed according to $\lambda_r$, the random variable $Z=-\log_2\lambda_r$ has expectation $E$. Hence the set
\[
 A=\left\{r:-\log_2\lambda_r\le\frac E\eta\right\}
\]
has weight $q=\sum_{r\in A}\lambda_r\ge1-\eta$ by Markov's inequality. Every element of $A$ has weight at least $2^{-E/\eta}$, so $\abs A\le2^{E/\eta}$. The normalized truncation
\[
 \lvert\phi\rangle
 =q^{-1/2}\sum_{r\in A}\sqrt{\lambda_r}\,
 \lvert r\rangle_L\lvert r\rangle_R
\]
therefore has Schmidt rank
\begin{equation}\label{eq:entropy-rank}
 d\le2^{E/\eta}.
\end{equation}
Moreover,
\[
 \norm{\ketbra\psi\psi-\ketbra\phi\phi}_1
 =2\sqrt{1-q}
 \le2\sqrt\eta
 =0.04.
\]

Run the same protocol and recovery channels with resource $\phi$. For every input pair, the difference of the resulting channels is a CPTP postprocessing of the map that appends $\ketbra\psi\psi-\ketbra\phi\phi$. Its diamond norm is therefore at most $0.04$. Bell acceptance probabilities change by at most $0.02$, so the selected Bell infidelities $\widetilde\delta_z$ of the truncated protocol satisfy
\begin{equation}\label{eq:truncated-average-error}
 \mathbb E_{z\sim\mu}\widetilde\delta_z
 \le0.055+0.02
 =0.075.
\end{equation}
We now work exclusively with this rank-$d$ protocol and suppress the tildes. The truncated protocol is taken in the pure-state normal form of \Cref{sec:model}, with the $d$-dimensional Schmidt supports of $\phi$ as the local registers; this restriction changes no induced channel. All states, reference product states, SLD statistics, and low-rank approximants below are constructed anew from the truncated protocol.

For each $z$, let $\rho_z^0,\rho_z^1$ be Bob's pre-recovery states for computational-basis inputs $\lvert0\rangle,\lvert1\rangle$, and set
\[
 t_z=\norm{\rho_z^0-\rho_z^1}_1.
\]
We next derive an input-dependent routing gap directly from $\delta_z$. Suppose first that $f_n(z)=0$. Dilate the protocol and Alice's recovery to isometries and retain their environments. On a Bell-state input, let $\lvert\Gamma_z\rangle$ be the resulting global pure state. Its Alice-output and reference marginal has Bell overlap $1-\delta_z$. If $\delta_z=1$, then the desired bound $t_z\le4\sqrt{\delta_z}$ follows from $t_z\le2$. Otherwise, projecting $\lvert\Gamma_z\rangle$ onto the Bell state and normalizing the remaining vector gives a state $\lvert\zeta_z\rangle$ such that
\[
 \norm{
 \ketbra{\Gamma_z}{\Gamma_z}
 -\Phi\otimes\ketbra{\zeta_z}{\zeta_z}
 }_1
 \le2\sqrt{\delta_z}.
\]
Measuring the reference in the computational basis and tracing out every system except Bob's gives
\[
 \frac12\sum_{q=0}^1
 \norm{\rho_z^q-\zeta_{z,B}}_1
 \le2\sqrt{\delta_z}.
\]
Alice's local decoder does not change Bob's marginal. The triangle inequality therefore gives
\begin{equation}\label{eq:bell-gap-zero}
 f_n(z)=0
 \quad\Longrightarrow\quad
 t_z\le4\sqrt{\delta_z}.
\end{equation}

Now suppose that $f_n(z)=1$, and let $\{K_j\}_j$ be Kraus operators for Bob's composite recovered qubit channel. Its Bell entanglement fidelity is
\[
 1-\delta_z=\frac14\sum_j\abs{\Tr K_j}^2
 \le\frac12\sum_{q=0}^1\sum_j
 \abs{\langle q\rvert K_j\lvert q\rangle}^2.
\]
The right-hand side is the average success probability obtained by applying Bob's decoder and measuring the recovered qubit in the computational basis. It is a particular measurement for distinguishing $\rho_z^0$ from $\rho_z^1$, so it is at most $1/2+t_z/4$. Hence
\begin{equation}\label{eq:bell-gap-one}
 f_n(z)=1
 \quad\Longrightarrow\quad
 t_z\ge2-4\delta_z.
\end{equation}

Let $\chi(z)$ and $\chi_s(z)$ be the SLD statistic and its regularization at $s=0.05$ for this truncated protocol. By \Cref{lem:sld-comparison,lem:smoothing},
\[
 \frac{t_z^2}{2}\le\chi(z)\le t_z,
 \qquad
 0.95\chi(z)-0.1\le\chi_s(z)\le0.95\chi(z).
\]
Combining these inequalities with \Cref{eq:bell-gap-zero,eq:bell-gap-one} gives
\begin{align}
 f_n(z)=0&\quad\Longrightarrow\quad
 \chi_s(z)\le3.8\sqrt{\delta_z},
 \label{eq:average-sld-zero}\\
 f_n(z)=1&\quad\Longrightarrow\quad
 \chi_s(z)\ge1.8-7.6\delta_z.
 \label{eq:average-sld-one}
\end{align}
For the second bound we used $\tfrac12(2-4\delta)_+^2\ge2-8\delta$ for $0\le\delta\le1$.

Put
\[
 \nu_0=\mathbb E[\delta_z\mid f_n(z)=0],
 \qquad
 \nu_1=\mathbb E[\delta_z\mid f_n(z)=1].
\]
Balancedness and \Cref{eq:truncated-average-error} imply $\nu_0+\nu_1\le0.15$. With $S_{xy}=2f_n(x,y)-1$, \Cref{eq:average-sld-zero,eq:average-sld-one} and concavity of the square root give
\begin{align*}
 \mathbb E_{z\sim\mu}
 \left[S_{xy}(\chi_s(z)-1)\right]
 &\ge
 \frac12(1-3.8\sqrt{\nu_0})
 +\frac12(0.8-7.6\nu_1)\\
 &=0.9-1.9\sqrt{\nu_0}-3.8\nu_1.
\end{align*}
The minimum over $\nu_0,\nu_1\ge0$ with $\nu_0+\nu_1\le0.15$ occurs at $\nu_0=1/16$ and $\nu_1=7/80$. Equivalently, after setting $\nu_1=0.15-\nu_0$,
\[
 0.9-1.9\sqrt{\nu_0}-3.8\nu_1
 =3.8\left(\sqrt{\nu_0}-\frac14\right)^2+\frac{37}{400}.
\]
Thus
\begin{equation}\label{eq:average-correlation}
 \mathbb E_{z\sim\mu}
 \left[S_{xy}(\chi_s(z)-1)\right]
 \ge\frac{37}{400}.
\end{equation}

The proof of \Cref{prop:chi-rank} is structural and does not use pointwise correctness. It therefore supplies a real $N\times N$ matrix $\widetilde\chi$ such that
\[
 \norm{\widetilde\chi-[\chi_s(x,y)]_{x,y}}_{\max}\le\frac g2,
 \qquad
 \rank\widetilde\chi\le
 R:=\exp(C_1d\log(2d))\log(2N).
\]
Restrict $\widetilde\chi$ to the $M=N-1$ rows with $x\ne0$, let $J$ be the all-ones matrix on these rows, and set $H=\widetilde\chi-J$. By \Cref{eq:average-correlation},
\begin{equation}\label{eq:approximated-correlation}
 \mathbb E_{z\sim\mu}[S_{xy}H_{xy}]
 \ge\frac{37}{400}-\frac g2
 =0.0091525
 =:\gamma>0.
\end{equation}
The comparison and smoothing bounds also give $0\le\chi_s\le1.9$. Hence every entry of $H$ has absolute value at most
\[
 B=1+\frac g2.
\]

The $M\times N$ inner-product sign matrix on the restricted rows satisfies
\[
 SS^{\mathsf T}=NI_M,
 \qquad
 \norm{S}_{\mathrm{op}}=\sqrt N.
\]
If $r=\rank H$, then \Cref{eq:approximated-correlation}, operator--nuclear norm duality, and Cauchy--Schwarz for the singular values imply
\[
 \gamma MN
 \le\sum_{x\ne0,y}S_{xy}H_{xy}
 \le\norm{S}_{\mathrm{op}}\norm{H}_*
 \le\sqrt N\sqrt r\norm{H}_{\mathrm F}
 \le BN\sqrt{rM}.
\]
Consequently,
\[
 r\ge\left(\frac\gamma B\right)^2M=\Omega(N).
\]
On the other hand, $r\le\rank\widetilde\chi+1\le R+1$. Since $N=2^n$, the same logarithmic rearrangement as in the proof of \Cref{thm:main} gives
\[
 d\log(2d)=\Omega(n),
 \qquad
 \log_2d\ge\log_2n-\log_2\log_2n-C_E
\]
after increasing the absolute constant $C_E$ if necessary. Finally, \Cref{eq:entropy-rank} gives $\log_2d\le E/\eta$. Since $\eta=1/2500$, this proves the claimed entropy bound.
\end{proof}

\section*{Disclosure of AI use}

The mathematical content of this paper and its Lean formalization were
developed with substantial use of GPT-5.6 Sol (OpenAI). The accompanying Lean
artifact formalizes every labeled theorem, lemma, proposition, and corollary
of the paper. The artifact is available at
\url{https://github.com/kevinbogner/f-routing-lean}. The wording and
formatting of the manuscript were revised with the assistance of Claude
Fable 5 (Anthropic). The author reviewed and verified the resulting material
and takes full responsibility for the content of the paper and its
formalization.

\providecommand{\etalchar}[1]{$^{#1}$}

\end{document}